\documentclass{article}

\PassOptionsToPackage{numbers, compress}{natbib}

\usepackage[preprint]{neurips_2026}

\usepackage[utf8]{inputenc} 
\usepackage[T1]{fontenc}    
\usepackage{hyperref}       
\usepackage{url}            
\usepackage{booktabs}       
\usepackage{amsfonts}       
\usepackage{nicefrac}       
\usepackage{microtype}      
\usepackage[table,dvipsnames]{xcolor} 
\usepackage{amsmath}
\usepackage{amssymb}
\usepackage{mathtools}
\usepackage{algorithm}
\usepackage{algorithmic}
\newcommand{\gray}[1]{\textcolor{gray}{#1}}
\renewcommand{\algorithmiccomment}[1]{\hfill$\triangleright$ #1}
\newcommand{\Comment}[1]{\algorithmiccomment{#1}}

\usepackage{enumitem}

\usepackage{float}          
\usepackage{graphicx}       
\usepackage{wrapfig}        
\usepackage{subfig}         
\usepackage{multirow}       
\usepackage{makecell}       
\usepackage{amsthm}         
\usepackage{mathrsfs}       
\usepackage{pifont}         
\usepackage{etoc}           

\theoremstyle{plain}
\newtheorem{theorem}{Theorem}[section]
\newtheorem{proposition}[theorem]{Proposition}

\theoremstyle{definition}

\newtheorem{assumption}[theorem]{Assumption}

\theoremstyle{remark}

\newcommand{\lr}{\eta_{\mathrm{lr}}}
\newcommand{\venue}[1]{{\small\textcolor{gray!70!black}{[#1]}}}
\newcommand{\up}[1]{{\scriptsize\textcolor{ForestGreen}{($\uparrow$#1)}}}
\newcommand{\down}[1]{{\scriptsize\textcolor{red}{($\downarrow$#1)}}}
\newcommand{\gap}[1]{{\scriptsize\textcolor{gray!70!black}{(\phantom{$\uparrow$}#1)}}}
\newcommand{\best}[1]{\textcolor{red}{\textbf{\underline{#1}}}}
\newcommand{\second}[1]{\textcolor{blue}{#1}}

\title{\texttt{EASE}: Federated Multimodal Unlearning via Entanglement-Aware Anchor Closure}

\author{
Zihao Ding \quad Beining Wu \quad Jun Huang \\
Department of Electrical Engineering and Computer Science \\
South Dakota State University \\
Brookings, SD 57007, USA \\
\texttt{\{Zihao.Ding, Wu.Beining\}@jacks.sdstate.edu} \\
\texttt{Jun.Huang@sdstate.edu}
}

\begin{document}

\maketitle


\begin{abstract}
Federated Multimodal Learning (FML) trains multimodal models across decentralized clients while keeping their image-text pairs private. However, joint embedding training entangles forgotten knowledge across both modalities and client gradient subspaces, hindering federated unlearning. Previous federated unlearning approaches neither sever the cross-modal reconstruction channel mediated by bilinear coupling nor separate forget-exclusive update directions from those shared with retained clients. We identify an Anchor Principle for federated multimodal contrastive unlearning: forgotten alignments persist through three residual anchors arising from bilinear cross-modal coupling, principal-angle subspace entanglement, and continued federated updates. At the modality level, we show that bilateral displacement of both visual and language branches closes the cross-modal reconstruction channel. Correspondingly, our method addresses subspace entanglement through Cosine--Sine decomposition of client-update subspaces, isolating forget-exclusive directions from retain support. Moreover, we propose a direction-selective Forget Lock that bounds residual drift across rounds. Combining these strategies, we present \underline{\texttt{EASE}}, an \textbf{E}ntanglement-\textbf{A}ware \textbf{S}ubspace \textbf{E}xcision framework that closes all three anchor channels under a unified design. \underline{\texttt{EASE}} demonstrates consistent superiority across multiple datasets and unlearning scenarios, for instance, matching the retrain reference to within $0.2$ and $4.2$ R@1 points on the forget and retain sides under client unlearning on Flickr30K with CLIP-B/32.
\end{abstract}

\etocdepthtag.toc{mainbody}
\section{Introduction}
\label{sec:introduction}

Federated Learning (FL)~\cite{McMahan2017AISTATS, Li2020SPM,Wu2026COMST,Wu2025ToN,DingICNC2025,Wu2026ARXIV,Pudasaini2026HPSR} allows multiple clients to jointly train a shared model while keeping their raw data private, establishing itself as a key paradigm in distributed machine learning~\cite{Kairouz2021FNT, Yang2019TIST,Wu2026TNSE}. Recent advances integrate FL with multimodal models through parameter-efficient Low-Rank Adaptation (LoRA)~\cite{Hu2022ICLR} adapters, forming Federated Multimodal Learning (FML)~\cite{Chen2024AAAI, Yu2023ICLR,Wu2025MNET,Ding2025IPCCC}. The server never observes raw image-text pairs; it only coordinates client updates and later continues FedAvg with the remaining clients. When deployed in regulated domains, FML inherits the right to be forgotten~\cite{Voigt2017GDPR}: once a client withdraws, a target concept becomes obsolete, or harmful content surfaces, the system must approximate the retain-only retrain model without centralizing the removed pairs, a task referred to as Federated Unlearning~\cite{Liu2021IWQOS, Halimi2022ARXIV, Wang2022WWW,Ding2026ICDCS,Xing2026ACR,Wu2025WASA,Wu2026ARXIV1}. We review related work in Appendix~\ref{app:related_work}.

As illustrated in Figure~\ref{fig:intro_overview}, while federated unlearning has drawn growing interest~\cite{Liu2021IWQOS,Wu2023MPE, Halimi2022ARXIV, Wu2022ARXIV, Gao2024TDSC,Wu2026ICDCS,Huang2025TMC,Pan2023SCIS,DingICNC2025,Wu2023ACCESS}, extending it to multimodal models remains open: client-update-only access, the bilinear coupling of the alignment loss, and continued FedAvg jointly create failure modes absent from single-modality unlearning. We use anchor to denote a residual mechanism that keeps a forgotten image-text alignment reconstructible after an unlearning request, and identify three such mechanisms in this regime, a failure pattern we call an Anchor Principle for federated multimodal contrastive unlearning. The first arises because joint embedding training entangles forgotten knowledge across both modalities through the bilinear similarity, so the gradient with respect to one modality depends on the embeddings produced by the other. Excising parameters from one branch alone leaves the other as a cross-modal anchor that drives the alignment gradient back toward the original pairing, raising a first question: \textbf{\textit{I) How can we close this Modality Anchor while keeping each branch's representation faithful enough to support retrieval over the retained image-text pairs?}} Compounding this, federated training entangles client gradient subspaces along a continuous spectrum, mixing directions exclusive to the forget set with directions shared with retained clients. Existing methods treat the forget set's contribution as an indivisible block, conflating removable anchors with retain support, and as we confirm in Section~\ref{sec:preliminaries} they consistently suffer a forget--retain trade-off, motivating a second question: \textbf{\textit{II) How can we identify and excise only the removable Unique-Subspace Anchors while sparing the retain-support directions whose deletion would degrade retained knowledge across clients?}}

\begin{wrapfigure}[24]{r}{0.55\linewidth}
    \vspace{-14pt}
    \centering
    \includegraphics[width=\linewidth]{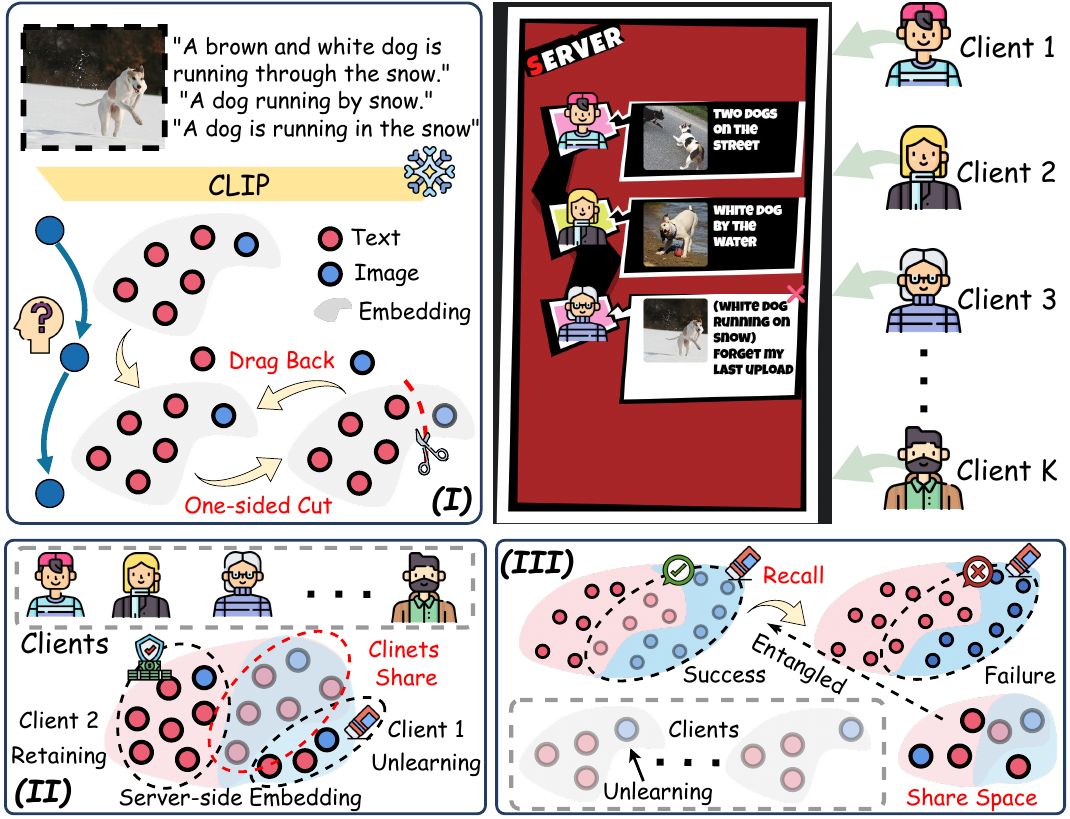}
    \caption{\textbf{Problem Illustration.} Three residual anchors in federated multimodal unlearning. \textbf{I)} Modality Anchor: the intact branch preserves forgotten alignments after one-sided excision. \textbf{II)} Unique-Subspace Anchor: forget-exclusive directions mix with retain support across client subspaces. \textbf{III)} Temporal Re-anchoring: continued aggregation rewrites the removed directions during subsequent FedAvg training rounds.}
    \label{fig:intro_overview}
    \vspace{-12pt}
\end{wrapfigure}

Even after the first two anchors are closed at the moment of unlearning, each client's local SGD follows an alignment gradient with non-zero components along the removed directions. Without a persistent safeguard, every subsequent round of federated training rewrites a fraction of the cut anchors back and gradually undoes the unlearning, raising a third question: \textbf{\textit{III) How can we prevent Temporal Re-anchoring across continued rounds while leaving directions outside the forget channel free to recover retained knowledge?}}

We address all three challenges with \underline{\texttt{EASE}}: an \textbf{E}ntanglement-\textbf{A}ware \textbf{S}ubspace \textbf{E}xcision framework for federated multimodal unlearning, built on lightweight LoRA adapters and projection operators over a frozen multimodal backbone.
For challenge \textbf{I)}, we establish \textbf{Bilateral Knowledge Excision (BKE)} to close the Modality Anchor by displacing the image and text branches simultaneously, so the bilinear similarity can no longer reconstruct the removed pairing through the untouched modality. This bilateral principle has no counterpart in single-modality unlearning.
For challenge \textbf{II)}, we develop \textbf{Gradient Subspace Decomposition (GSD)}. Reusing client updates already transmitted during FedAvg, the server extracts per-client subspace bases via SVD and measures their pairwise principal angles; directions nearly orthogonal to the retain subspace are treated as forget-exclusive anchors, while directions shared with retained clients are preserved as retain support. This replaces the binary erase-or-preserve decision with a per-direction operation adapted to the actual client-subspace entanglement.
For challenge \textbf{III)}, we devise \textbf{Projection with Forget Lock (PFL)}. At each unlearning round the server projects the parameter displacement onto the complement of the unique subspace, while a client-side Forget Lock penalizes drift back along the identified unique directions during local SGD, leaving the orthogonal complement unconstrained for retained knowledge recovery. The projection closes the anchor at the unlearning round, and the lock keeps it closed across continued training, with drift bounded inversely by the regularization strength. Our main contributions are:

\begin{itemize}[leftmargin=*]
    \item[\ding{182}] \textbf{\textit{Anchor Principle.}} We identify and formalize an Anchor Principle for federated multimodal contrastive unlearning: forgotten knowledge persists through three principal anchor locations in this regime, the Modality Anchor (bilinear cross-modal coupling), the Unique-Subspace Anchor (principal-angle subspace entanglement), and Temporal Re-anchoring (residual alignment gradient field); empirically, closing any strict subset of the three leaves a measurable reconstruction path. The principle is verified across three practical scenarios: client, class, and sample unlearning.

    \item[\ding{183}] \textbf{\textit{Anchor Closure Framework.}} We introduce \underline{\texttt{EASE}}, realizing the Anchor Principle through three coordinated mechanisms: BKE shuts the Modality Anchor by displacing both branches simultaneously, GSD separates removable anchors from retain support via the Cosine--Sine decomposition of the forget--retain subspace pair, and PFL prevents Temporal Re-anchoring through a server-side projection and a direction-selective client-side penalty. We back the framework with conditional geometric guarantees on unique-subspace excision and retain-direction preservation.

    \item[\ding{184}] \textbf{\textit{Empirical Verification.}} Experiments on Flickr30K, COCO, and TextCaps with three multimodal backbones across three unlearning scenarios show that \underline{\texttt{EASE}} most consistently narrows the gap to the retrain reference on both forget and retain sides among nine federated and centralized baselines, with retrieval, alignment-residual, and shadow/LiRA membership-inference metrics confirming closure at the embedding and inference levels. An anchor-by-anchor ablation confirms that disabling any single closure reopens a distinct failure mode predicted by the principle.
\end{itemize}

\section{Preliminaries}
\label{sec:preliminaries}

\paragraph{Federated multimodal learning.}
We consider a federated system with one central server and $K$ clients. Each client $k$ holds a private multimodal dataset $\mathcal{D}_k$ of $N_k$ image-text pairs $(x_v, x_t)$, where subscripts $v$ and $t$ denote the visual and textual modalities throughout, and data never leaves the client. The model consists of frozen pretrained encoders and lightweight trainable modules, namely Low-Rank Adaptation (LoRA) adapters and projectors. We denote the visual and language trainable parameters by $w_v$ and $w_t$, and write $w = [w_v;\, w_t]$ for the joint parameter vector. A complete notation table covering all symbols used in this paper is provided in Appendix~\ref{sec:app_notation}.

\paragraph{Contrastive alignment objective.}
The visual and language encoders map each input into a shared embedding space, producing $m$-dimensional representations $z_v$ and $z_t$. The system is trained via Federated Averaging (FedAvg) with the symmetric InfoNCE alignment loss:
\begin{equation}
    \mathcal{L}_a(w;\, \mathcal{D}) = -\frac{1}{2N}\sum_{i=1}^{N} \Biggl[\log \frac{\exp(z_{i,v}^\top z_{i,t} / \gamma)}{\sum_{j=1}^{N} \exp(z_{i,v}^\top z_{j,t} / \gamma)} + \log \frac{\exp(z_{i,v}^\top z_{i,t} / \gamma)}{\sum_{j=1}^{N} \exp(z_{j,v}^\top z_{i,t} / \gamma)}\Biggr],
    \label{eq:infonce}
\end{equation}
where $\gamma$ is the temperature and $\mathcal{D} = \bigcup_{k=0}^{K-1}\mathcal{D}_k$; the two terms correspond to the visual-to-text and text-to-visual retrieval directions. A key structural property of this loss is that the gradient with respect to one modality depends on the embeddings of the other: $\nabla_{w_v}\mathcal{L}_a$ depends on all textual embeddings $z_{t,j}$, and vice versa. As we show in Section~\ref{sec:bilateral}, this cross-modal coupling is the mechanism behind the first anchor channel of our principle, the Modality Anchor.

\paragraph{Unlearning problem.}
Given an unlearning request, we partition $\mathcal{D}$ into a forget set $\mathcal{D}_f$ and a retain set $\mathcal{D}_r = \mathcal{D} \setminus \mathcal{D}_f$. We consider three practical scenarios: client unlearning, where one or more clients request complete data withdrawal; class unlearning, where all samples matching a target class must be forgotten across the federation; and sample unlearning, where designated samples across all clients must be removed. Since the caption-retrieval datasets we use do not provide ground-truth semantic class labels, we instantiate class unlearning via KMeans pseudo-class clusters in the pretrained image-text embedding space (Appendix~\ref{sec:app_datasets}). In all three scenarios, the request is issued by a participating client and the server executes the unlearning on the global model. We let $w_n$ denote the model after standard federated training and $\tilde{w}$ the retrain model obtained by rerunning the full training pipeline on $\mathcal{D}_r$ from scratch.

\paragraph{Desiderata.}
The goal is to produce an unlearned model $w^*$ satisfying three criteria: (i) unlearning completeness, the retrieval behavior of $w^*$ on $\mathcal{D}_f$ matches that of $\tilde{w}$; (ii) retention integrity, the performance of $w^*$ on $\mathcal{D}_r$ matches that of $\tilde{w}$; and (iii) efficiency, the cost is substantially lower than full retraining. Existing methods attempt to satisfy these criteria by erasing the forget data's influence from $w$ as a whole, treating the forget gradient subspace as an indivisible block. This blind erasure cannot distinguish forget-exclusive directions, the removable anchors, from directions shared with retain, the retain support; nor does it address the bilinear cross-modal coupling that makes the untouched modality act as a Modality Anchor, as we analyze in Section~\ref{sec:bilateral}. These challenges motivate the entanglement-aware excision framework in Section~\ref{sec:method}.

\textbf{Observation 1: Existing federated unlearning methods exhibit a forget--retain trade-off, indicating that at least one anchor channel remains open in every baseline.}

\begin{wrapfigure}{l}{0.55\linewidth}
    \vspace{-10pt}
    \centering
    \subfloat[Unlearning completeness.\label{fig:obs_forget}]{%
        \includegraphics[width=0.48\linewidth]{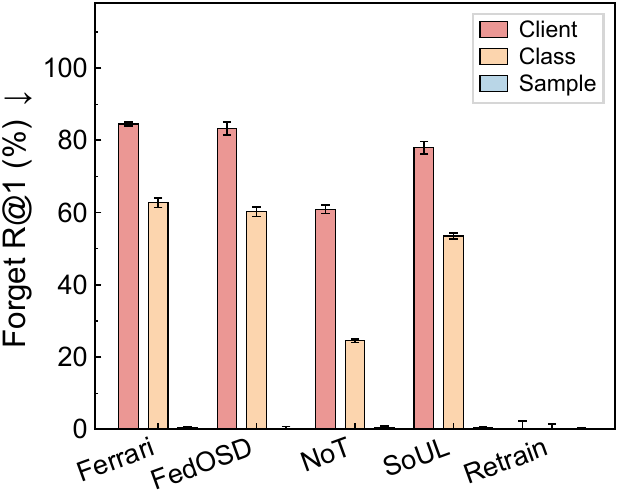}}
    \hfill
    \subfloat[Retention integrity.\label{fig:obs_retain}]{%
        \includegraphics[width=0.48\linewidth]{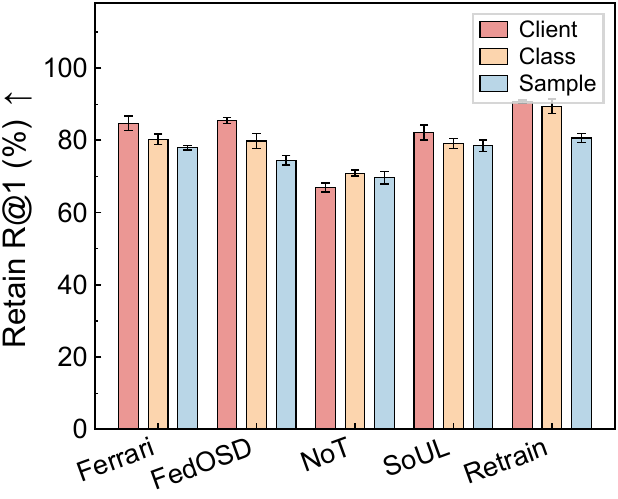}}
    \caption{Forget and retain trade-off of federated unlearning baselines on CLIP-B32 with Flickr30K.}
    \label{fig:obs_baselines}
    \vspace{-15pt}
\end{wrapfigure}

We evaluate four federated unlearning baselines, Ferrari~\cite{Liu2024ICML}, FedOSD~\cite{Pan2025AAAI}, NoT~\cite{Khalil2025CVPR}, and SoUL~\cite{Jia2024EMNLP}, on CLIP-B32 fine-tuned on Flickr30K under three unlearning scenarios. We report Forget R@1, where lower indicates more thorough erasure on the forget set, and Retain R@1, where higher indicates better preservation of retained knowledge. Figure~\ref{fig:obs_retain} shows that most baselines preserve Retain R@1 close to the Retrain reference, suggesting retained knowledge survives the unlearning procedure. Figure~\ref{fig:obs_forget}, however, reveals a sharp failure mode: only sample unlearning is effective, while client and class requests leave the forget set largely intact. NoT is the exception that erases more aggressively, but at the cost of a lower Retain, while the remaining baselines under-erase. These observations expose a tension in current federated multimodal unlearning, motivating a finer treatment of how forget-side anchors and retain support coexist within client gradient updates.

\textbf{Observation 2: Client gradient subspaces in federated multimodal training are partially entangled, and their overlap admits a continuous principal-angle spectrum from retain-aligned support to forget-exclusive anchors.}

\begin{wrapfigure}{r}{0.55\linewidth}
    \vspace{-18pt}
    \centering
    \subfloat[Subspace similarity.\label{fig:obs_heatmap}]{%
        \includegraphics[width=0.48\linewidth]{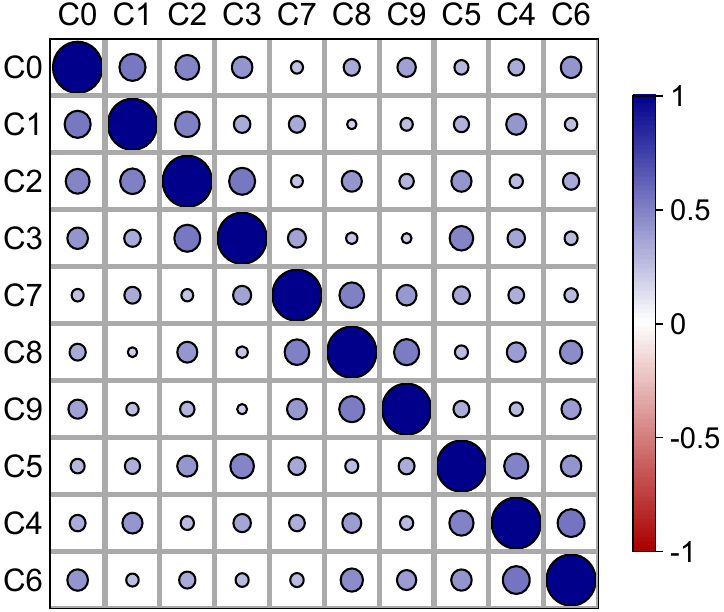}}
    \hfill
    \subfloat[Principal angle spectrum.\label{fig:obs_pairs}]{%
        \includegraphics[width=0.48\linewidth]{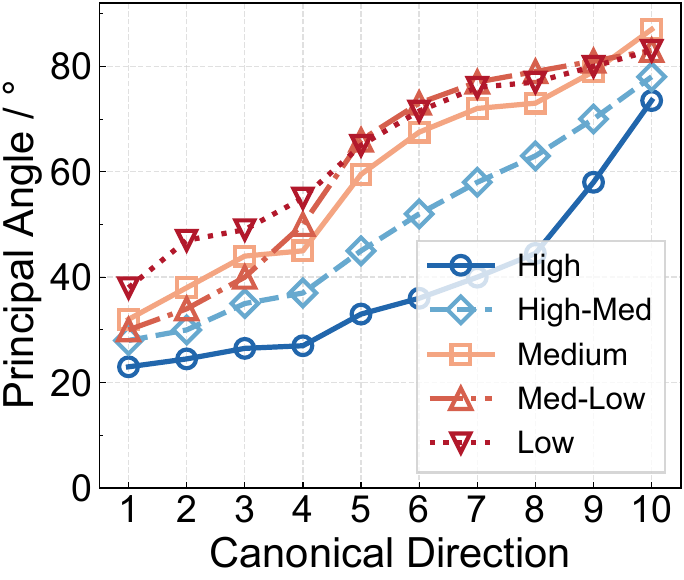}}
    \caption{Gradient subspace entanglement in federated multimodal training on Flickr30K with 10 clients.}
    \label{fig:observation}
    \vspace{-15pt}
\end{wrapfigure}

The trade-off in Observation 1 stems from the entangled structure of client gradient updates: the forget gradient subspace mixes removable anchors (forget-exclusive directions) with retain support (directions shared with retained clients), and blind erasure conflates the two. To validate this, we train a CLIP-B/32 projector on Flickr30K distributed across 10 clients via FedAvg for 30 rounds, collect per-round parameter updates, and extract each client's top-10 update directions via SVD. As shown in Figure~\ref{fig:obs_heatmap}, pairwise subspace similarity varies widely with an off-diagonal mean of 0.49, confirming that client gradient subspaces are neither fully shared nor fully orthogonal. Figure~\ref{fig:obs_pairs} decomposes five representative pairs by their principal angles: in all cases, the angles increase monotonically from small values for shared directions to large values for unique directions. These findings motivate an entanglement-aware decomposition that erases only the forget-exclusive Unique-Subspace Anchors while leaving the retain support intact.

\section{Methodology}
\label{sec:method}

\subsection{Framework Overview}
\label{sec:overview}

\begin{figure*}
    \centering
    \includegraphics[width=1\linewidth]{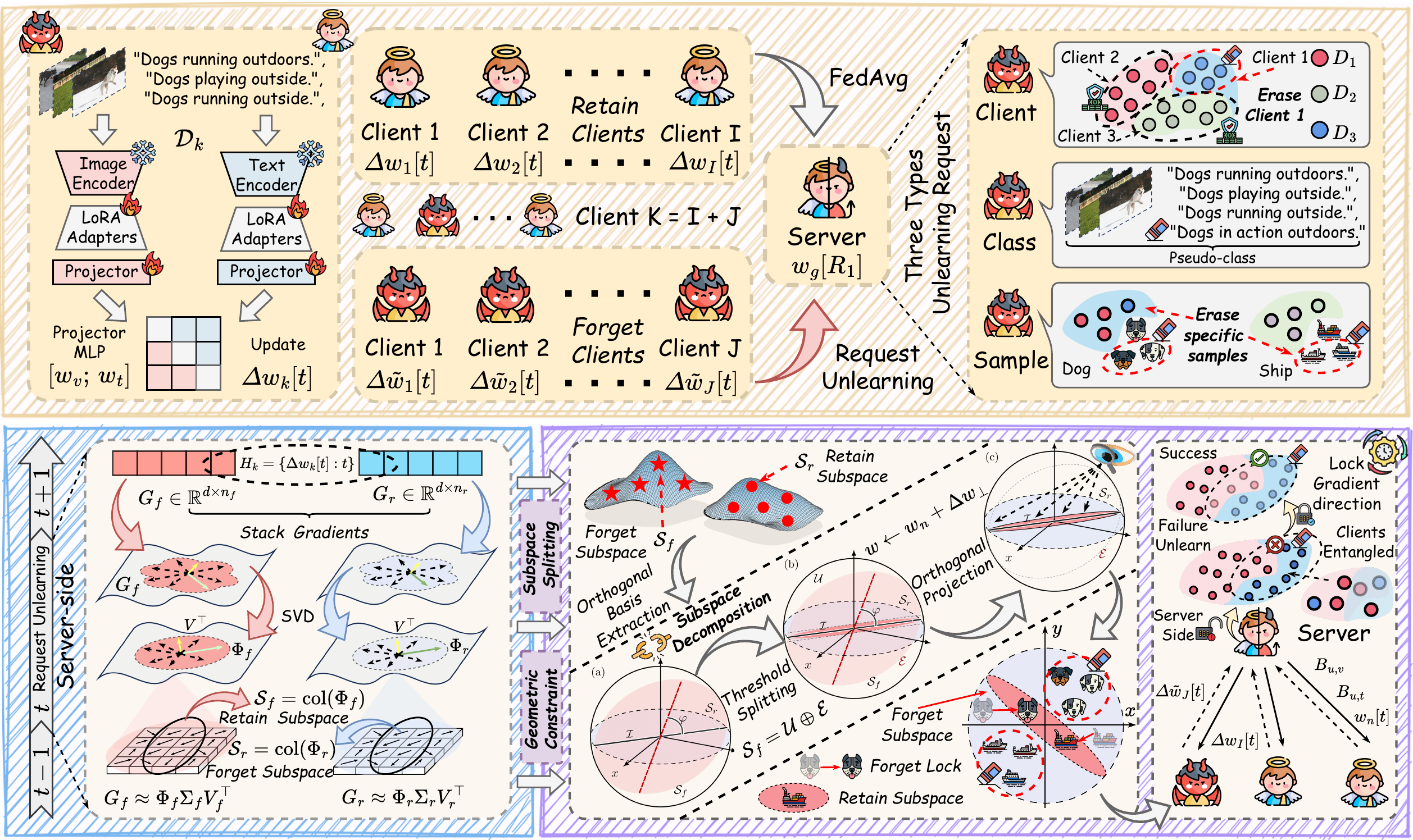}
    \caption{\textbf{Overview of \underline{\texttt{EASE}}.} BKE displaces both visual and language branches simultaneously to close the cross-modal reconstruction channel. GSD separates forget-exclusive directions from retain support via principal-angle decomposition. PFL projects the server displacement off the unique subspace and locks each client's drift along it during continued training.}
    \label{fig:framework}
\end{figure*}

Motivated by Observations 1--2, our framework closes the three anchor channels in turn (Figure~\ref{fig:framework}): \textbf{I)} Bilateral Knowledge Excision (BKE, Section~\ref{sec:bilateral}) shuts the Modality Anchor by displacing both visual and language branches simultaneously; \textbf{II)} Gradient Subspace Decomposition (GSD, Section~\ref{sec:subspace}) separates removable Unique-Subspace Anchors from retain support via Cosine--Sine decomposition of the forget--retain subspace pair; \textbf{III)} Projection with Forget Lock (PFL, Section~\ref{sec:projection}) prevents Temporal Re-anchoring through a server-side projection and a direction-selective client-side penalty. Disabling any single component reopens a distinct failure mode (Table~\ref{tab:ablation}).

\subsection{Closing the Modality Anchor (BKE)}
\label{sec:bilateral}

\paragraph{The Modality Anchor.}
In multimodal models with a bilinear similarity (e.g., InfoNCE), each modality's gradient depends on the other modality's embedding. Excising parameter directions on one branch alone leaves the unmodified branch as a memory of the original alignment, and the alignment gradient exploits this memory as an anchor that reconstructs the erased pairing. We term this residual encoding the Modality Anchor of the forget set. BKE closes it by displacing both branches simultaneously.

\paragraph{Anchor mechanism.}
The cross-modal alignment depends on both the visual and language parameters through the bilinear similarity in Eq.~\eqref{eq:infonce}. To see why unilateral excision fails, consider a forget-set pair $(x_{v,i}, x_{t,i})$. The InfoNCE gradient with respect to $w_v$ decomposes as:
\begin{equation}
    \nabla_{w_v} \mathcal{L}_a = \frac{1}{\gamma} J_v(w_v)^\top \!\left(\sum_{j=1}^{N} \frac{\exp\!\bigl(z_{v,i}^\top z_{t,j}/\gamma\bigr)}{\sum_{l}\exp\!\bigl(z_{v,i}^\top z_{t,l}/\gamma\bigr)}\, z_{t,j}(w_t) \;-\; z_{t,i}(w_t)\right),
    \label{eq:infonce_grad}
\end{equation}
where $J_v$ denotes the Jacobian $\partial z_v / \partial w_v$. The gradient direction with respect to $w_v$ is determined by the textual embeddings $z_{t,j}$, and vice versa. This cross-modal dependence is the physical seat of the Modality Anchor: if we remove forget-unique directions from $w_v$ only, leaving $w_t$ at the pre-unlearning value $w_{n,t}$, the visual representation is displaced, yet the textual embedding $z_{t,i}(w_{n,t})$ still encodes the original pairing. Since this intact embedding dominates the softmax weights in the contrastive loss, the gradient evaluated at the excised point reduces to:
\begin{equation}
    \nabla_{w_v} \mathcal{L}_a\big|_{w_v^*,\, w_{n,t}} \;\propto\; J_v(w_v^*)^\top\, z_{t,i}(w_{n,t}).
    \label{eq:anchor_grad}
\end{equation}
The unchanged textual embedding $z_{t,i}(w_{n,t})$ is the Modality Anchor: it encodes the semantic content of the erased pair and continuously drives the alignment gradient to re-align $w_v$ with the original pairing during retraining.

\paragraph{Alignment residual.}
We quantify the open Modality Anchor by the normalized gap between the unlearned model $w^*$ and the retrain reference $\tilde{w}$:
\begin{equation}
    \rho = \frac{\bigl|z_v(w_v^*)^\top z_t(w_t^*) \;-\; z_v(\tilde{w}_v)^\top z_t(\tilde{w}_t)\bigr|}{\bigl|z_v(w_{n,v})^\top z_t(w_{n,t}) \;-\; z_v(\tilde{w}_v)^\top z_t(\tilde{w}_t)\bigr| + \epsilon},
    \label{eq:residual}
\end{equation}
with $w_n$ the pre-unlearning model and $\epsilon$ a stability constant. $\rho \approx 0$ indicates the anchor is closed; $\rho \approx 1$ that it remains fully open.

\paragraph{Comparison of excision strategies.}
\textbf{(i)}~Visual-only excision on $w_v$ leaves $z_t(w_{n,t})$ as the Modality Anchor that re-pulls $z_v$ toward the forgotten alignment, yielding $\rho > 0$; \textbf{(ii)}~by symmetry, language-only excision also yields $\rho > 0$; \textbf{(iii)}~bilateral excision (ours) displaces both embeddings, and with GSD and PFL holding the unique directions closed across rounds, $\rho \approx 0$. Appendix~\ref{app:proof_bilateral} formalizes this contrast under Assumptions A1--A4 (Appendix~\ref{app:assumptions}).

\subsection{Closing the Unique-Subspace Anchor (GSD)}
\label{sec:subspace}

\paragraph{Removable anchors versus retain support.}
BKE tells us to operate on both modalities, but how much should we erase from each? In federated non-IID training, the forget client's parameter updates share directions with retained clients along a continuous principal-angle spectrum. We need to decompose the forget subspace into removable Unique-Subspace Anchors (forget-exclusive directions, safe to erase) and retain support (directions shared with retain, must preserve).

\paragraph{Gradient history collection.}
Over the initial $R_1$ FedAvg rounds, each sampled client's per-round update $\Delta w_k[t]$ (its post-local-SGD parameter delta relative to the broadcast global model) is already transmitted to the server, so storing it as a per-client history $H_k$ adds no extra communication for client unlearning. Sample and class unlearning (Appendix~\ref{sec:app_datasets}) require one additional request-specific round for clients holding matching data to upload forget-specific gradients. The global model $w_n$ after $R_1$ rounds serves as the unlearning reference.

\paragraph{Forget and retain subspace extraction.}
We construct a forget gradient matrix $G_f \in \mathbb{R}^{d \times n_f}$ by column-stacking the updates associated with the forget set $\mathcal{D}_f$. The column space of $G_f$ captures the principal directions along which the forget data has shaped the model parameters. The construction of $G_f$ depends on the unlearning scenario. For client unlearning, $G_f$ collects the full history of updates from the requesting client:
\begin{equation}
    G_f = \bigl[\Delta w_{k^*}[t_1],\; \dots,\; \Delta w_{k^*}[t_{|H_{k^*}|}]\bigr],
    \label{eq:Gf_client}
\end{equation}
where $t_1, \dots, t_{|H_{k^*}|}$ are the rounds in which $k^*$ participated. For sample and class unlearning, the forget targets are generally distributed across multiple clients. The server broadcasts a target descriptor (a set of sample identifiers for sample unlearning, or a textual descriptor for class unlearning), and every client that holds matching data performs one epoch of training on its local forget subset and uploads the resulting gradient; $G_f$ is column-stacked from these per-client contributions. A retain gradient matrix $G_r$ is constructed analogously from each client's remaining non-forget data (see Appendix~\ref{sec:app_impl} for details).

We apply Singular Value Decomposition (SVD) to $G_f$ and retain the top-$p$ left singular vectors that capture at least a fraction $\tau_e$ of the total energy:
\begin{equation}
    p = \min\!\Bigl\{j \in \{1, \dots, \mathrm{rank}(G_f)\} \;:\; \frac{\sum_{i=1}^{j} \sigma_i^2}{\|G_f\|_F^2} \;\geq\; \tau_e \Bigr\},
    \label{eq:energy}
\end{equation}
This energy-based criterion adapts $p$ to the spectral decay of $G_f$. The resulting forget subspace basis is $\Phi_f \in \mathbb{R}^{d \times p}$, whose columns are the top-$p$ left singular vectors of $G_f$. The same procedure applied to $G_r$ yields the retain subspace basis $\Phi_r \in \mathbb{R}^{d \times q}$. We denote the column spaces of $\Phi_f$ and $\Phi_r$ by $\mathcal{S}_f$ and $\mathcal{S}_r$, respectively.

\paragraph{Entanglement spectrum.}
Not all directions inside $\mathcal{S}_f$ behave the same way: a direction lying entirely within $\mathcal{S}_r$ is fully retain-aligned and serves as retain support, since erasing it destroys retained knowledge; a direction orthogonal to $\mathcal{S}_r$ is forget-exclusive and constitutes a removable anchor. To quantify this overlap for every forget-side canonical direction, we introduce the entanglement spectrum of $\mathcal{S}_f$ with respect to $\mathcal{S}_r$. We compute the cross-correlation matrix $M = \Phi_f^\top \Phi_r$ and take its full SVD:
\begin{equation}
    M = U_f\,\Sigma\,V_f^\top,\quad U_f \in \mathbb{R}^{p \times p},\quad V_f \in \mathbb{R}^{q \times q},\quad r=\min(p,q),
    \label{eq:entanglement_spectrum}
\end{equation}
where the rectangular diagonal of $\Sigma$ has entries $\cos\varphi_1,\ldots,\cos\varphi_r$, the cosines of the principal angles. If $p>q$, the remaining $p-q$ columns of $U_f$ span directions in $\mathcal{S}_f$ that are orthogonal to $\mathcal{S}_r$; we assign them entanglement coefficient $0$. Thus for every $i \in \{1,\ldots,p\}$ we define $\kappa_i=\cos\varphi_i$ for $i\le r$ and $\kappa_i=0$ for $i>r$. The rotated basis vectors $c_i = (\Phi_f U_f)_{\cdot,i}$ are the canonical directions of $\mathcal{S}_f$ relative to $\mathcal{S}_r$: a coefficient near 1 indicates retain-aligned (retain support); near 0 indicates forget-exclusive (removable anchor). We partition using a threshold $\delta \in [0,1]$ into unique (anchor) and entangled (support) basis matrices, where a larger $\delta$ classifies more directions as removable anchors; we analyze the sensitivity to this threshold in Section~\ref{sec:experiments}:
\begin{equation}
    B_u = \bigl[c_i\bigr]_{\kappa_i \leq \delta} \in \mathbb{R}^{d \times |\mathcal{U}|}, \quad
    B_e = \bigl[c_i\bigr]_{\kappa_i > \delta} \in \mathbb{R}^{d \times |\mathcal{E}|}.
    \label{eq:Bu_Be}
\end{equation}

The unique subspace $\mathcal{U} = \mathrm{col}(B_u)$ spans forget-exclusive removable anchors (collectively Unique-Subspace Anchor) and the entangled subspace $\mathcal{E} = \mathrm{col}(B_e)$ spans retain support, with $\mathcal{S}_f = \mathcal{U} \oplus \mathcal{E}$. The decomposition is intrinsic to the subspace pair $(\mathcal{S}_f, \mathcal{S}_r)$, which distinguishes anchor identification from single-subspace projection methods that erase $\mathcal{S}_f$ as a block. Applying it independently to each LoRA layer and projector within both modalities yields per-group bases $B_{u,\mu}^{l}$ and $B_{e,\mu}^{l}$; the theorems in Section~\ref{sec:projection} apply per block, and global guarantees follow by summing across blocks.

\subsection{Preventing Temporal Re-anchoring (PFL)}
\label{sec:projection}

\paragraph{Scope of theoretical guarantees.}
Our guarantees are geometric, not a certified data-deletion proof: given a faithful $\mathcal{U}_\mu$, projection removes its displacement (Theorem~\ref{thm:completeness}), collateral energy on $\mathcal{S}_r$ is bounded by $\delta^2\|\Delta_r\|_2^2$ (Theorem~\ref{thm:retention}), and the Forget Lock contracts unique-subspace drift to $G_u/(2\alpha)$ (Theorem~\ref{thm:durability}). Operational claims about membership leakage rely on the empirical proxies $\rho$, raincloud distributions, and LiRA TPR, evaluated in Section~\ref{sec:experiments} and Appendices~\ref{sec:app_raincloud},~\ref{sec:app_lira}.

\paragraph{Temporal Re-anchoring.}
A one-time projection excises the Unique-Subspace Anchor at $t{=}0$, but retain-data SGD still has nonzero components $g_u(t) := \Pi_{u,\mu}\nabla_{w_\mu}\mathcal{L}_a(w(t))$ along $\mathcal{U}_\mu$, since the unique subspace is only partially orthogonal to the retain gradient field. Letting $d(t) := \Pi_{u,\mu}(w_\mu(t) - w_{n,\mu})$, a single SGD step under the combined objective gives the recurrence
\begin{equation}
    d(t{+}1) \;=\; (1 - 2\lr\alpha)\,d(t) \;-\; \lr\, g_u(t),
    \label{eq:reanchor_recurrence}
\end{equation}
where $g_u(t)$ is the residual alignment gradient surviving along $\mathcal{U}_\mu$ after BKE has suppressed the Modality Anchor's leading component. Without persistent constraint ($\alpha = 0$), $g_u(t)$ accumulates across rounds and rewrites the cut anchors, the third anchor channel.

\paragraph{Direction-selective lock.}
PFL constrains drift along $\mathcal{U}_\mu$ while leaving $\mathcal{U}_\mu^\perp$ unconstrained, an asymmetry that neither per-round server projection alone nor axis-aligned importance regularizers can enforce (Appendix~\ref{app:lock_alternatives}). The construction pairs a server-side projection that pulls each global modality $\mu \in \{v,t\}$ back to $\mathcal{M}_\mu$ with a client-side quadratic penalty supported on $\mathcal{U}_\mu$:
\begin{align}
    w_{g,\mu} &\;\leftarrow\; w_{g,\mu} - \Pi_{u,\mu}(w_{g,\mu} - w_{n,\mu}),\quad \Pi_{u,\mu} := B_{u,\mu} B_{u,\mu}^\top, \label{eq:proj_step} \\
    \mathcal{L}_f(w) &= \sum_{\mu \in \{v,t\}} \bigl\|B_{u,\mu}^\top(w_\mu - w_{n,\mu})\bigr\|_2^2. \label{eq:forget_lock}
\end{align}
Each client trains on the retained data $\mathcal{D}_k' = \mathcal{D}_k \setminus \mathcal{D}_f$ with the combined objective
\begin{equation}
    \mathcal{L}(w;\, \mathcal{D}_k') = \mathcal{L}_a(w;\, \mathcal{D}_k') + \alpha \cdot \mathcal{L}_f(w),
    \label{eq:local_obj}
\end{equation}
where $\alpha > 0$ is the lock strength. The recurrence Eq.~\eqref{eq:reanchor_recurrence} then becomes a contraction with factor $1 - 2\lr\alpha < 1$ (Theorem~\ref{thm:durability}), bounding unique-subspace drift by $G_u/(2\alpha)$ regardless of round count, while $\mathcal{U}_\mu^\perp$ (including the entangled retain support from GSD) stays unconstrained so retained knowledge recovers. The bases $B_{u,v}$, $B_{u,t}$ and reference $w_n$ are broadcast with the global model, so $\mathcal{L}_f$ is local with no extra communication. After $R_2$ excision rounds the server runs $R_3$ stabilization rounds without projection; the lock remains active.

The two thresholds therefore control complementary failure modes: smaller $\delta$ leaves anchors partially open while larger $\delta$ misclassifies retain support, and larger $\alpha$ tightens re-anchoring durability while leaving $\mathcal{U}_\mu^\perp$ free for recovery (Appendices~\ref{app:proof_completeness}--\ref{app:proof_durability}; $\alpha{=}0$ ablation in Section~\ref{sec:experiments}).

\section{Experiments}
\label{sec:experiments}

\subsection{Experimental Setup}
\label{sec:setup}

\textbf{Datasets.} We evaluate on three benchmark image-text datasets: Flickr30K~\cite{Young2014TACL}, MS COCO~\cite{Lin2014ECCV}, and TextCaps~\cite{Sidorov2020ECCV}, covering standard caption retrieval and scene-text retrieval. We use three backbones: CLIP-B/32~\cite{Radford2021ICML}, CLIP-L/14, and SigLIP~\cite{Zhai2023ICCV}. Dataset splits are detailed in Appendix~\ref{sec:app_datasets}; implementation details and hyperparameters are in Appendix~\ref{sec:app_impl}. Since these datasets lack ground-truth class labels, class unlearning uses pseudo-class clusters in the pretrained backbone embedding space.

\textbf{Counterparts.} We compare our method against nine baselines from two categories. Federated unlearning methods: (1) \textbf{FedEraser}~\cite{Liu2021IWQOS}, (2) \textbf{FedRecover}~\cite{Cao2023SP}, (3) \textbf{Ferrari}~\cite{Liu2024ICML}, (4) \textbf{FedOSD}~\cite{Pan2025AAAI}, (5) \textbf{NoT}~\cite{Khalil2025CVPR}, (6) \textbf{SoUL}~\cite{Jia2024EMNLP}, (7) \textbf{FFMU}~\cite{Che2023ICML}, (8) \textbf{FUSED}~\cite{Zhong2025CVPR}; centralized unlearning adapted to the federated setting: (9) \textbf{GradAscent}~\cite{Thudi2022USENIX}; and the retrain reference $\tilde{w}$. Detailed descriptions of all counterparts are provided in Appendix~\ref{sec:app_baselines}.

\textbf{Evaluation metrics.} We evaluate four metrics: Recall@$k$~\cite{Karpathy2015CVPR} with $k \in \{1, 5, 10\}$ for cross-modal retrieval, a shadow-model membership inference attack (MIA)~\cite{Shokri2017SP} score for aggregate privacy verification, the likelihood-ratio attack true-positive rate at $1\%$ false-positive rate ($\mathrm{LiRA}~\mathrm{TPR}@\mathrm{FPR}{=}1\%$)~\cite{Carlini2022SP} for low-FPR privacy auditing, and communication cost~\cite{Kairouz2021FNT} in megabytes. Full metric definitions are in Appendix~\ref{sec:app_metrics}. With this setup, three research questions structure the remainder of the section: \textbf{Q1} (Superiority), \textbf{Q2} (Effectiveness), and \textbf{Q3} (Sensitivity).

\subsection{Superiority (Q1)}
\label{sec:q1}

For \textbf{Q1}, we test whether $w^*$ matches the retrain reference $\tilde{w}$ on the forget set $\bar{\mathcal{D}}_f$ and the retain set $\bar{\mathcal{D}}_r$. Table~\ref{tab:q1_flickr30k_clipb32} gives the comparison on Flickr30K with CLIP-B/32; eight dataset--backbone combinations are in Appendix~\ref{sec:app_baselines_full}. Figure~\ref{fig:circular} summarizes per-metric performance across scenarios.

\begin{table*}[!ht]
\centering
\setlength{\tabcolsep}{3pt}
\setlength{\aboverulesep}{0.3pt}
\setlength{\belowrulesep}{0.3pt}
\renewcommand{\arraystretch}{0.95}
\caption{Main comparison on Flickr30K with CLIP-B/32 across three unlearning scenarios. The top sub-table reports forget-side metrics ($F\text{-}R@k$) and membership inference (MIA); the bottom sub-table reports retain-side metrics ($R\text{-}R@k$) and communication cost ($Comm.$, MB). For $F\text{-}R@k$ and $R\text{-}R@k$, parenthetical values give the absolute gap to the retrain reference with \up{green} closer to ideal and \down{red} farther; for MIA and $Comm.$, \gap{val} is the absolute difference to retrain. Best in \best{red}, runner-up in \second{blue} among baselines and our method; \textit{Retrain} does not participate in ranking.}
\label{tab:q1_flickr30k_clipb32}
\resizebox{\linewidth}{!}{
\begin{tabular}{l|cccc|cccc|cccc}
\Xhline{1.2pt}
\rowcolor{CadetBlue!20}
\textbf{Method} & \multicolumn{4}{c|}{\textbf{Client Unlearning}} & \multicolumn{4}{c|}{\textbf{Class Unlearning}} & \multicolumn{4}{c}{\textbf{Sample Unlearning}} \\
\cmidrule(lr){2-5} \cmidrule(lr){6-9} \cmidrule(lr){10-13}
\rowcolor{CadetBlue!20}
 & $F\text{-}R@1\;\downarrow$ & $F\text{-}R@5\;\downarrow$ & $F\text{-}R@10\;\downarrow$ & $MIA$ & $F\text{-}R@1\;\downarrow$ & $F\text{-}R@5\;\downarrow$ & $F\text{-}R@10\;\downarrow$ & $MIA$ & $F\text{-}R@1\;\downarrow$ & $F\text{-}R@5\;\downarrow$ & $F\text{-}R@10\;\downarrow$ & $MIA$ \\
\Xhline{1.2pt}
\rowcolor{gray!10}
FedEraser & 69.8 \down{69.7} & 90.9 \down{89.4} & 95.2 \down{92.9} & 48.9 \gap{32.2} & 38.8 \down{38.6} & 68.2 \down{67.5} & 79.2 \down{77.8} & \best{64.0} \gap{39.8} & 0.3 \down{0.2} & 2.4 \down{1.3} & 5.8 \down{3.4} & 16.7 \\
FedRecover & 57.9 \down{57.8} & 81.2 \down{79.7} & 87.7 \down{85.4} & 47.2 \gap{30.5} & 41.6 \down{41.4} & 72.0 \down{71.3} & 83.5 \down{82.1} & 41.7 \gap{17.5} & 0.5 \down{0.4} & 2.8 \down{1.7} & 6.2 \down{3.8} & \second{16.9} \gap{0.2} \\
\rowcolor{gray!10}
Ferrari & 84.5 \down{84.4} & 98.7 \down{97.2} & 99.6 \down{97.3} & 52.7 \gap{36.0} & 62.7 \down{62.5} & 92.7 \down{92.0} & 97.1 \down{95.7} & \second{60.3} \gap{36.1} & 0.4 \down{0.3} & 2.5 \down{1.4} & 5.9 \down{3.5} & 16.7 \\
FedOSD & 83.3 \down{83.2} & 97.7 \down{96.2} & 99.4 \down{97.1} & \second{54.0} \gap{37.3} & 60.2 \down{60.0} & 89.6 \down{88.9} & 95.3 \down{93.9} & 58.7 \gap{34.5} & 0.3 \down{0.2} & 2.4 \down{1.3} & 5.7 \down{3.3} & 16.7 \\
\rowcolor{gray!10}
NoT & 60.9 \down{60.8} & 88.1 \down{86.6} & 93.6 \down{91.3} & 50.1 \gap{33.4} & 24.5 \down{24.3} & 60.4 \down{59.7} & 74.1 \down{72.7} & 48.9 \gap{24.7} & 0.6 \down{0.5} & 3.0 \down{1.9} & 6.5 \down{4.1} & 16.9 \gap{0.2} \\
SoUL & 77.9 \down{77.8} & 97.1 \down{95.6} & 98.9 \down{96.6} & 50.4 \gap{33.7} & 53.5 \down{53.3} & 84.1 \down{83.4} & 91.4 \down{90.0} & 51.5 \gap{27.3} & 0.5 \down{0.4} & 2.9 \down{1.8} & 6.3 \down{3.9} & 16.7 \\
\rowcolor{gray!10}
FFMU & 39.9 \down{39.8} & 69.8 \down{68.3} & 80.9 \down{78.6} & 52.1 \gap{35.4} & \second{14.7} \down{14.5} & \second{35.9} \down{35.2} & \second{48.4} \down{47.0} & 43.9 \gap{19.7} & 0.3 \down{0.2} & 2.5 \down{1.4} & 5.8 \down{3.4} & \best{17.3} \gap{0.6} \\
FUSED & 82.1 \down{82.0} & 97.5 \down{96.0} & 99.4 \down{97.1} & 52.1 \gap{35.4} & 56.5 \down{56.3} & 87.3 \down{86.6} & 93.9 \down{92.5} & 58.9 \gap{34.7} & 0.4 \down{0.3} & 2.7 \down{1.6} & 6.0 \down{3.6} & 16.7 \\
\rowcolor{gray!10}
GradAscent & \second{1.8} \down{1.7} & \second{7.2} \down{5.7} & \second{14.1} \down{11.8} & \best{54.5} \gap{37.8} & 15.9 \down{15.7} & 39.4 \down{38.7} & 53.7 \down{52.3} & 46.5 \gap{22.3} & \second{0.3} \down{0.2} & \second{2.4} \down{1.3} & \second{5.7} \down{3.3} & 16.8 \gap{0.1} \\
\rowcolor{gray!10}
\textbf{EASE} & \best{0.3} \down{0.2} & \best{1.0} \up{0.5} & \best{2.2} \up{0.1} & 16.9 \gap{0.2} & \best{0.3} \down{0.1} & \best{1.5} \down{0.8} & \best{2.6} \down{1.2} & 25.4 \gap{1.2} & \best{0.1} & \best{2.2} \down{1.1} & \best{5.5} \down{3.1} & 16.7 \\
\textit{Retrain} & 0.1 & 1.5 & 2.3 & 16.7 & 0.2 & 0.7 & 1.4 & 24.2 & 0.1 & 1.1 & 2.4 & 16.7 \\
\Xhline{1.2pt}
\end{tabular}}
\vspace{-2pt}
\resizebox{\linewidth}{!}{
\begin{tabular}{l|cccc|cccc|cccc}
\Xhline{1.2pt}
\rowcolor{CadetBlue!20}
\textbf{Method} & $R\text{-}R@1\;\uparrow$ & $R\text{-}R@5\;\uparrow$ & $R\text{-}R@10\;\uparrow$ & $Comm.\;\downarrow$ & $R\text{-}R@1\;\uparrow$ & $R\text{-}R@5\;\uparrow$ & $R\text{-}R@10\;\uparrow$ & $Comm.\;\downarrow$ & $R\text{-}R@1\;\uparrow$ & $R\text{-}R@5\;\uparrow$ & $R\text{-}R@10\;\uparrow$ & $Comm.\;\downarrow$ \\
\Xhline{1.2pt}
\rowcolor{gray!10}
FedEraser & 85.1 \down{5.6} & 98.7 \down{1.2} & 99.5 \down{0.5} & 262.5 & 78.8 \down{10.6} & 96.0 \down{3.9} & 98.9 \down{1.1} & 196.9 & 78.3 \down{2.3} & 96.9 \down{0.8} & 99.0 \down{0.4} & 196.9 \\
FedRecover & 73.6 \down{17.1} & 94.9 \down{5.0} & 97.2 \down{2.8} & \best{47.2} \gap{215.3} & 76.8 \down{12.6} & 93.0 \down{6.9} & 94.6 \down{5.4} & \best{35.4} \gap{161.5} & 58.8 \down{21.8} & 81.6 \down{16.1} & 87.4 \down{12.0} & \best{35.4} \gap{161.5} \\
\rowcolor{gray!10}
Ferrari & 84.7 \down{6.0} & 98.9 \down{1.0} & 99.5 \down{0.5} & 72.2 \gap{190.3} & 80.3 \down{9.1} & 96.4 \down{3.5} & 99.1 \down{0.9} & 59.1 \gap{137.8} & 77.9 \down{2.7} & 96.8 \down{0.9} & 99.0 \down{0.4} & 59.1 \gap{137.8} \\
FedOSD & 85.5 \down{5.2} & 99.0 \down{0.9} & 99.6 \down{0.4} & 105.0 \gap{157.5} & 79.8 \down{9.6} & 96.0 \down{3.9} & 98.9 \down{1.1} & 78.8 \gap{118.1} & 74.5 \down{6.1} & 95.9 \down{1.8} & 98.4 \down{1.0} & 78.8 \gap{118.1} \\
\rowcolor{gray!10}
NoT & 66.9 \down{23.8} & 92.3 \down{7.6} & 96.8 \down{3.2} & 52.5 \gap{210.0} & 70.9 \down{18.5} & 93.9 \down{6.0} & 98.1 \down{1.9} & \second{39.4} \gap{157.5} & 69.6 \down{11.0} & 93.9 \down{3.8} & 97.8 \down{1.6} & 39.4 \gap{157.5} \\
SoUL & 82.1 \down{8.6} & 98.3 \down{1.6} & 99.1 \down{0.9} & 52.5 \gap{210.0} & 79.1 \down{10.3} & 95.8 \down{4.1} & 98.8 \down{1.2} & 39.4 \gap{157.5} & 78.5 \down{2.1} & 97.3 \down{0.4} & 99.0 \down{0.4} & 39.4 \gap{157.5} \\
\rowcolor{gray!10}
FFMU & 44.5 \down{46.2} & 75.0 \down{24.9} & 83.5 \down{16.5} & 105.0 \gap{157.5} & 48.9 \down{40.5} & 80.7 \down{19.2} & 89.1 \down{10.9} & 78.8 \gap{118.1} & 40.5 \down{40.1} & 74.1 \down{23.6} & 82.5 \down{16.9} & 78.8 \gap{118.1} \\
FUSED & \second{86.2} \down{4.5} & \second{99.1} \down{0.8} & \second{99.7} \down{0.3} & 105.0 \gap{157.5} & \second{80.5} \down{8.9} & \second{96.4} \down{3.5} & \second{99.2} \down{0.8} & 78.8 \gap{118.1} & \second{79.0} \down{1.6} & \second{97.5} \down{0.2} & \second{99.1} \down{0.3} & 78.8 \gap{118.1} \\
\rowcolor{gray!10}
GradAscent & 1.6 \down{89.1} & 7.8 \down{92.1} & 13.1 \down{86.9} & 72.2 \gap{190.3} & 36.7 \down{52.7} & 66.2 \down{33.7} & 79.6 \down{20.4} & 59.1 \gap{137.8} & 76.6 \down{4.0} & 96.0 \down{1.7} & 98.7 \down{0.7} & 59.1 \gap{137.8} \\
\rowcolor{gray!10}
\textbf{EASE} & \best{86.5} \down{4.2} & \best{99.2} \down{0.7} & \best{99.8} \down{0.2} & \second{52.5} \gap{210.0} & \best{80.8} \down{8.6} & \best{96.6} \down{3.3} & \best{99.3} \down{0.7} & 47.2 \gap{149.7} & \best{79.4} \down{1.2} & \best{97.7} & \best{99.2} \down{0.2} & \second{39.4} \gap{157.5} \\
\textit{Retrain} & 90.7 & 99.9 & 100.0 & 262.5 & 89.4 & 99.9 & 100.0 & 196.9 & 80.6 & 97.7 & 99.4 & 196.9 \\
\Xhline{1.2pt}
\end{tabular}}
\end{table*}

\begin{figure*}[t]
    \centering
    \subfloat[Client Unlearning]{\includegraphics[width=0.32\linewidth]{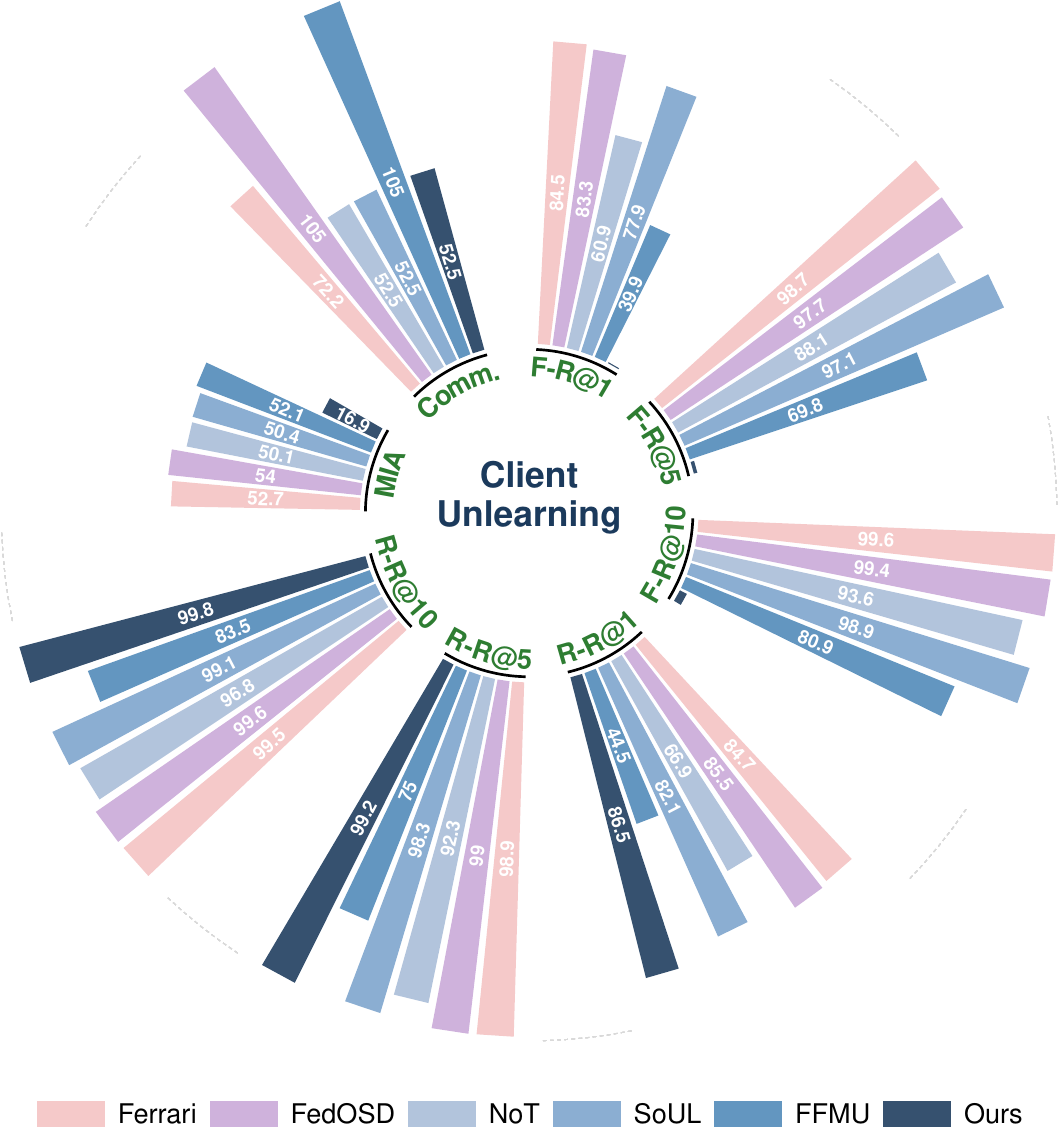}\label{fig:circ_client}}
    \hfill
    \subfloat[Class Unlearning]{\includegraphics[width=0.32\linewidth]{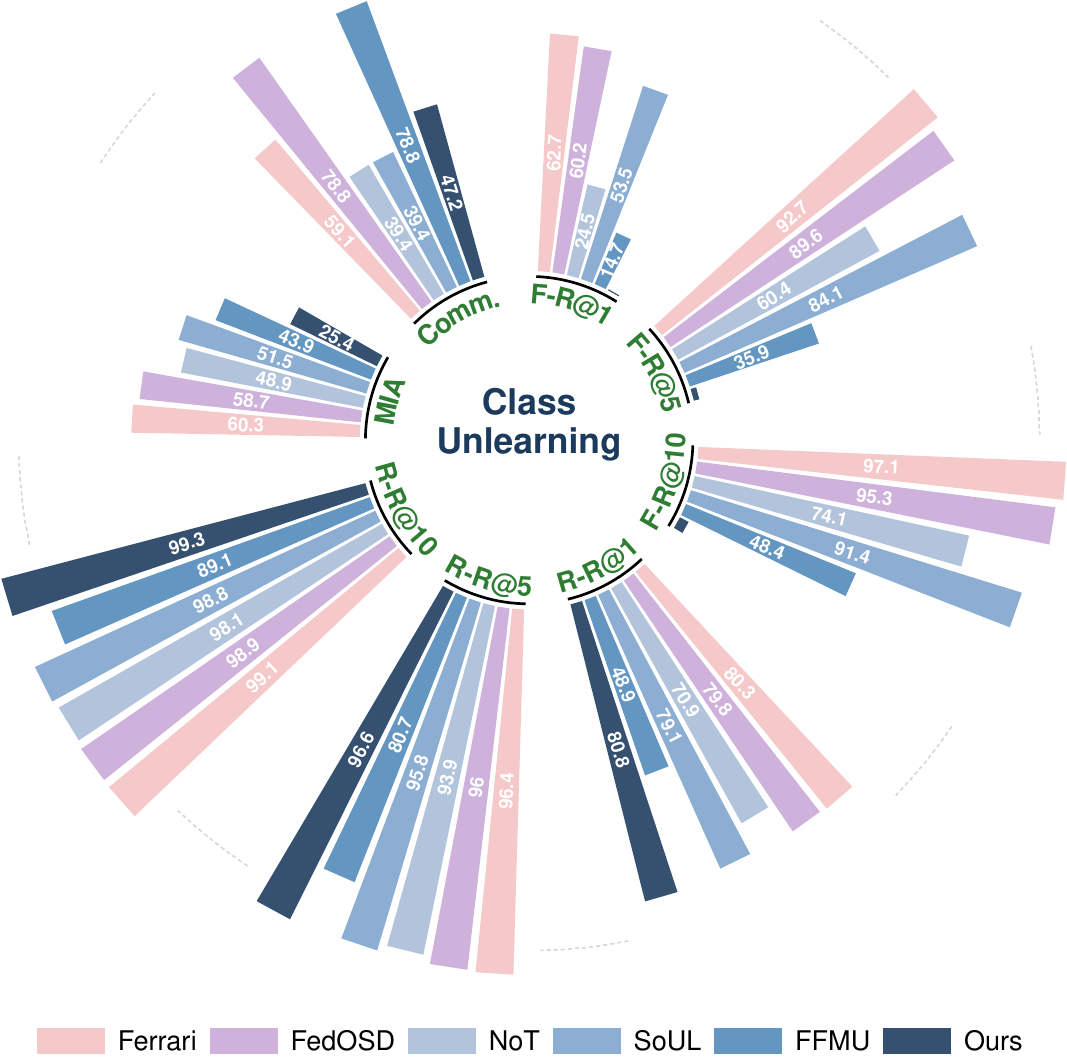}\label{fig:circ_class}}
    \hfill
    \subfloat[Sample Unlearning]{\includegraphics[width=0.32\linewidth]{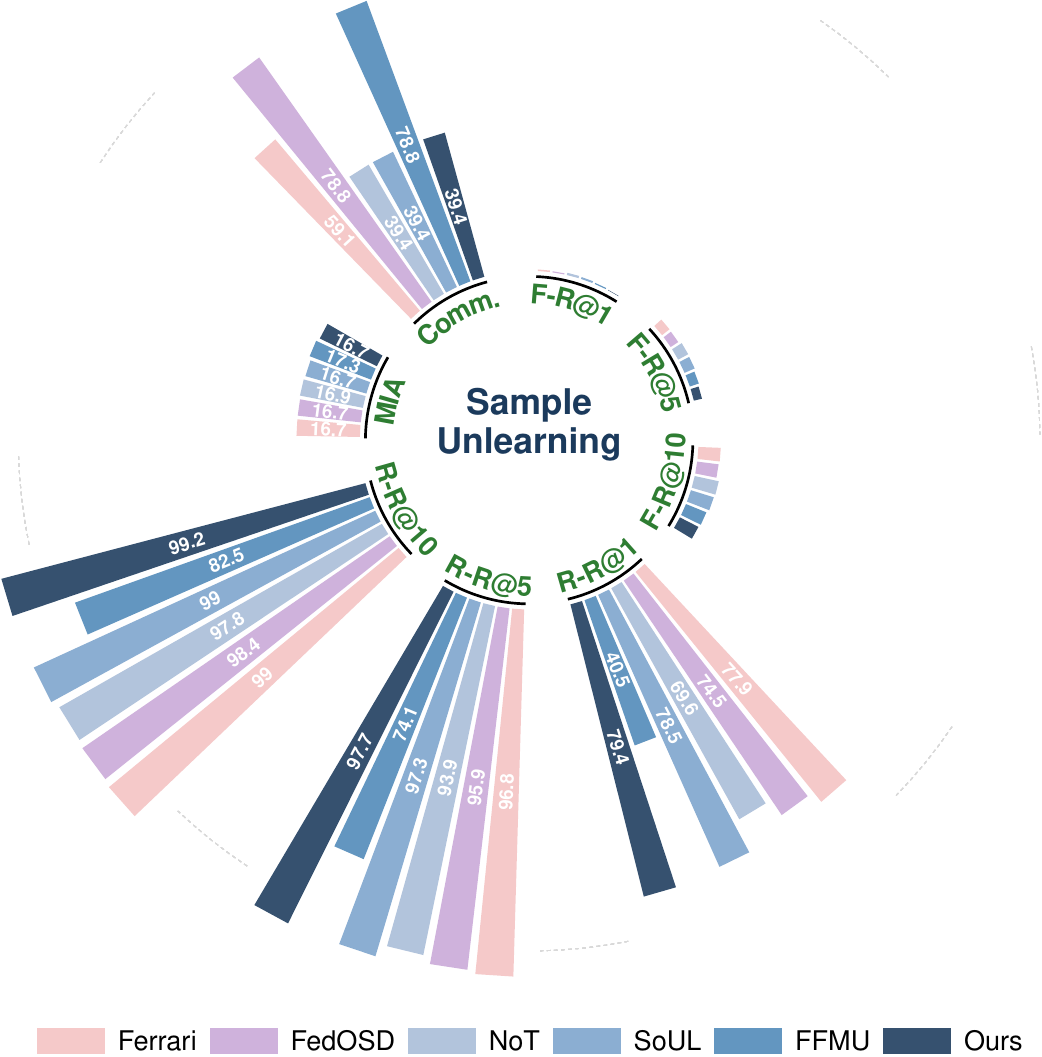}\label{fig:circ_sample}}
    \caption{Per-metric comparison on CLIP-B/32 / Flickr30K under three unlearning scenarios. Each sector represents one metric; bars extend outward proportional to the metric value. Our method achieves the lowest Forget R@k while maintaining competitive Retain R@k across all scenarios.}
    \label{fig:circular}
\end{figure*}

Across nine dataset/backbone combinations in Table~\ref{tab:q1_flickr30k_clipb32} and Appendix~\ref{sec:app_baselines_full}, EASE most consistently narrows both $F\text{-}R@k$ and $R\text{-}R@k$ to retrain. GradAscent erases thoroughly but collapses retain by conflating anchors with support; Ferrari and FUSED preserve retain but under-erase by leaving the Modality Anchor open. Raincloud plots in Appendix~\ref{sec:app_raincloud} confirm this at the embedding level.

\subsection{Effectiveness (Q2)}
\label{sec:q2}

For \textbf{Q2}, each row of Table~\ref{tab:ablation} removes exactly one component from the full method on CLIP-B/32 with Flickr30K across all three unlearning scenarios. BKE and GSD are load-bearing: removing BKE reopens the Modality Anchor (Eq.~\ref{eq:anchor_grad}, with text-side excision worse than visual-side due to a denser forget fingerprint in the language branch), and removing GSD conflates Unique-Subspace Anchors with retain support so over-erasure hits retain (Theorem~\ref{thm:retention}). Removing the Forget Lock admits a milder Temporal Re-anchoring drift since BKE and GSD still close the first two channels at $t{=}0$; only the full method closes the gap on both axes.

\begin{table*}[t]
\centering
\setlength{\tabcolsep}{3pt}
\setlength{\aboverulesep}{1pt}
\setlength{\belowrulesep}{1pt}
\renewcommand{\arraystretch}{1.0}
\caption{Ablation on CLIP-B/32 / Flickr30K across three unlearning scenarios (mean over three seeds). Top: forget-side metrics and MIA; bottom: retain-side metrics and alignment residual $\rho$ (Section~\ref{sec:bilateral}; $\rho{=}0$ matches retrain). For $F\text{-}R@k$, $R\text{-}R@k$, $\rho$, parentheticals give absolute gap to retrain with \up{green} closer and \down{red} farther; for MIA, \gap{val} is the gap to retrain. Best in \best{red}, runner-up in \second{blue} among variants. $^\dagger$visual-only, $^\ddagger$text-only.}
\label{tab:ablation}
\resizebox{\linewidth}{!}{
\begin{tabular}{l|cccc|cccc|cccc}
\Xhline{1.2pt}
\rowcolor{CadetBlue!20}
\textbf{Variant} & \multicolumn{4}{c|}{\textbf{Client Unlearning}} & \multicolumn{4}{c|}{\textbf{Class Unlearning}} & \multicolumn{4}{c}{\textbf{Sample Unlearning}} \\
\cmidrule(lr){2-5}\cmidrule(lr){6-9}\cmidrule(lr){10-13}
\rowcolor{CadetBlue!20}
 & $F\text{-}R@1\;\downarrow$ & $F\text{-}R@5\;\downarrow$ & $F\text{-}R@10\;\downarrow$ & $MIA$ & $F\text{-}R@1\;\downarrow$ & $F\text{-}R@5\;\downarrow$ & $F\text{-}R@10\;\downarrow$ & $MIA$ & $F\text{-}R@1\;\downarrow$ & $F\text{-}R@5\;\downarrow$ & $F\text{-}R@10\;\downarrow$ & $MIA$ \\
\Xhline{1.2pt}
\underline{\texttt{EASE}} & \best{0.3} \down{0.2} & \best{1.0} \up{0.5} & \best{2.2} \up{0.1} & 16.9 \gap{0.2} & \best{0.3} \down{0.1} & \best{1.5} \down{0.8} & \best{2.6} \down{1.2} & 25.4 \gap{1.2} & \best{0.1} & \best{2.2} \down{1.1} & \best{5.5} \down{3.1} & 16.7 \\
\rowcolor{gray!10}
w/o BKE$^\dagger$ & 28.5 \down{28.4} & 52.4 \down{50.9} & 65.8 \down{63.5} & \second{36.8} \gap{20.1} & 24.8 \down{24.6} & 46.2 \down{45.5} & 58.7 \down{57.3} & \second{43.2} \gap{19.0} & 22.6 \down{22.5} & 44.3 \down{43.2} & 57.2 \down{54.8} & \second{32.5} \gap{15.8} \\
w/o BKE$^\ddagger$ & 35.7 \down{35.6} & 62.5 \down{61.0} & 74.8 \down{72.5} & \best{38.5} \gap{21.8} & 30.5 \down{30.3} & 55.8 \down{55.1} & 68.5 \down{67.1} & \best{45.6} \gap{21.4} & 28.4 \down{28.3} & 53.6 \down{52.5} & 66.4 \down{64.0} & \best{34.1} \gap{17.4} \\
\rowcolor{gray!10}
w/o GSD & 18.4 \down{18.3} & 42.5 \down{41.0} & 56.3 \down{54.0} & 32.4 \gap{15.7} & 15.6 \down{15.4} & 35.8 \down{35.1} & 49.2 \down{47.8} & 40.8 \gap{16.6} & 14.2 \down{14.1} & 32.5 \down{31.4} & 45.8 \down{43.4} & 30.6 \gap{13.9} \\
w/o Lock & \second{4.2} \down{4.1} & \second{9.8} \down{8.3} & \second{15.6} \down{13.3} & 19.5 \gap{2.8} & \second{3.5} \down{3.3} & \second{7.8} \down{7.1} & \second{12.4} \down{11.0} & 27.8 \gap{3.6} & \second{2.8} \down{2.7} & \second{6.5} \down{5.4} & \second{10.8} \down{8.4} & 18.6 \gap{1.9} \\
\hline
\rowcolor{gray!10}
\textit{Retrain} & 0.1 & 1.5 & 2.3 & 16.7 & 0.2 & 0.7 & 1.4 & 24.2 & 0.1 & 1.1 & 2.4 & 16.7 \\
\Xhline{1.2pt}
\end{tabular}}

\vspace{1pt}

\resizebox{\linewidth}{!}{
\begin{tabular}{l|cccc|cccc|cccc}
\Xhline{1.2pt}
\rowcolor{CadetBlue!20}
\textbf{Variant} & $R\text{-}R@1\;\uparrow$ & $R\text{-}R@5\;\uparrow$ & $R\text{-}R@10\;\uparrow$ & $\rho\;\downarrow$ & $R\text{-}R@1\;\uparrow$ & $R\text{-}R@5\;\uparrow$ & $R\text{-}R@10\;\uparrow$ & $\rho\;\downarrow$ & $R\text{-}R@1\;\uparrow$ & $R\text{-}R@5\;\uparrow$ & $R\text{-}R@10\;\uparrow$ & $\rho\;\downarrow$ \\
\Xhline{1.2pt}
\underline{\texttt{EASE}} & \best{86.5} \down{4.2} & \best{99.2} \down{0.7} & \best{99.8} \down{0.2} & \best{0.08} \down{0.08} & \best{80.8} \down{8.6} & \best{96.6} \down{3.3} & \best{99.3} \down{0.7} & \best{0.07} \down{0.07} & \best{79.4} \down{1.2} & \best{97.7} & \best{99.2} \down{0.2} & \best{0.06} \down{0.06} \\
\rowcolor{gray!10}
w/o BKE$^\dagger$ & 65.2 \down{25.5} & 89.4 \down{10.5} & 94.1 \down{5.9} & 0.65 \down{0.65} & 74.5 \down{14.9} & 93.8 \down{6.1} & 97.6 \down{2.4} & 0.62 \down{0.62} & 71.5 \down{9.1} & 94.8 \down{2.9} & 98.0 \down{1.4} & 0.58 \down{0.58} \\
w/o BKE$^\ddagger$ & 63.8 \down{26.9} & 88.1 \down{11.8} & 93.2 \down{6.8} & 0.72 \down{0.72} & 73.1 \down{16.3} & 92.7 \down{7.2} & 96.9 \down{3.1} & 0.69 \down{0.69} & 70.2 \down{10.4} & 93.9 \down{3.8} & 97.3 \down{2.1} & 0.65 \down{0.65} \\
\rowcolor{gray!10}
w/o GSD & 24.6 \down{66.1} & 52.3 \down{47.6} & 68.9 \down{31.1} & 0.68 \down{0.68} & 32.4 \down{57.0} & 62.5 \down{37.4} & 78.3 \down{21.7} & 0.65 \down{0.65} & 38.6 \down{42.0} & 68.4 \down{29.3} & 82.5 \down{16.9} & 0.61 \down{0.61} \\
w/o Lock & \second{69.8} \down{20.9} & \second{93.3} \down{6.6} & \second{96.5} \down{3.5} & \second{0.22} \down{0.22} & \second{80.0} \down{9.4} & \second{96.3} \down{3.6} & \second{99.0} \down{1.0} & \second{0.20} \down{0.20} & \second{78.6} \down{2.0} & \second{97.2} \down{0.5} & \second{98.8} \down{0.6} & \second{0.18} \down{0.18} \\
\hline
\rowcolor{gray!10}
\textit{Retrain} & 90.7 & 99.9 & 100.0 & 0 & 89.4 & 99.9 & 100.0 & 0 & 80.6 & 97.7 & 99.4 & 0 \\
\Xhline{1.2pt}
\end{tabular}}
\end{table*}

\subsection{Sensitivity (Q3)}
\label{sec:q3}

We sweep $\delta$, $\alpha$, $K$, and $\beta$ on Flickr30K/CLIP-B/32 (full curves: Figure~\ref{fig:sensitivity}). The $\delta$ plateau gives a stable anchor/support boundary, the monotone $\alpha$ response confirms re-anchoring closure without aggressive lock tuning, and $K$/$\beta$ sweeps verify robustness to federation scale and heterogeneity.

\section{Conclusion}
\label{sec:conclusion}

We studied federated multimodal unlearning, identifying three reconstruction anchors (cross-modal coupling, forget-exclusive directions, post-unlearning FedAvg drift) and closing them through \underline{\texttt{EASE}}'s bilateral excision, subspace decomposition, and direction-selective Forget Lock; EASE moves forget/retain closer to retrain than evaluated baselines. The Anchor Principle is articulated for image-text dual-encoder retrieval (CLIP/SigLIP-style); late-fusion, generative, and multimodal-LLM extensions remain future work.

\begin{ack}

\end{ack}


\clearpage

\appendix

\etocdepthtag.toc{appendix}
\etocsettagdepth{mainbody}{none}
\etocsettagdepth{appendix}{subsection}
\etocsettocstyle{\section*{Appendix Contents}\vspace{-6pt}\noindent\hrulefill\par\vspace{4pt}}{\par\noindent\hrulefill\vspace{10pt}}
\tableofcontents

\clearpage
\section{Related Work}
\label{app:related_work}

\paragraph{Federated Unlearning.}
Federated unlearning reconstructs the influence of a withdrawn client without retraining. Historical reconstruction calibrates remaining clients' updates~\cite{Liu2021IWQOS, Cao2023SP, Liu2024ICML}; curvature- and direction-based approaches localize the forget contribution via second-order signals (SoUL~\cite{Jia2024EMNLP}) or orthogonal subspace projection (FedOSD~\cite{Pan2025AAAI}); task-vector methods (NoT~\cite{Khalil2025CVPR}) subtract a directional vector of the forget client's accumulated update; recent extensions reduce cost through nonlinear functional refinement (FFMU~\cite{Che2023ICML}) or sparse-adapter overwriting (FUSED~\cite{Zhong2025CVPR}); and verification-oriented studies~\cite{Gao2024TDSC, Halimi2022ARXIV, Wang2022WWW, Wu2022ARXIV} provide complementary auditing tools. However, these methods assume a single-modality loss and a one-shot view of unlearning, and thus do not address the cross-modal reconstruction channel or continued-training drift in federated multimodal settings.

\paragraph{Centralized Machine Unlearning.}
Centralized methods assume a trusted curator with full data access. Exact approaches retrain affected data shards~\cite{Bourtoule2021SP}; approximate approaches edit a trained model via gradient ascent on the forget loss~\cite{Thudi2022USENIX, Golatkar2020CVPR}, influence-function subtraction~\cite{Koh2017ICML}, distillation from a retain-only teacher~\cite{Kurmanji2023NeurIPS}, or importance-weighted regularization on parameter coordinates~\cite{Kirkpatrick2017PNAS, Zenke2017ICML, Aljundi2018ECCV}. However, none of these methods is designed for unlearning that must remain valid under continued training over heterogeneous clients.

\paragraph{Multimodal Unlearning.}
A small but growing body of work targets unlearning in multimodal contrastive models. CLIP-specific erasers~\cite{Poppi2024ECCVa, Poppi2024ECCVb} remove undesired image-text associations via fine-tuning on a designated forget set, and recent multimodal unlearners~\cite{Li2026AAAI} acknowledge cross-modal coupling by issuing dual-branch updates. However, all these efforts are centralized and assume unrestricted access to raw pairs, so they do not transfer to the federated regime where the server only sees gradient summaries.

\paragraph{Gradient Subspace and Importance-weighted Regularization.}
Two long-standing families in continual learning are closely related to our subspace and lock components. Gradient subspace methods (GPM~\cite{Saha2021ICLR}, Adam-NSCL~\cite{Wang2021CVPR}, OGD~\cite{Farajtabar2020AISTATS}) confine new updates to the orthogonal complement of a single past-task subspace, while importance-weighted regularizers (EWC~\cite{Kirkpatrick2017PNAS}, SI~\cite{Zenke2017ICML}) weight a quadratic drift penalty by per-parameter importance scores. However, both families were conceived for centralized continual learning rather than unlearning, and neither separates forget-exclusive directions from retain-shared directions inside a forget subspace nor remains valid under continued federated training.

\definecolor{headerblue}{HTML}{DBD9E4}
\begin{table}[h]
\centering
\caption{Notation and Definitions.}
\label{tab:notations}
\definecolor{grayrow}{HTML}{F2F2F2}
\rowcolors{2}{grayrow}{white}
\setlength{\aboverulesep}{0pt}
\setlength{\belowrulesep}{0pt}
\setlength{\doublerulesep}{3pt}
\doublerulesepcolor{white}
\begin{tabular}{c||l}
\toprule
\rowcolor{headerblue}
\textbf{Notation} & \textbf{Definition} \\
\midrule
$K$ & The number of clients. \\
$\mathcal{D}_k$ & The private multimodal dataset for the $k$-th client. \\
$N_k$ & The number of image-text pairs for the $k$-th client. \\
$(x_v, x_t)$ & An image-text pair in $\mathcal{D}_k$. \\
$w_v,\, w_t$ & The trainable parameters for the visual and language branches. \\
$w$ & The joint parameter vector $[w_v;\, w_t]$. \\
$z_v,\, z_t$ & The visual and textual embeddings in the shared space. \\
$m$ & The dimension of the embedding space. \\
$\gamma$ & The temperature in the InfoNCE loss. \\
$\mathcal{L}_a$ & The symmetric InfoNCE alignment loss. \\
$J_v,\, J_t$ & The Jacobians $\partial z_v / \partial w_v$ and $\partial z_t / \partial w_t$. \\
$\mathcal{D}_f$ & The forget set whose influence is to be removed. \\
$\mathcal{D}_r$ & The retain set $\mathcal{D} \setminus \mathcal{D}_f$. \\
$w_n$ & The model after standard federated training. \\
$\tilde{w}$ & The retrain model obtained by training from scratch on $\mathcal{D}_r$. \\
$w^*$ & The unlearned model. \\
$\rho$ & Alignment residual, computed per forget pair and reported as the median over $\mathcal{D}_f$. \\
$\Delta w_k[t]$ & The parameter update of client $k$ at round $t$. \\
$H_k$ & The gradient update history of client $k$. \\
$G_f,\, G_r$ & The forget and retain gradient matrices. \\
$\Phi_f,\, \Phi_r$ & The orthonormal bases of $\mathcal{S}_f$ and $\mathcal{S}_r$. \\
$\mathcal{S}_f,\, \mathcal{S}_r$ & The forget and retain subspaces. \\
$p,\, q$ & The dimensions of $\mathcal{S}_f$ and $\mathcal{S}_r$. \\
$\tau_e$ & The energy threshold for SVD truncation. \\
$M$ & The cross-correlation matrix $\Phi_f^\top \Phi_r$. \\
$\cos\varphi_i$ & The entanglement coefficient of the $i$-th canonical direction. \\
$c_i$ & The $i$-th canonical direction of $\mathcal{S}_f$. \\
$\delta$ & The threshold separating removable anchor directions from retain support directions. \\
$B_u,\, B_e$ & The unique (anchor) and entangled (support) basis matrices. \\
$\mathcal{U},\, \mathcal{E}$ & The subspaces spanned by removable anchor directions and retain support directions. \\
$\Pi_{u,\mu}$ & The orthogonal projector onto the unique subspace $\mathcal{U}_\mu$, defined as $B_{u,\mu}(B_{u,\mu})^\top$. \\
$\mathcal{L}_f$ & The Forget Lock regularizer. \\
$\alpha$ & The Forget Lock regularization strength. \\
$\lr$ & The learning rate. \\
$G_u$ & The uniform upper bound on the projected alignment gradient $\|\Pi_{u,\mu}\nabla_{w_\mu}\mathcal{L}_a\|_2$. \\
\bottomrule
\end{tabular}
\end{table}

\section{Technical Appendix}
\label{sec:appendix}

\subsection{Notation Table}
\label{sec:app_notation}

We summarize the commonly used notation throughout this paper in Table~\ref{tab:notations}.

\subsection{Dataset Details}
\label{sec:app_datasets}

We evaluate on three standard image-text retrieval benchmarks.

\paragraph{Flickr30K \venue{TACL'14}~\cite{Young2014TACL}.}
Flickr30K contains $31{,}783$ images collected from Flickr, each annotated with five human-written English captions, for a total of $158{,}915$ image-text pairs. Following the Karpathy split~\cite{Karpathy2015CVPR}, we use $29{,}000$ images for training, $1{,}014$ for validation, and $1{,}000$ for the test set used to compute Recall@$k$; the remaining $769$ images are not assigned to any split and are omitted from our evaluation, consistent with prior retrieval work using this split. The captions cover diverse everyday scenes, making Flickr30K a standard benchmark for general-purpose cross-modal retrieval.

\paragraph{MS COCO Captions \venue{ECCV'14}~\cite{Lin2014ECCV,Chen2015ARXIV}.}
MS COCO is a large-scale dataset comprising $123{,}287$ images of common objects in context, each paired with five human-written captions. We again adopt the Karpathy split, using $113{,}287$ training images, $5{,}000$ validation images, and the $5{,}000$-image test split for retrieval evaluation. Compared with Flickr30K, MSCOCO covers a broader object taxonomy and a wider caption style, and stresses retrieval on a larger candidate pool.

\paragraph{TextCaps \venue{ECCV'20}~\cite{Sidorov2020ECCV}.}
TextCaps extends the MSCOCO-style caption format with scene-text understanding: each of the $28{,}408$ images contains visible textual content (signs, logos, product labels), and captions are written to reference this text where appropriate. We use the official splits of $21{,}953$ training, $3{,}166$ validation, and $3{,}289$ test images. TextCaps complements the other two datasets by stressing fine-grained alignment between language and visual text, a regime where forget-set leakage is particularly sensitive to bilateral closure of the Modality Anchor.

\paragraph{Federated partition.}
For every dataset, training images (together with their captions) are distributed across $K$ clients via a Dirichlet partition with concentration parameter $\beta$ over pseudo-cluster labels obtained by KMeans on concatenated visual-textual embeddings of the pretrained backbone (Section~\ref{par:scenario_setup}). The default configuration is $K{=}10$ and $\beta{=}0.5$, yielding moderately non-IID client shards; we sweep both parameters in Figure~\ref{fig:sensitivity_fed}. Forget sets are constructed on top of this partition according to the unlearning scenario, as described in Appendix~\ref{sec:app_impl}.

\subsection{Baseline Descriptions}
\label{sec:app_baselines}

This section provides a comprehensive overview of the nine counterparts employed in our experiments. We group them by methodological category.

\paragraph{Federated unlearning methods.}

\begin{itemize}[leftmargin=*]
    \item \textbf{FedEraser} \venue{IWQOS'21}~\cite{Liu2021IWQOS}. FedEraser is one of the earliest federated unlearning methods. It reconstructs an approximation of the retrained model by combining cached historical parameter updates from the remaining clients, replaying them through a calibrated aggregation procedure. This avoids the cost of full retraining while producing a model that behaves similarly to one never trained on the forget client's data.

    \item \textbf{FedRecover} \venue{S\&P'23}~\cite{Cao2023SP}. FedRecover originally targets recovery from poisoning attacks but is naturally adapted to federated unlearning. It stores per-round gradient and model information, then uses this history to estimate what the global model would have looked like if the forget client had been absent, applying a low-cost correction instead of rerunning training from scratch.

    \item \textbf{Ferrari} \venue{ICML'24}~\cite{Liu2024ICML}. Ferrari performs rapid approximate retraining guided by cached client gradients. The server constructs a compact representation of each client's contribution and selectively reverses the forget client's influence, achieving significantly lower wall-clock time than retrain while preserving retrieval accuracy on the retain set.

    \item \textbf{FedOSD} \venue{AAAI'25}~\cite{Pan2025AAAI}. FedOSD applies orthogonal subspace decomposition to client update directions. It identifies the subspace spanned by the forget client's updates and projects the global model away from that subspace, selectively removing the forget component. The method is close in spirit to ours, but operates on single-modality models and does not separate removable anchor directions from retain support directions inside the forget subspace.

    \item \textbf{NoT} \venue{CVPR'25}~\cite{Khalil2025CVPR}. NoT treats each client's historical contribution as a task vector and performs unlearning by subtracting the forget client's task vector from the global model. This yields aggressive erasure on the forget set but frequently damages retained knowledge, since the task vector subtraction is applied uniformly without separating shared and unique components.

    \item \textbf{SoUL} \venue{EMNLP'24}~\cite{Jia2024EMNLP}. SoUL employs second-order curvature information to localize the forget contribution. Using approximate Hessian or Fisher-information-based signals, it estimates which parameters are most relevant to the forget data and perturbs only those. While more targeted than gradient-based baselines, it incurs nontrivial computational cost due to second-order estimation.

    \item \textbf{FFMU} \venue{ICML'23}~\cite{Che2023ICML}. FFMU accelerates federated unlearning by modeling local unlearning models as output functions of a Nemytskii operator and leveraging nonlinear functional analysis to refine them. It achieves competitive unlearning quality with reduced computation, but operates on single-modality models and does not address the Modality Anchor present in multimodal architectures.

    \item \textbf{FUSED} \venue{CVPR'25}~\cite{Zhong2025CVPR}. FUSED performs reversible federated unlearning by training independent sparse adapters that overwrite forget-specific knowledge. It identifies the most affected layers through sensitivity analysis and retrains only sparse adapters on those layers, reducing unlearning cost while keeping the process reversible. Like other federated unlearning methods, it operates on single-modality models.
\end{itemize}

\paragraph{Centralized unlearning methods adapted to the federated setting.}

\begin{itemize}[leftmargin=*]
    \item \textbf{GradAscent} \venue{USENIX Security'22}~\cite{Thudi2022USENIX}. Gradient ascent is a simple and widely used unlearning baseline that performs gradient updates on the forget set using the negative of the original training loss. We adapt it to the federated setting by running localized gradient ascent on the forget client followed by standard federated aggregation on the remaining clients. Although straightforward to implement, it often overshoots and degrades retain-set performance.
\end{itemize}

\subsection{Evaluation Metrics}
\label{sec:app_metrics}

We report four groups of metrics to capture unlearning completeness, retention integrity, privacy leakage, and efficiency.

\paragraph{Retrieval accuracy (Recall@$k$) \venue{CVPR'15}~\cite{Karpathy2015CVPR}.}
Recall@$k$ is the standard metric for cross-modal retrieval. Given a query from one modality, it measures whether the ground-truth match in the other modality is ranked within the top-$k$ retrieved items. We report Recall@$k$ for $k \in \{1, 5, 10\}$, averaged over both image-to-text and text-to-image directions. Three splits are evaluated: (i) the forget set $\bar{\mathcal{D}}_f$ for unlearning completeness, where lower values indicate that the model has forgotten the target pairs; (ii) the retain set $\bar{\mathcal{D}}_r$ for retention integrity, where higher values indicate that retained knowledge is preserved; and (iii) the held-out test set for generalization, where higher values indicate that unlearning does not harm out-of-sample performance.

\paragraph{Membership inference attack (MIA) \venue{S\&P'17}~\cite{Shokri2017SP}.}
We adopt the shadow-model MIA protocol of Shokri et al.: for each target configuration, we independently train multiple shadow models with the same architecture, training recipe, and data distribution as the target, on disjoint random splits with known member/non-member labels. Per-sample loss and logit statistics from shadow models are used to train a binary attack classifier, which is then applied to the target unlearned model to predict whether each forget-set sample was an original training member. The reported MIA score is the attack's true-positive rate on the forget set, i.e., the fraction of forget samples that the classifier labels as members. A well-unlearned model maps forget samples into the loss/logit regime of non-members, so its MIA score moves toward the retrain reference; larger values indicate residual memorization that an attacker can exploit. We therefore measure privacy leakage by the gap $|\mathrm{MIA}_{\text{method}} - \mathrm{MIA}_{\text{retrain}}|$, with a small gap meaning the unlearned model is close to a model that never saw the forget data under this attack. Absolute MIA values can fall below $50\%$ because the shadow-model classifier inherits the class prior and loss-distribution skew of the training pool; the operationally meaningful quantity is the gap to the retrain reference, not the distance to $50\%$.

\paragraph{LiRA true-positive rate at $1\%$ false-positive rate \venue{S\&P'22}~\cite{Carlini2022SP}.}
Aggregate MIA accuracy can hide worst-case leakage because an attacker usually cares about confidently identifying a small fraction of training members rather than performing well on average. We therefore also report the likelihood-ratio attack (LiRA) of Carlini et al.\ evaluated on the same shadow-model pool. For each candidate sample, LiRA computes a per-example likelihood-ratio statistic between two Gaussian distributions fitted over shadow-model logits (one for models that contain the sample, one for models that do not); the threshold is calibrated so that the false-positive rate on held-out non-members equals $1\%$, and we report the resulting true-positive rate ($\mathrm{TPR}@\mathrm{FPR}{=}1\%$) on the forget set. Values near the $1\%$ random-guess line indicate that forget samples are close to non-members under a targeted low-FPR attacker, while larger values indicate residual memorization that the aggregate MIA score can underestimate.

\paragraph{Communication cost~\cite{Kairouz2021FNT}.}
We measure the total payload transmitted between clients and the server during the unlearning procedure, in megabytes. This includes model broadcasts, client updates, and any auxiliary tensors such as unique-subspace bases $B_{u,v}$ and $B_{u,t}$. Lower values indicate lower bandwidth overhead for deployment.

\subsection{Implementation Details}
\label{sec:app_impl}

\paragraph{Model architecture.}
We use frozen pretrained multimodal encoders in bf16 as the backbone. Two types of trainable modules are added in fp32: (1) LoRA adapters inserted into the attention layers of each encoder, and (2) two-layer MLP projectors that map each encoder's output to a shared 256-dimensional embedding space (one projector per modality).

\paragraph{Gradient delta construction.}
The per-round update $\Delta w_k[t]$ is computed differently for the two module types. For the projectors, we directly compute the parameter difference between the local and global states. For LoRA layers, each adapter contributes a low-rank modification $BA$ to the frozen weight; we compute the effective weight delta as the difference in the low-rank product before and after local training, rather than differencing $B$ and $A$ separately, since only the product has a well-defined interpretation in the full parameter space. These deltas are then column-stacked to form the gradient matrices $G_f$ and $G_r$ (Section~\ref{sec:subspace}).

\paragraph{Unlearning scenario setup.}
\label{par:scenario_setup}
For client unlearning, a designated client $k^*$ withdraws from the federation; all its historical updates form $G_f$, and $k^*$ does not participate in subsequent rounds. For sample unlearning, the server specifies a global forget set $\mathcal{D}_f$ whose samples typically span multiple clients (we partition $\mathcal{D}_f$ uniformly across clients at data-splitting time); each client with overlapping samples performs one epoch of training on its local forget subset and uploads the resulting gradient, which is column-stacked into $G_f$, and every client subsequently continues training on $\mathcal{D}_k \setminus \mathcal{D}_f$. For class unlearning, since image-text retrieval datasets (Flickr30K, COCO) lack categorical labels, we pre-compute pseudo-class assignments by running KMeans on concatenated visual-textual embeddings of the pretrained backbone and distribute each pseudo-class uniformly across clients; the server specifies the target cluster via a textual descriptor matched to the nearest cluster centroid, every client that holds samples in that cluster contributes a per-client gradient to form $G_f$, and all clients exclude the target-cluster samples from subsequent training.

\subsection{Hyperparameter Sensitivity}
\label{sec:app_sensitivity}

We sweep the entanglement threshold $\delta$ and the Forget Lock strength $\alpha$ on Flickr30K with CLIP-B/32 across the three unlearning scenarios. For each configuration we run three independent trials with different random seeds and report mean and standard deviation; the resulting curves are shown in Figure~\ref{fig:sensitivity}.

\begin{figure}[h]
    \centering
    \includegraphics[width=\linewidth]{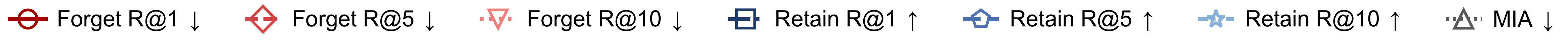}

    \subfloat[Client / $\delta$ sweep]{\includegraphics[width=0.32\linewidth]{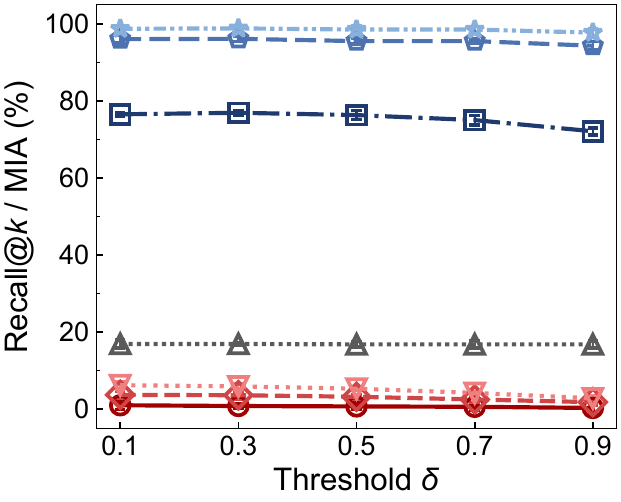}\label{fig:sens_delta_client}}
    \hfill
    \subfloat[Class / $\delta$ sweep]{\includegraphics[width=0.32\linewidth]{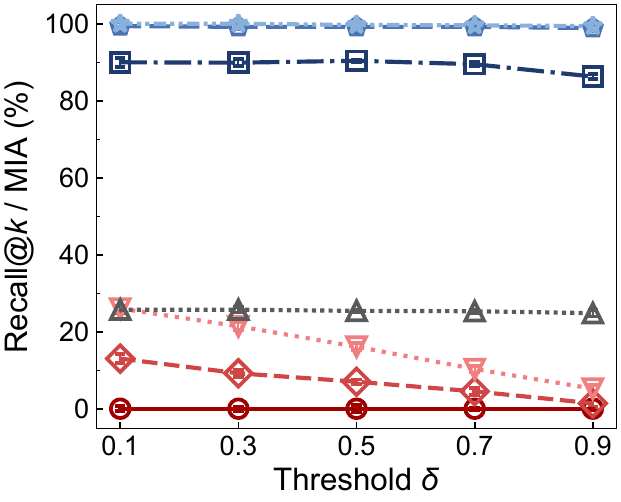}\label{fig:sens_delta_class}}
    \hfill
    \subfloat[Sample / $\delta$ sweep]{\includegraphics[width=0.32\linewidth]{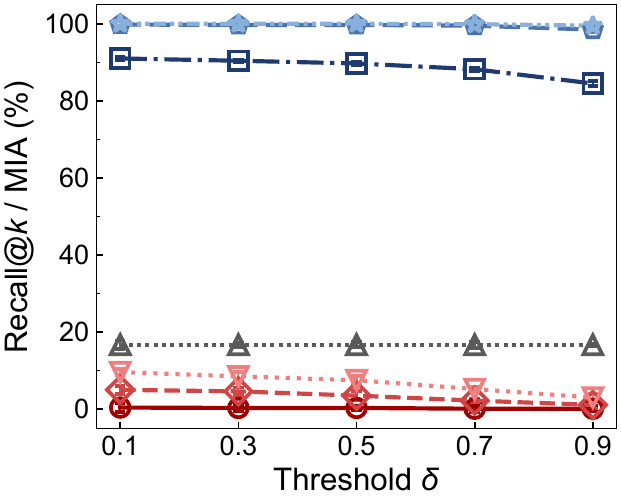}\label{fig:sens_delta_sample}}\\
    \subfloat[Client / $\alpha$ sweep]{\includegraphics[width=0.32\linewidth]{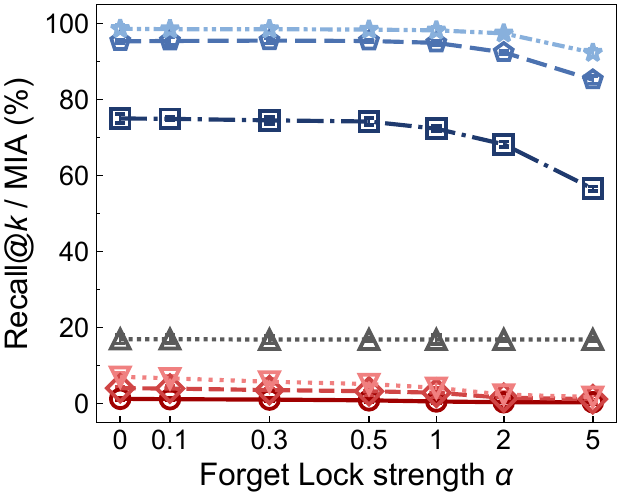}\label{fig:sens_alpha_client}}
    \hfill
    \subfloat[Class / $\alpha$ sweep]{\includegraphics[width=0.32\linewidth]{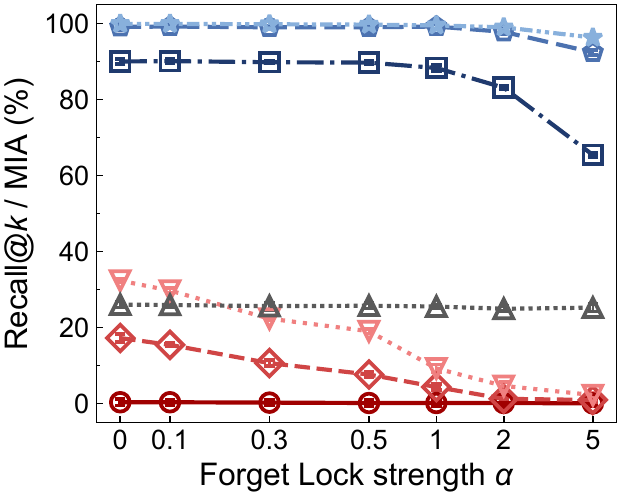}\label{fig:sens_alpha_class}}
    \hfill
    \subfloat[Sample / $\alpha$ sweep]{\includegraphics[width=0.32\linewidth]{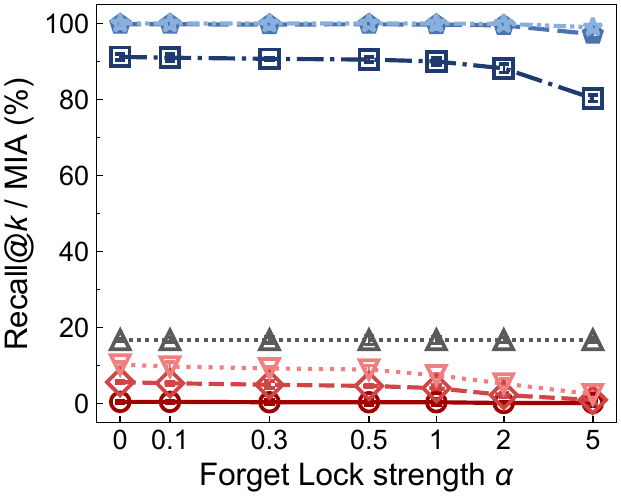}\label{fig:sens_alpha_sample}}
    \caption{Sensitivity of \underline{\texttt{EASE}} to the entanglement threshold $\delta$ (top row) and the Forget Lock strength $\alpha$ (bottom row) on Flickr30K with CLIP-B/32 across three unlearning scenarios. Each point is the mean over three independent runs with different random seeds; error bars indicate the standard deviation.}
    \label{fig:sensitivity}
\end{figure}

\paragraph{Findings on $\delta$ (top row).}
The $\delta$ sweep exhibits a wide plateau where Forget R@$k$ is already low while Retain R@$k$ stays close to the retrain reference, with noticeable Retain degradation only at the very high end of $\delta$. This plateau empirically matches the $\delta^2$-controlled collateral energy in Theorem~\ref{thm:retention} and indicates that the anchor/support boundary remains stable across a broad operating regime: small $\delta$ leaves Unique-Subspace Anchors partially open, while excessive $\delta$ misclassifies retain support as anchor and degrades retention.

\paragraph{Findings on $\alpha$ (bottom row).}
The Forget Lock strength exhibits a clearer trade-off. The projection alone already achieves substantial closure, so a moderate Lock suppresses Temporal Re-anchoring at negligible Retain cost; pushing $\alpha$ too high over-constrains legitimate updates in $\mathcal{U}_\mu^\perp$ and degrades Retain. This matches the durability bound of Theorem~\ref{thm:durability}, and shows that the Temporal Re-anchoring closure is delivered by a broad band of $\alpha$ values rather than a single operating point.

\paragraph{Findings on $K$ and $\beta$ (federated configuration).}
Forget R@$k$ remains low across the swept $K$ and $\beta$ ranges, consistent with bilateral projection suppressing the Modality Anchor under different federation sizes and heterogeneity levels. Retain R@$k$ varies modestly but remains stable across both sweeps, suggesting limited sensitivity to these federated configuration choices in our setting.

\begin{figure}[h]
    \centering
    \includegraphics[width=\linewidth]{figures/sensitivity/sensitivity_legend.pdf}

    \subfloat[Client / $K$ sweep]{\includegraphics[width=0.32\linewidth]{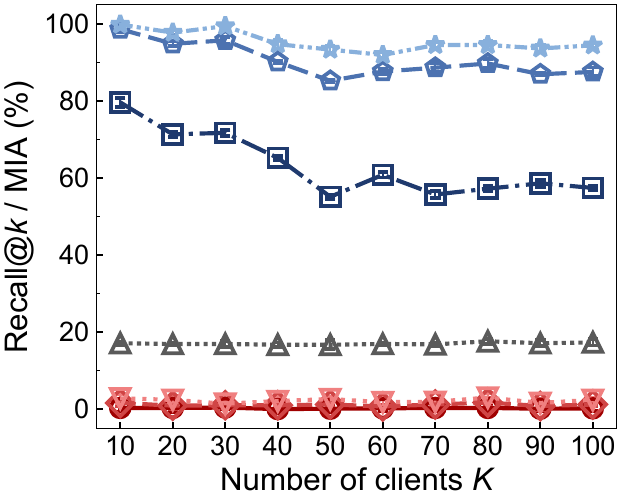}\label{fig:sens_K_client}}
    \hfill
    \subfloat[Class / $K$ sweep]{\includegraphics[width=0.32\linewidth]{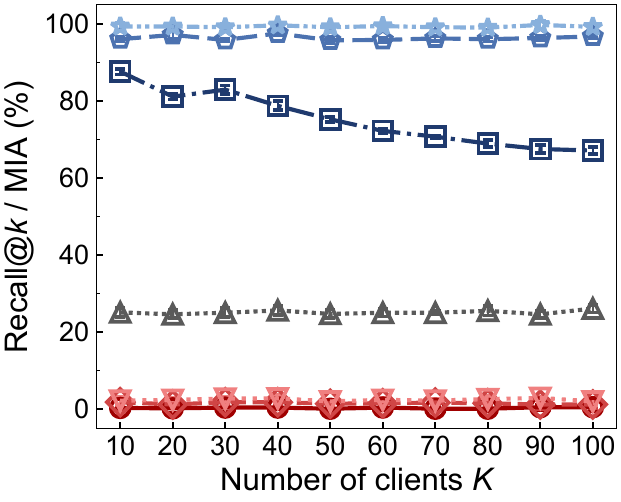}\label{fig:sens_K_class}}
    \hfill
    \subfloat[Sample / $K$ sweep]{\includegraphics[width=0.32\linewidth]{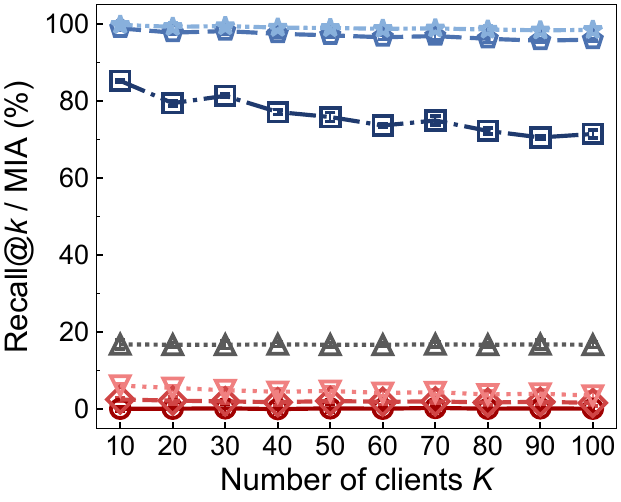}\label{fig:sens_K_sample}}\\
    \subfloat[Client / $\beta$ sweep]{\includegraphics[width=0.32\linewidth]{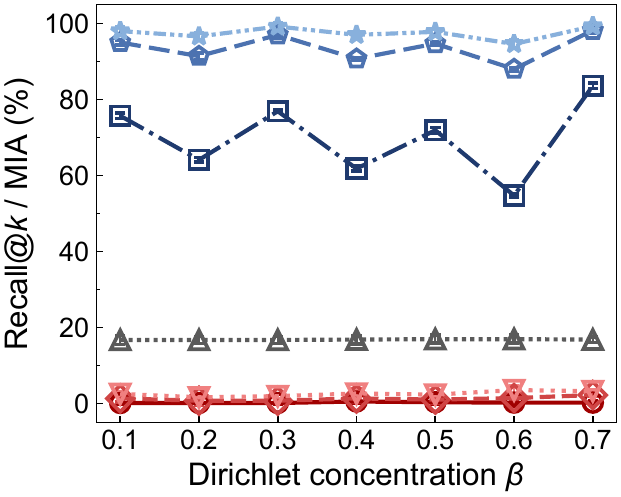}\label{fig:sens_beta_client}}
    \hfill
    \subfloat[Class / $\beta$ sweep]{\includegraphics[width=0.32\linewidth]{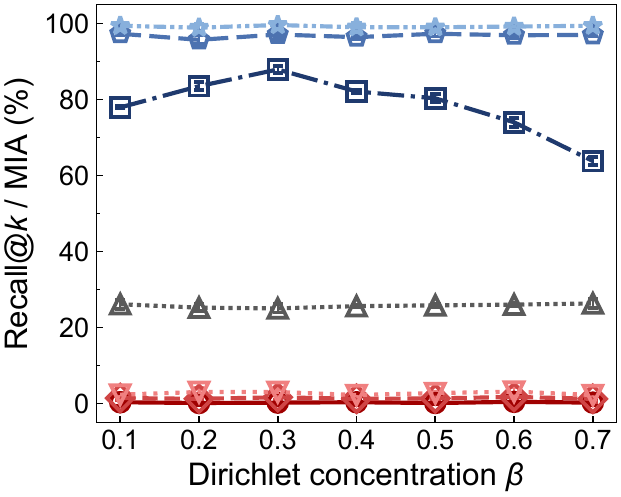}\label{fig:sens_beta_class}}
    \hfill
    \subfloat[Sample / $\beta$ sweep]{\includegraphics[width=0.32\linewidth]{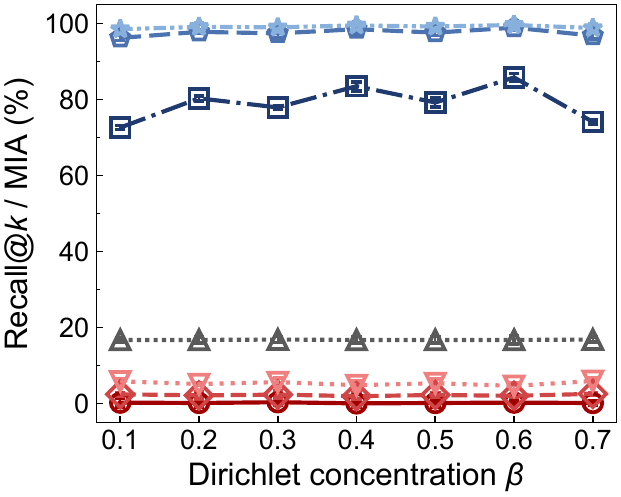}\label{fig:sens_beta_sample}}
    \caption{Sensitivity of \underline{\texttt{EASE}} to the federation size $K$ (top row) and the Dirichlet concentration $\beta$ (bottom row) on Flickr30K with CLIP-B/32, evaluated under three unlearning scenarios. Each point reports the mean over three independent runs with different random seeds; error bars indicate the standard deviation.}
    \label{fig:sensitivity_fed}
\end{figure}

\paragraph{Backbone transfer.}
To check whether the $\delta$ plateau and the $\alpha$ stable band are specific to a single backbone, we repeat both sweeps with CLIP-L/14 on Flickr30K. Figure~\ref{fig:sensitivity_l14} shows the same qualitative behaviour observed on CLIP-B/32: Forget R@$k$ remains low across the swept ranges while Retain R@$k$ stays close to the retrain reference, with degradation only at the high end of $\delta$ and $\alpha$.

\begin{figure}[h]
    \centering
    \includegraphics[width=\linewidth]{figures/sensitivity/sensitivity_legend.pdf}

    \subfloat[Client / $\delta$ sweep]{\includegraphics[width=0.32\linewidth]{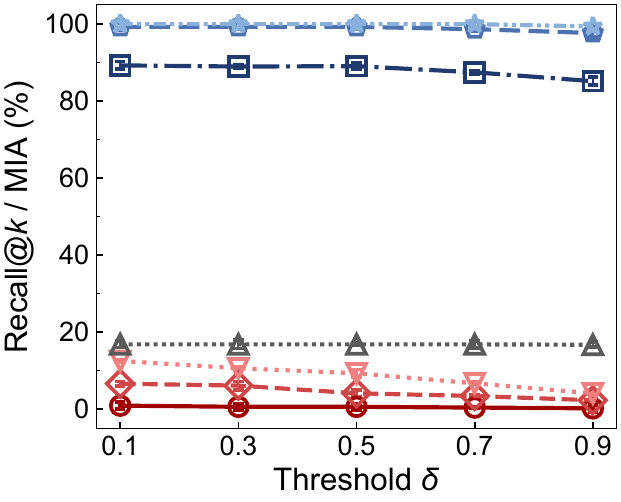}\label{fig:sens_l14_delta_client}}
    \hfill
    \subfloat[Class / $\delta$ sweep]{\includegraphics[width=0.32\linewidth]{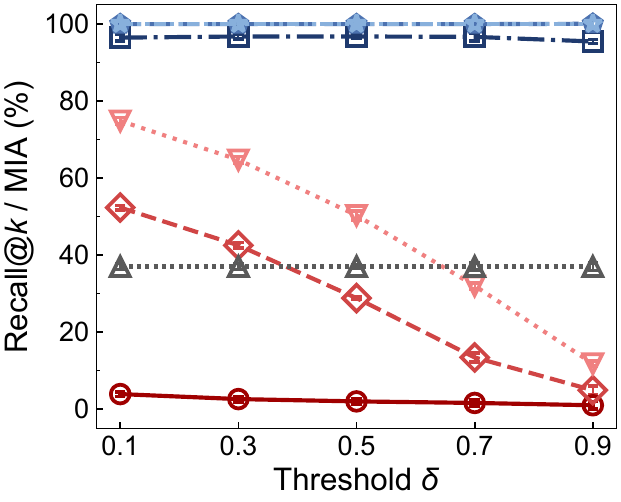}\label{fig:sens_l14_delta_class}}
    \hfill
    \subfloat[Sample / $\delta$ sweep]{\includegraphics[width=0.32\linewidth]{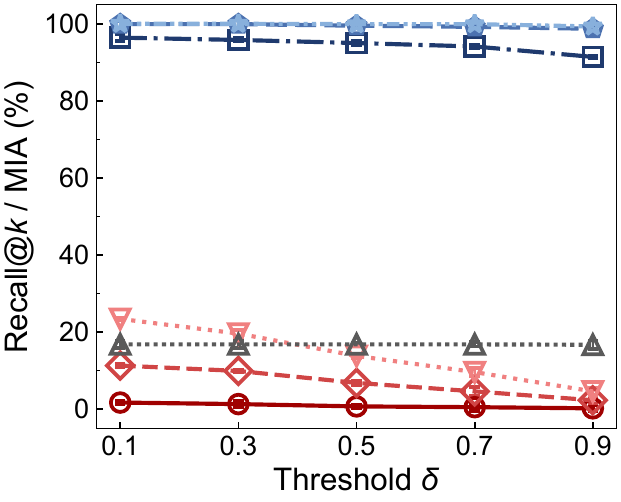}\label{fig:sens_l14_delta_sample}}\\
    \subfloat[Client / $\alpha$ sweep]{\includegraphics[width=0.32\linewidth]{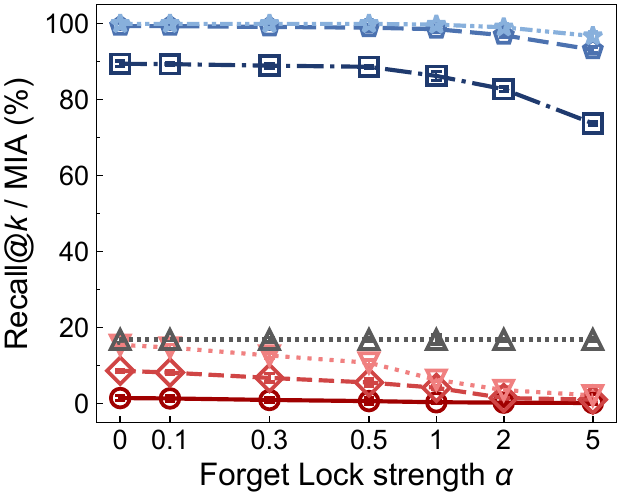}\label{fig:sens_l14_alpha_client}}
    \hfill
    \subfloat[Class / $\alpha$ sweep]{\includegraphics[width=0.32\linewidth]{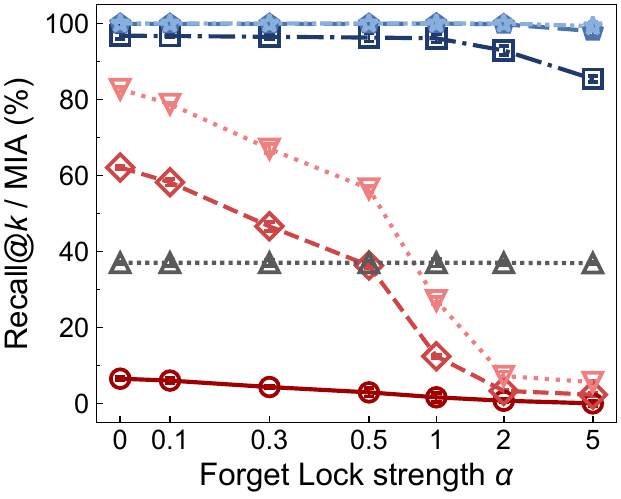}\label{fig:sens_l14_alpha_class}}
    \hfill
    \subfloat[Sample / $\alpha$ sweep]{\includegraphics[width=0.32\linewidth]{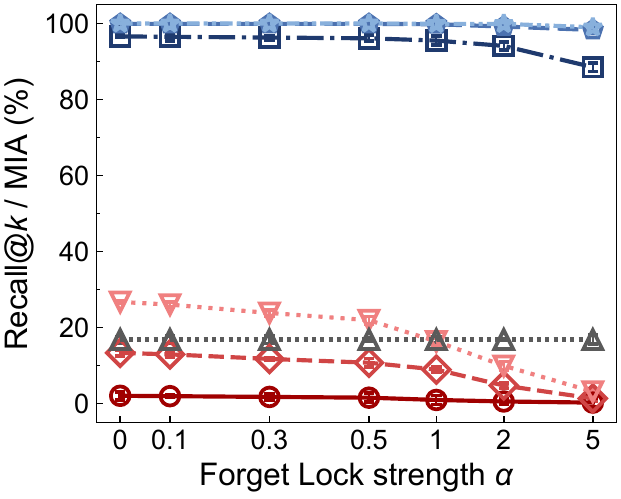}\label{fig:sens_l14_alpha_sample}}
    \caption{Sensitivity of \underline{\texttt{EASE}} to the entanglement threshold $\delta$ (top row) and the Forget Lock strength $\alpha$ (bottom row) on Flickr30K with CLIP-L/14 across three unlearning scenarios. Each point reports the mean over three independent runs with different random seeds; error bars indicate the standard deviation.}
    \label{fig:sensitivity_l14}
\end{figure}

\paragraph{Class scenario asymmetry on CLIP-L/14.}
The class panels in Figure~\ref{fig:sensitivity_l14} collapse only at $\delta\!\geq\!0.7$, while client and sample already reach the floor at small $\delta$. Two effects compound. First, class unlearning aggregates the gradient signal of an entire target cluster, so the empirical forget covariance has higher effective rank and a flatter energy spectrum than per-client or per-sample targets, and a larger $\delta$ is needed to cover the same fraction of cluster-correlated energy as removable anchor directions. Second, CLIP-L/14's wider feature space, 768 versus 512 channels, distributes cluster-correlated alignment across more redundant directions, so a small-$\delta$ truncation captures only the dominant pathway and leaves residual alignment that still supports R@5 and R@10 retrieval. Client and sample targets are intrinsically low-rank subspaces and therefore remain less sensitive to either factor across backbones.

\subsection{Visualization of Unlearning Effect}
\label{sec:app_raincloud}

We visualize per-pair cosine similarity distributions under the original model $w_n$, our method $w^*$, and the retrain reference $\tilde{w}$ in Figure~\ref{fig:raincloud_similarity}. For each image-text pair, we compute $\cos(z_v^i, z_t^i)$ between the visual and textual embeddings on the shared projector output; higher values indicate stronger cross-modal alignment, and an unlearned model is expected to produce similarities close to those of the retrain reference on forget pairs and close to the original on retain pairs. We report these per-pair similarities for three unlearning scenarios and two data splits. Each panel combines a half-violin, a strip plot, and a box plot, so the reader sees the overall shape, the individual pair values, and the median with quartiles in a single view.

\begin{figure}[h]
    \centering
    \subfloat[client forget]{\includegraphics[width=0.32\linewidth]{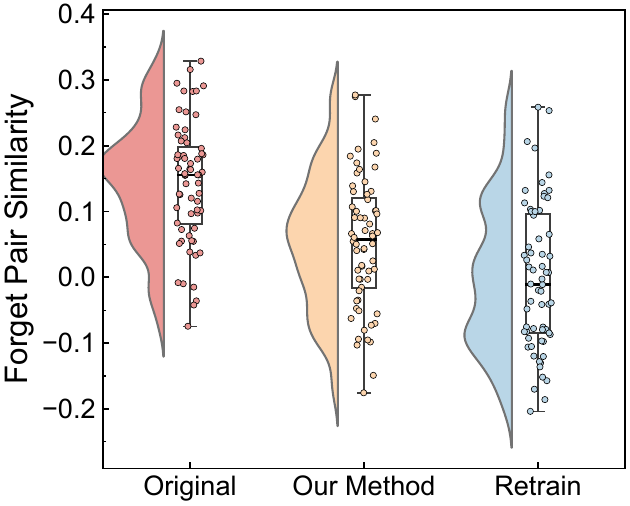}\label{fig:rc_client_forget}}
    \hfill
    \subfloat[class forget]{\includegraphics[width=0.32\linewidth]{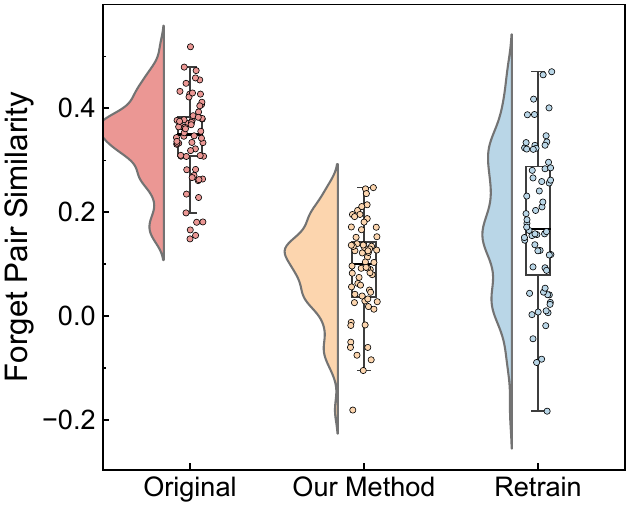}\label{fig:rc_class_forget}}
    \hfill
    \subfloat[sample forget]{\includegraphics[width=0.32\linewidth]{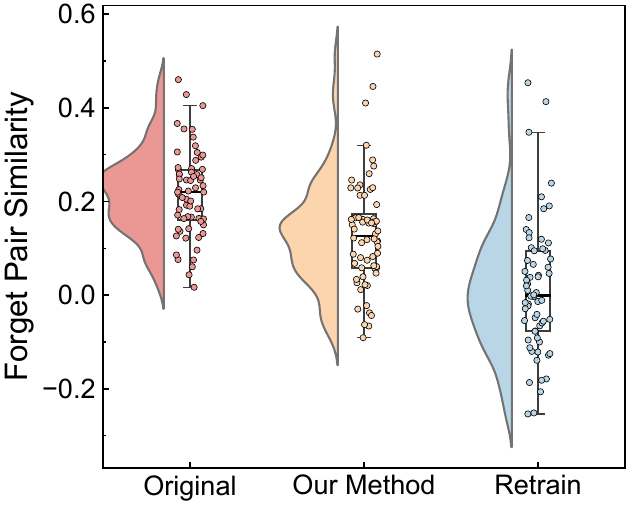}\label{fig:rc_sample_forget}}\\
    \subfloat[client retain]{\includegraphics[width=0.32\linewidth]{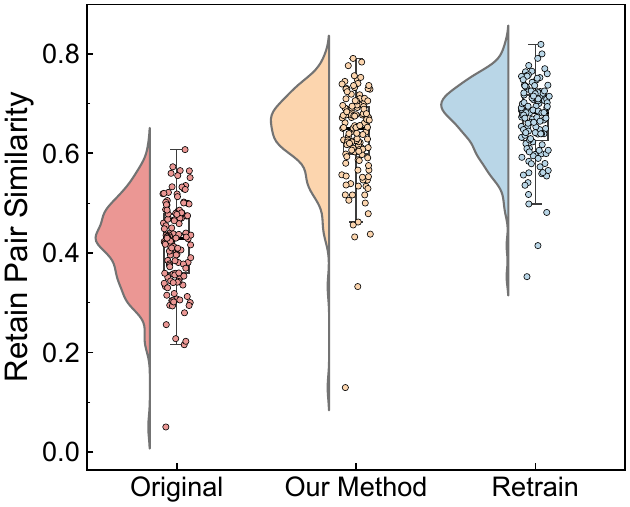}\label{fig:rc_client_retain}}
    \hfill
    \subfloat[class retain]{\includegraphics[width=0.32\linewidth]{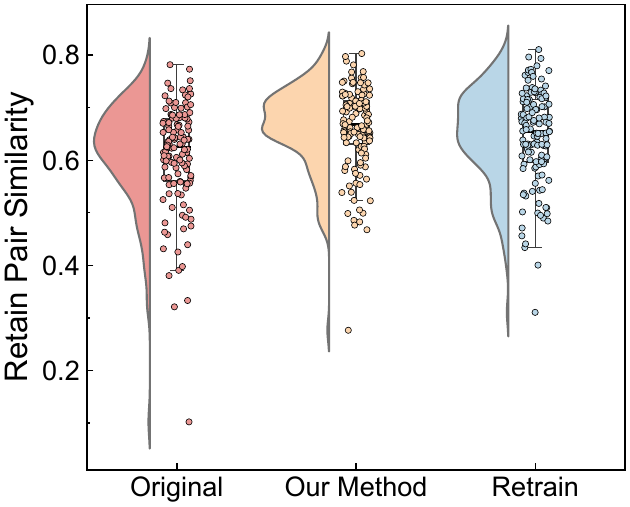}\label{fig:rc_class_retain}}
    \hfill
    \subfloat[sample retain]{\includegraphics[width=0.32\linewidth]{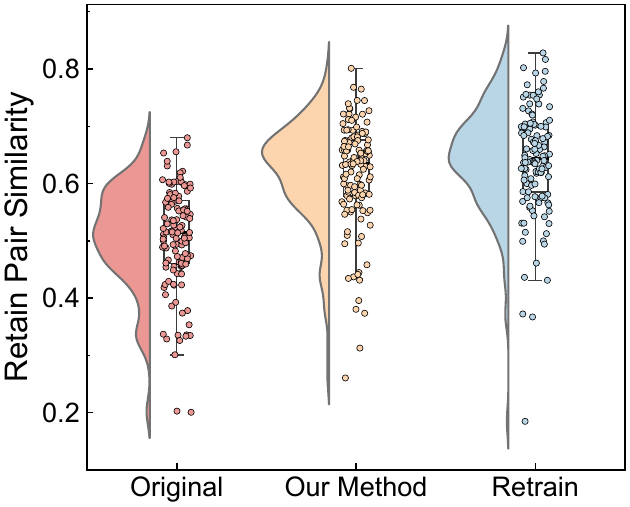}\label{fig:rc_sample_retain}}
    \caption{Per-pair image--text cosine similarity under the original, our method, and retrain. Top: forget pairs, where our method shifts toward the retrain reference. Bottom: retain pairs, where the three distributions overlap.}
    \label{fig:raincloud_similarity}
\end{figure}

Two patterns emerge. First, on forget pairs, the original model places most pairs at high similarity, reflecting the alignment learned during standard federated training; our method shifts this distribution leftward toward the retrain reference across client, class, and sample scenarios, meaning that the cross-modal alignment between forget image-text pairs is reduced toward the level of a model trained from scratch without those pairs. Second, on retain pairs, the distribution under our method remains close to the retrain reference, indicating that bilateral projection and the Forget Lock preserve much of the unrelated knowledge. These two patterns provide a direct visual counterpart to the quantitative results in Section~\ref{sec:q1} and Section~\ref{sec:q2}: the Modality Anchor closure moves forget pairs toward retrain levels, and the direction-selective Forget Lock keeps retain pairs close to the retrain reference.

\subsection{Alternative Lock Designs}
\label{app:lock_alternatives}

This section justifies the direction-selective Forget Lock used in PFL (Section~\ref{sec:projection}) by ruling out two natural alternatives. \textbf{(i) Per-round server projection alone.} Re-projecting the global model onto the affine submanifold $\mathcal{M}_\mu = w_{n,\mu} + \mathcal{U}_\mu^\perp$ at every round closes $d(t)$ at the start of each round but leaves the within-round trajectory unconstrained: clients still drift along $\mathcal{U}_\mu$ during their local SGD, and the drift is recovered by aggregation only after the damage is done. \textbf{(ii) Parameter-importance regularizers (EWC, SI, MAS).} These penalize drift on axis-aligned parameter coordinates weighted by importance scores. The unique subspace $\mathcal{U}_\mu$ is generally an oblique subspace of parameter space, not aligned with any coordinate axis, so an axis-aligned penalty cannot express the constraint that drift is bounded only along $\mathcal{U}_\mu$: tuning the penalty large enough to suppress $g_u(t)$ over-constrains $\mathcal{U}_\mu^\perp$ and crushes retain support, while tuning it small leaves $g_u(t)$ unbounded. PFL's quadratic penalty $\mathcal{L}_f$ is supported on $\mathcal{U}_\mu$ and therefore enforces the asymmetry that neither alternative can express.

\subsection{Algorithm}
\label{sec:app_algorithm}

The three unlearning scenarios share the pipeline in Algorithm~\ref{alg:ease}; they differ only in how $G_f$, $G_r$ are constructed (Section~\ref{sec:subspace}) and in the participant and data configurations across phases (Section~\ref{sec:projection}).

\begin{algorithm}
\caption{\underline{\texttt{EASE}}: \textbf{E}ntanglement-\textbf{A}ware \textbf{S}ubspace \textbf{E}xcision}\label{alg:ease}
\textbf{Input:} Trained model $w_n$, client gradient histories $\{H_k\}_{k=1}^K$, forget specification, thresholds $\delta$, $\tau_e$, lock strength $\alpha$, excision rounds $R_2$, stabilization rounds $R_3$.
\begin{algorithmic}[1]
    \STATE \gray{$\triangleright$ \textit{Phase I: Gradient Subspace Decomposition}}
    \STATE Construct $G_f$, $G_r$ from $\{H_k\}$ based on unlearning scenario \Comment{Eq.~\eqref{eq:Gf_client}}
    \FOR{each modality $\mu \in \{v, t\}$}
        \STATE $\Phi_{f,\mu} \leftarrow$ SVD of $G_{f,\mu}$, retain directions covering $\tau_e$ energy \Comment{Eq.~\eqref{eq:energy}}
        \STATE $\Phi_{r,\mu} \leftarrow$ SVD of $G_{r,\mu}$, retain directions covering $\tau_e$ energy
        \STATE Compute $M_\mu \leftarrow \Phi_{f,\mu}^\top \Phi_{r,\mu}$; SVD to get principal angles $\{\varphi_i\}$ \Comment{Eq.~\eqref{eq:entanglement_spectrum}}
        \STATE Partition into removable anchor and retain support: $B_{u,\mu} \leftarrow [c_i]_{\cos\varphi_i \leq \delta}$; $\Pi_{u,\mu} \leftarrow B_{u,\mu} B_{u,\mu}^\top$ \Comment{Eq.~\eqref{eq:Bu_Be}}
    \ENDFOR
    \FOR{round $t = 1, 2, \ldots, R_2 + R_3$}
        \IF{$t \leq R_2$}
            \STATE \gray{$\triangleright$ \textit{Phase II: Bilateral Knowledge Excision}}
            \FOR{$\mu \in \{v, t\}$}
                \STATE $w_{g,\mu} \leftarrow w_{g,\mu} - \Pi_{u,\mu}(w_{g,\mu} - w_{n,\mu})$ \Comment{Remove anchor component, Eq.~\eqref{eq:proj_step}}
            \ENDFOR
            \STATE Broadcast projected $w_g$, unique bases $\{B_{u,\mu}\}$, anchor reference $w_n$ to clients
            \FOR{each client $k$ \textbf{in parallel}}
                \STATE Compute alignment loss $\mathcal{L}_a(w;\, \mathcal{D}_k')$ on retain data \Comment{Eq.~\eqref{eq:infonce}}
                \STATE Compute Forget Lock penalty $\mathcal{L}_f(w) = \sum_\mu \|B_{u,\mu}^\top(w_\mu - w_{n,\mu})\|_2^2$ \Comment{Eq.~\eqref{eq:forget_lock}}
                \STATE Update $w_k$ via local SGD on $\mathcal{L}_a + \alpha \cdot \mathcal{L}_f$ \Comment{Eq.~\eqref{eq:local_obj}}
                \STATE Upload $w_k$ to server
            \ENDFOR
        \ELSE
            \STATE \gray{$\triangleright$ \textit{Phase III: Stabilization under Forget Lock}}
            \STATE Broadcast $w_g$, $\{B_{u,\mu}\}$, $w_n$ to clients \Comment{No projection in this phase}
            \FOR{each client $k$ \textbf{in parallel}}
                \STATE Update $w_k$ via local SGD on $\mathcal{L}_a + \alpha \cdot \mathcal{L}_f$ \Comment{Lock suppresses Temporal Re-anchoring}
                \STATE Upload $w_k$ to server
            \ENDFOR
        \ENDIF
        \STATE $w_g \leftarrow \frac{1}{|\mathcal{C}[t]|} \sum_{k \in \mathcal{C}[t]} w_k$ \Comment{Server aggregation}
    \ENDFOR
    \RETURN Unlearned model $w^* \leftarrow w_g$
\end{algorithmic}
\end{algorithm}

\section{Proofs of Theoretical Results}
\label{sec:app_proofs}

Throughout this section, we use standard properties of orthogonal projectors (idempotency, self-adjointness, range/kernel characterization), quadratic forms induced by positive semidefinite matrices, and the inner product differentiation rule. The entanglement coefficients introduced in Section~\ref{sec:subspace} are computed as the cosines of the principal angles between subspaces.

\subsection{Assumptions}
\label{app:assumptions}

The theorems in Section~\ref{sec:method} rely on the following assumptions; each theorem statement indicates the subset it uses.

\begin{assumption}[A1: Smoothness and bounded Jacobians]
\label{asm:A1}
The alignment loss $\mathcal{L}_a(w;\,\mathcal{D})$ is $L$-smooth in $w$, i.e., $\|\nabla\mathcal{L}_a(w) - \nabla\mathcal{L}_a(w')\|_2 \leq L\|w - w'\|_2$ for every $w, w'$ encountered during training. The encoder Jacobians $J_\mu = \partial z_\mu / \partial w_\mu$ are bounded in operator norm, $\|J_\mu(w_\mu)\|_{\mathrm{op}} \leq M$ for both $\mu \in \{v, t\}$ and all $w_\mu$ within the training region.
\end{assumption}

\begin{assumption}[A2: Local linearization]
\label{asm:A2}
There exists a radius $r > 0$ around the pre-unlearning point $w_n$ such that, for $\|w_\mu - w_{n,\mu}\|_2 \leq r$, the encoder embedding admits a first-order Taylor expansion $z_\mu(w_\mu) = z_\mu(w_{n,\mu}) + J_\mu(w_{n,\mu})(w_\mu - w_{n,\mu}) + O(\|w_\mu - w_{n,\mu}\|_2^2)$ with a uniformly bounded second-order residual.
\end{assumption}

\begin{assumption}[A3: Anchor residual after bilateral excision, verifiable form]
\label{asm:A3}
After applying bilateral excision to both modalities, producing $w^* = (w_v^*, w_t^*)$ with $\Pi_{u,\mu}(w_\mu^* - w_{n,\mu}) = 0$ for both $\mu \in \{v,t\}$, the residual Modality Anchor signal along unique directions is bounded: for a small constant $\epsilon_{\mathrm{anchor}} \geq 0$,
\begin{equation}
    \bigl\|\Pi_{u,\mu}\,\nabla_{w_\mu}\mathcal{L}_a(w_v^*, w_t^*)\bigr\|_2 \;\leq\; \epsilon_{\mathrm{anchor}}, \qquad \mu \in \{v, t\}.
\end{equation}
We treat A3 as an explicit local condition, not as a consequence of the theorem. Since $\epsilon_{\mathrm{anchor}}$ is a projected gradient norm whereas the alignment residual $\rho$ is an embedding-similarity residual, $\rho$ does not upper-bound $\epsilon_{\mathrm{anchor}}$; instead, the small $\rho$ values on the EASE row of Table~\ref{tab:ablation} ($0.06$--$0.08$ across scenarios) provide an empirical proxy that the dominant modality-anchor term has been suppressed. The structural reason A3 can hold is geometric: bilateral excision displaces both $z_v(w_v^*)$ and $z_t(w_t^*)$ off the original alignment, removing the leading modality-anchor term in $\nabla_{w_\mu}\mathcal{L}_a$; what remains absorbs retain-support residuals and second-order embedding terms.
\end{assumption}

\begin{assumption}[A4: Directional alignment for unilateral dynamics]
\label{asm:A4}
For the unilateral excision case, the retain-objective gradient along the unique subspace remains sign-aligned with its initial direction over $k$ local SGD steps within the linearization radius of Assumption~A2. Formally, writing $g(k) := -\Pi_{u,\mu}\nabla_{w_\mu}\mathcal{L}_a(w_\mu(k), w_{n,\bar\mu})$, we have $\|g(k)\|_2 \geq g_0 - 2L\|d(k)\|_2$ where $d(k) = \Pi_{u,\mu}(w_\mu(k) - w_{n,\mu})$. A4 is a local sign-alignment condition standard in linearized SGD analyses of subspace-constrained dynamics; it is required only for the multi-step lower bound in Theorem~\ref{prop:bilateral}~(ii). Theorem~\ref{prop:bilateral}~(i), (iii) and the one-step lower bound do not require A4. The condition is restricted to the linearization radius of A2 and is consistent with the empirical drift behavior observed in the $\alpha = 0$ ablation row of Table~\ref{tab:ablation} (\textit{w/o Lock}).
\end{assumption}

The completeness and retention results (Theorems~\ref{thm:completeness},~\ref{thm:retention}) are purely geometric and require no assumptions beyond standard projector properties. The durability result (Theorem~\ref{thm:durability}) uses A1. The anchor reconstruction result (Theorem~\ref{prop:bilateral}) uses A1--A2 for claim~(i) and the one-step version of~(ii), additionally A4 for the multi-step recurrence in~(ii), and A3 for the bilateral suppression claim~(iii).

\subsection{Properties of the Unlearning Projector}
\label{app:proof_projector}

\begin{proposition}[Properties of the Unlearning Projector]
\label{prop:projector}
Let $\Pi_{u,\mu} = B_{u,\mu} (B_{u,\mu})^\top \in \mathbb{R}^{d_\mu \times d_\mu}$ denote the orthogonal projector onto $\mathcal{U}_\mu = \mathrm{col}(B_{u,\mu})$. Then $\Pi_{u,\mu}$ is idempotent ($\Pi_{u,\mu}^2 = \Pi_{u,\mu}$) and self-adjoint ($\Pi_{u,\mu}^\top = \Pi_{u,\mu}$). Its range is $\mathrm{im}(\Pi_{u,\mu}) = \mathcal{U}_\mu$ and its kernel is $\ker(\Pi_{u,\mu}) = \mathcal{U}_\mu^\perp$. The complementary projector preserves all directions outside $\mathcal{U}_\mu$: $(I_{d_\mu} - \Pi_{u,\mu})v = v$ for every $v \in \mathcal{U}_\mu^\perp$.
\end{proposition}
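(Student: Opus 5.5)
The whole argument rests on one structural fact: $B_{u,\mu}$ has orthonormal columns, i.e.\ $B_{u,\mu}^\top B_{u,\mu} = I_{|\mathcal{U}_\mu|}$. I would establish this first. By construction $B_{u,\mu}$ is a column subset of $\Phi_{f,\mu} U_f$. The columns of $\Phi_{f,\mu}$ are left singular vectors of $G_{f,\mu}$ and are therefore orthonormal, so $\Phi_{f,\mu}^\top \Phi_{f,\mu} = I_p$. Since $U_f$ is orthogonal, $(\Phi_{f,\mu}U_f)^\top(\Phi_{f,\mu}U_f) = U_f^\top U_f = I_p$. Every column subset of a matrix with orthonormal columns again has orthonormal columns. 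The only point to watch is the case $p>q$. There the extra columns of $U_f$ come from the full SVD in Eq.~\eqref{eq:entanglement_spectrum}, so $U_f$ is still square orthogonal and the argument goes through unchanged.

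Next come idempotency and self-adjointness, which are direct computations. For the square, $\Pi_{u,\mu}^2 = B_{u,\mu}(B_{u,\mu}^\top B_{u,\mu})B_{u,\mu}^\top = B_{u,\mu}B_{u,\mu}^\top = \Pi_{u,\mu}$. For the transpose, $(B_{u,\mu}B_{u,\mu}^\top)^\top = B_{u,\mu}B_{u,\mu}^\top$.

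The range and kernel follow in two steps.
\begin{itemize}[leftmargin=*]
\item \emph{Range.} Every output $B_{u,\mu}(B_{u,\mu}^\top x)$ lies in $\mathrm{col}(B_{u,\mu}) = \mathcal{U}_\mu$. Conversely, any $u = B_{u,\mu}a \in \mathcal{U}_\mu$ satisfies $\Pi_{u,\mu}u = B_{u,\mu}(B_{u,\mu}^\top B_{u,\mu})a = u$. Hence $\mathrm{im}(\Pi_{u,\mu}) = \mathcal{U}_\mu$, and $\Pi_{u,\mu}$ acts as the identity on $\mathcal{U}_\mu$.
\item \emph{Kernel.} If $v\in\mathcal{U}_\mu^\perp$, then $B_{u,\mu}^\top v = 0$, because each entry is an inner product of $v$ with a column of $B_{u,\mu}$. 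So $\Pi_{u,\mu}v=0$. Conversely, if $\Pi_{u,\mu}v = 0$, then $0 = v^\top \Pi_{u,\mu} v = \|B_{u,\mu}^\top v\|_2^2$. Thus $v$ is orthogonal to every column of $B_{u,\mu}$, i.e.\ $v\in\mathcal{U}_\mu^\perp$.
\end{itemize}
Together, $\ker(\Pi_{u,\mu}) = \mathcal{U}_\mu^\perp$.

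The final claim is then immediate: for $v\in\mathcal{U}_\mu^\perp$ we have $(I_{d_\mu}-\Pi_{u,\mu})v = v - \Pi_{u,\mu}v = v$ by the kernel characterization. For completeness I would also note that $I-\Pi_{u,\mu}$ is itself an orthogonal projector, with range $\mathcal{U}_\mu^\perp$, by the same idempotent and symmetric computation. This is the form used later for the projection step Eq.~\eqref{eq:proj_step}.

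There is no real obstacle here. The only step needing care is the first one: the orthonormality of $B_{u,\mu}$ must be derived from the GSD construction and not simply assumed. It also has to be checked for each per-block basis $B_{u,\mu}^l$, since the projector is applied block-wise.
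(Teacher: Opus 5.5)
Your proof is correct and follows essentially the same route as the paper. Both arguments reduce everything to $B_{u,\mu}^\top B_{u,\mu} = I$ and then verify idempotency, symmetry, both range inclusions, the kernel, and the complementary identity by direct computation. The differences are minor: you derive the orthonormality of $B_{u,\mu}$ from the GSD construction (a column subset of $\Phi_{f,\mu}U_f$ with $U_f$ square orthogonal), which the paper simply assumes, and you justify $\Pi_{u,\mu}v = 0 \Rightarrow B_{u,\mu}^\top v = 0$ via $v^\top \Pi_{u,\mu} v = \|B_{u,\mu}^\top v\|_2^2$, whereas the paper states that equivalence without justification.
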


\begin{proof}
Let $B_{u,\mu} \in \mathbb{R}^{d_\mu \times p_u}$ with orthonormal columns, i.e., $(B_{u,\mu})^\top B_{u,\mu} = I_{p_u}$, and $\Pi_{u,\mu} = B_{u,\mu}(B_{u,\mu})^\top$. For idempotency, we compute
\begin{equation}
\begin{split}
    \Pi_{u,\mu}^2
    &= B_{u,\mu}(B_{u,\mu})^\top \cdot B_{u,\mu}(B_{u,\mu})^\top \\
    &= B_{u,\mu}\bigl[(B_{u,\mu})^\top B_{u,\mu}\bigr](B_{u,\mu})^\top \\
    &= B_{u,\mu}\,I_{p_u}\,(B_{u,\mu})^\top
    = \Pi_{u,\mu},
\end{split}
\end{equation}
Self-adjointness follows directly:
\begin{equation}
    \Pi_{u,\mu}^\top = \bigl(B_{u,\mu}(B_{u,\mu})^\top\bigr)^\top = B_{u,\mu}(B_{u,\mu})^\top = \Pi_{u,\mu}.
\end{equation}
For the range, take any $v = B_{u,\mu}\alpha \in \mathcal{U}_\mu$. Then
\begin{equation}
    \Pi_{u,\mu}\,v = B_{u,\mu}\bigl[(B_{u,\mu})^\top B_{u,\mu}\bigr]\alpha = B_{u,\mu}\,I_{p_u}\,\alpha = B_{u,\mu}\alpha = v,
\end{equation}
so $\mathcal{U}_\mu \subseteq \mathrm{im}(\Pi_{u,\mu})$. Conversely, for any $w \in \mathbb{R}^{d_\mu}$, $\Pi_{u,\mu}\,w = B_{u,\mu}\bigl((B_{u,\mu})^\top w\bigr) \in \mathcal{U}_\mu$, giving $\mathrm{im}(\Pi_{u,\mu}) = \mathcal{U}_\mu$. For the kernel, the equivalence chain
\begin{equation}
    \Pi_{u,\mu}\,v = 0 \;\Longleftrightarrow\; B_{u,\mu}(B_{u,\mu})^\top v = 0 \;\Longleftrightarrow\; (B_{u,\mu})^\top v = 0 \;\Longleftrightarrow\; v \perp \mathcal{U}_\mu
\end{equation}
establishes $\ker(\Pi_{u,\mu}) = \mathcal{U}_\mu^\perp$. Finally, for any $v \in \mathcal{U}_\mu^\perp$,
\begin{equation}
    (I_{d_\mu} - \Pi_{u,\mu})\,v = v - \Pi_{u,\mu}\,v = v - 0 = v,
\end{equation}
so the complementary projector preserves all directions outside $\mathcal{U}_\mu$. Since $\mathcal{E}_\mu \subseteq \mathcal{U}_\mu^\perp$ by orthogonality of the principal-angle coordinate system, this includes all retain support and retain-only directions.
\end{proof}

\subsection{Forget Lock Characterization}
\label{app:proof_forget_lock}

\begin{proposition}[Forget Lock Characterization]
\label{prop:forget_lock}
Let $\Pi_{u,\mu} = B_{u,\mu}(B_{u,\mu})^\top$ be the orthogonal projector onto $\mathcal{U}_\mu$. Then $\mathcal{L}_f(w) = 0$ if and only if $\Pi_{u,\mu}(w_\mu - w_{n,\mu}) = 0$ for both $\mu \in \{v,t\}$. Moreover, $\nabla_{w_\mu}\mathcal{L}_f = 2\Pi_{u,\mu}(w_\mu - w_{n,\mu})$, which vanishes for displacements orthogonal to $\mathcal{U}_\mu$ and points back toward $w_{n,\mu}$ along $\mathcal{U}_\mu$ otherwise.
\end{proposition}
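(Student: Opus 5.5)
The plan is to rewrite $\mathcal{L}_f$ as a sum of quadratic forms in the projector $\Pi_{u,\mu}$ and then use Proposition~\ref{prop:projector}. Fix $\mu$ and set $e_\mu := w_\mu - w_{n,\mu}$. Using $(B_{u,\mu})^\top B_{u,\mu} = I$, each summand of Eq.~\eqref{eq:forget_lock} becomes
\begin{equation*}
\bigl\|B_{u,\mu}^\top e_\mu\bigr\|_2^2 = e_\mu^\top B_{u,\mu} B_{u,\mu}^\top e_\mu = e_\mu^\top \Pi_{u,\mu} e_\mu = e_\mu^\top \Pi_{u,\mu}^\top \Pi_{u,\mu} e_\mu = \bigl\|\Pi_{u,\mu} e_\mu\bigr\|_2^2 .
\end{equation*}
The third equality uses idempotency and self-adjointness from Proposition~\ref{prop:projector}. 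Alternatively, the norm equality follows directly because $B_{u,\mu}$ is an isometry on its column space.

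For the characterization, $\mathcal{L}_f(w) = \sum_{\mu\in\{v,t\}} \|\Pi_{u,\mu} e_\mu\|_2^2$ is a sum of nonnegative terms. It therefore vanishes if and only if each term vanishes, which gives $\Pi_{u,\mu}(w_\mu - w_{n,\mu}) = 0$ for both $\mu$. The converse direction is immediate.

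For the gradient, the $\mu$-block of $\mathcal{L}_f$ depends only on $w_\mu$, so $\nabla_{w_\mu}\mathcal{L}_f = \nabla_{w_\mu}\, e_\mu^\top \Pi_{u,\mu} e_\mu$. The inner-product differentiation rule for a quadratic form $x^\top A x$ gives $(A + A^\top)x$. With $A = \Pi_{u,\mu}$ symmetric, this equals $2\Pi_{u,\mu}(w_\mu - w_{n,\mu})$.

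The geometric remarks then follow from Proposition~\ref{prop:projector}. If $e_\mu \in \mathcal{U}_\mu^\perp = \ker \Pi_{u,\mu}$, the gradient is zero. Otherwise the gradient lies in $\mathcal{U}_\mu$ and equals twice the $\mathcal{U}_\mu$-component of $e_\mu$. A descent step $-\lr\,\nabla_{w_\mu}\mathcal{L}_f$ therefore moves $w_\mu$ back toward $w_{n,\mu}$ along $\mathcal{U}_\mu$, and it reduces that component by the factor $(1-2\lr)$ without touching the orthogonal complement. This is also the source of the $(1-2\lr\alpha)$ factor in Eq.~\eqref{eq:reanchor_recurrence}.

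There is no real obstacle here. The one point needing care is the orthonormality of the columns of $B_{u,\mu}$: the canonical directions $c_i = (\Phi_f U_f)_{\cdot,i}$ are orthonormal because $\Phi_f$ has orthonormal columns and $U_f$ is orthogonal. This justifies $B_{u,\mu}^\top B_{u,\mu} = I$, and with it both the rewriting of the norm and the projector properties invoked above.
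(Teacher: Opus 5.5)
Your proposal is correct and takes essentially the same route as the paper. Both rewrite each summand as the nonnegative quadratic form $e_\mu^\top \Pi_{u,\mu} e_\mu = \|B_{u,\mu}^\top e_\mu\|_2^2$, conclude that $\mathcal{L}_f$ vanishes exactly when each summand does, and obtain the gradient from $\nabla_x(x^\top A x) = 2Ax$ for symmetric $A$. The only cosmetic difference is that you pass directly to $\|\Pi_{u,\mu} e_\mu\|_2^2$, whereas the paper argues $B_{u,\mu}^\top e_\mu = 0 \Leftrightarrow e_\mu \in \mathcal{U}_\mu^\perp \Leftrightarrow \Pi_{u,\mu} e_\mu = 0$.
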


\begin{proof}
Since the two modalities contribute independently, it suffices to analyze a single modality $\mu$. Let $\Delta_\mu = w_\mu - w_{n,\mu}$. The $\mu$-th summand of $\mathcal{L}_f$ can be expanded as
\begin{equation}
\begin{split}
    \ell_\mu
    &= \Delta_\mu^\top \Pi_{u,\mu}\,\Delta_\mu \\
    &= \Delta_\mu^\top B_{u,\mu}(B_{u,\mu})^\top \Delta_\mu \\
    &= \bigl\|(B_{u,\mu})^\top \Delta_\mu\bigr\|_2^2 \;\geq\; 0.
\end{split}
\end{equation}
Therefore $\ell_\mu = 0$ if and only if
\begin{equation}
    (B_{u,\mu})^\top \Delta_\mu = 0 \;\Longleftrightarrow\; \Delta_\mu \in \mathcal{U}_\mu^\perp \;\Longleftrightarrow\; \Pi_{u,\mu}\,\Delta_\mu = 0.
\end{equation}
Summing over both modalities, $\mathcal{L}_f(w) = 0$ iff $\Pi_{u,\mu}(w_\mu - w_{n,\mu}) = 0$ for both $\mu \in \{v, t\}$. Since $\Pi_{u,\mu}$ is symmetric and $w_{n,\mu}$ is constant, the standard identity $\nabla_x(x^\top A\,x) = 2A\,x$ gives
\begin{equation}
\begin{split}
    \nabla_{w_\mu}\mathcal{L}_f
    &= \nabla_{w_\mu}\bigl[\Delta_\mu^\top \Pi_{u,\mu}\,\Delta_\mu\bigr] \\
    &= 2\,\Pi_{u,\mu}\,\Delta_\mu \\
    &= 2\,\Pi_{u,\mu}\,(w_\mu - w_{n,\mu}).
\end{split}
\end{equation}
This gradient lies in $\mathcal{U}_\mu = \mathrm{im}(\Pi_{u,\mu})$ and vanishes iff $\Delta_\mu \in \mathcal{U}_\mu^\perp$.
\end{proof}

\subsection{Proof of Theorem~\ref{thm:completeness} (Unlearning Completeness)}
\label{app:proof_completeness}

\begin{theorem}[Unlearning Completeness]
\label{thm:completeness}
Let $w_\mu^*$ denote the projected parameters and $\Delta_\mu = w_\mu - w_{n,\mu}$ the pre-projection displacement. Then:
\begin{enumerate}[label=(\roman*)]
    \item Exact erasure: $\Pi_{u,\mu}(w_\mu^* - w_{n,\mu}) = 0$, i.e., the projected displacement has zero component along $\mathcal{U}_\mu$.
    \item Energy removal: The energy removed equals $\|\Pi_{u,\mu}\Delta_\mu\|_2^2$. For any $\Delta_\mu \in \mathcal{S}_f$ expressed in the canonical basis as $\Delta_\mu = \sum_{i=1}^{p}\alpha_i c_i$, this equals $\sum_{i:\cos\varphi_i \leq \delta}\alpha_i^2$.
    \item Forget energy ratio: $\eta_f(\delta) = \|B_{u,\mu}^\top G_{f,\mu}\|_F^2 / \|G_{f,\mu}\|_F^2$ is monotonically non-decreasing in $\delta$ and satisfies $\eta_f(1) \geq \tau_e$ (equality only when the truncation exactly meets the energy threshold).
\end{enumerate}
\end{theorem}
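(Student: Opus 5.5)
The plan is to prove the three claims separately, using only Proposition~\ref{prop:projector} and the orthonormality of the canonical directions $c_i$.

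For (i), the projection step in Eq.~\eqref{eq:proj_step} gives $w_\mu^* - w_{n,\mu} = (I - \Pi_{u,\mu})\Delta_\mu$. Applying $\Pi_{u,\mu}$ and using idempotency yields $\Pi_{u,\mu}(I-\Pi_{u,\mu})\Delta_\mu = (\Pi_{u,\mu} - \Pi_{u,\mu}^2)\Delta_\mu = 0$.

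For (ii), the removed displacement is $\Delta_\mu - (w_\mu^*-w_{n,\mu}) = \Pi_{u,\mu}\Delta_\mu$, so the removed energy is $\|\Pi_{u,\mu}\Delta_\mu\|_2^2$. Its relation to the total is a Pythagorean split: self-adjointness and idempotency give $\langle \Pi_{u,\mu}\Delta_\mu, (I-\Pi_{u,\mu})\Delta_\mu\rangle = 0$, hence $\|\Delta_\mu\|^2 = \|\Pi_{u,\mu}\Delta_\mu\|^2 + \|w_\mu^*-w_{n,\mu}\|^2$. For the canonical-basis formula, I first note that the $c_i = (\Phi_f U_f)_{\cdot,i}$ are orthonormal, since $\Phi_f$ has orthonormal columns and $U_f$ is orthogonal. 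Then $B_{u,\mu}^\top c_i$ equals a standard basis vector $e_i$ when $\kappa_i \le \delta$ and $0$ otherwise. Writing $\Delta_\mu = \sum_i \alpha_i c_i$ gives $B_{u,\mu}^\top\Delta_\mu = (\alpha_i)_{i:\kappa_i\le\delta}$, and therefore $\|\Pi_{u,\mu}\Delta_\mu\|^2 = \|B_{u,\mu}^\top\Delta_\mu\|^2 = \sum_{i:\kappa_i\le\delta}\alpha_i^2$. The indices $i>r$ with $\kappa_i = 0$ are always included, which matches the convention in Section~\ref{sec:subspace}.

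For (iii), write $\eta_f(\delta) = \sum_{i\in\mathcal{I}(\delta)} \|c_i^\top G_{f,\mu}\|_2^2 / \|G_{f,\mu}\|_F^2$ with $\mathcal{I}(\delta) = \{i : \kappa_i \le \delta\}$. This identity holds because $\|B_{u,\mu}^\top G\|_F^2$ sums the squared row norms over the selected orthonormal columns. Since $\mathcal{I}(\delta)$ is nested and non-decreasing in $\delta$, and every summand is nonnegative, monotonicity follows.

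At $\delta = 1$ every $\kappa_i \le 1$, because cosines of principal angles are singular values of $\Phi_f^\top\Phi_r$ and are bounded by $\|\Phi_f\|\|\Phi_r\| = 1$. So $B_{u,\mu}$ spans all of $\mathcal{S}_f$, and $\Pi_{u,\mu} = \Phi_f\Phi_f^\top$ because $U_f$ is orthogonal. Then $\|\Phi_f^\top G_{f,\mu}\|_F^2 = \sum_{i\le p}\sigma_i^2$, since $\Phi_f$ consists of the top-$p$ left singular vectors. The energy criterion Eq.~\eqref{eq:energy} then gives a ratio of at least $\tau_e$. Equality holds exactly when the cumulative energy at $p$ equals $\tau_e$.

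The main obstacle is bookkeeping rather than depth. Three points need care. First, I must handle the $p>q$ padding, i.e.\ that $U_f$ is the full $p\times p$ factor, so that the $c_i$ form a complete orthonormal basis of $\mathcal{S}_f$. Second, the threshold $\delta=1$ must include directions with $\kappa_i=1$ exactly. Third, I must check that the per-modality blocks $G_{f,\mu}$ and $\Phi_{f,\mu}$ are used consistently, so that (iii) is stated blockwise as the paper indicates.
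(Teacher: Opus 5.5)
Your proposal is correct and follows essentially the same argument as the paper. You use idempotency for (i) and orthonormality of the canonical directions for (ii); for (iii) you use nested index sets, then Frobenius-norm invariance under the orthogonal $U_f$ together with the energy criterion at $\delta=1$. The extra details you add, namely the Pythagorean split and the check that every $\kappa_i \le 1$ so that $\delta=1$ selects all of $\mathcal{S}_f$, are refinements of the same route rather than a different one.
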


\begin{proof}
We work with a single modality $\mu$ throughout; the argument applies identically to both branches. Let $\Delta_\mu = w_\mu - w_{n,\mu}$ denote the pre-projection displacement.

We begin by establishing exact erasure. By Eq.~\eqref{eq:proj_step}, the projected parameters satisfy $w_\mu^* = w_\mu - \Pi_{u,\mu}\Delta_\mu$. Subtracting $w_{n,\mu}$ from both sides and applying $\Pi_{u,\mu}$:
\begin{equation}
\begin{split}
    \Pi_{u,\mu}(w_\mu^* - w_{n,\mu})
    &= \Pi_{u,\mu}\bigl[\Delta_\mu - \Pi_{u,\mu}\Delta_\mu\bigr] \\
    &= \Pi_{u,\mu}\,(I_{d_\mu} - \Pi_{u,\mu})\,\Delta_\mu \\
    &= (\Pi_{u,\mu} - \Pi_{u,\mu}^2)\,\Delta_\mu \\
    &= (\Pi_{u,\mu} - \Pi_{u,\mu})\,\Delta_\mu = 0,
\end{split}
\label{eq:app_exact_erasure}
\end{equation}
where the last line uses idempotency $\Pi_{u,\mu}^2 = \Pi_{u,\mu}$ (Proposition~\ref{prop:projector}).

Next, we quantify the energy removed by the projection. The squared norm of the removed component is
\begin{equation}
\begin{split}
    \|\Delta_\mu - (w_\mu^* - w_{n,\mu})\|_2^2
    &= \|\Delta_\mu - (I_{d_\mu} - \Pi_{u,\mu})\Delta_\mu\|_2^2 \\
    &= \|\Pi_{u,\mu}\,\Delta_\mu\|_2^2.
\end{split}
\end{equation}
Now suppose $\Delta_\mu \in \mathcal{S}_f$ and write $\Delta_\mu = \sum_{i=1}^{p} \alpha_i\, u_i$ in the canonical basis $\{u_i\}_{i=1}^p = \{(\Phi_f U)_{\cdot,i}\}_{i=1}^p$. Since $\{u_i\}$ are orthonormal and $\Pi_{u,\mu}$ projects onto $\mathrm{span}\{u_i : \cos\varphi_i \leq \delta\}$, we have $\Pi_{u,\mu}\,u_i = u_i$ if $\cos\varphi_i \leq \delta$ and $\Pi_{u,\mu}\,u_i = 0$ otherwise. Therefore
\begin{equation}
\begin{split}
    \|\Pi_{u,\mu}\,\Delta_\mu\|_2^2
    &= \Bigl\|\Pi_{u,\mu}\sum_{i=1}^{p}\alpha_i\,u_i\Bigr\|_2^2 \\
    &= \Bigl\|\sum_{i:\,\cos\varphi_i \leq \delta}\alpha_i\,u_i\Bigr\|_2^2 \\
    &= \sum_{i:\,\cos\varphi_i \leq \delta}\alpha_i^2.
\end{split}
\end{equation}

Finally, we analyze the monotonicity of the forget energy ratio $\eta_f(\delta) = \|B_{u,\mu}^\top G_{f,\mu}\|_F^2 / \|G_{f,\mu}\|_F^2$. As $\delta$ increases, the index set $\mathcal{I}_u(\delta) = \{i : \cos\varphi_i \leq \delta\}$ can only grow, so $B_{u,\mu}$ gains columns and $\|B_{u,\mu}^\top G_{f,\mu}\|_F^2$ is non-decreasing in $\delta$. At $\delta = 1$, all $p$ canonical directions are classified as removable anchor directions, so $B_{u,\mu} = \Phi_f U$. The boundary value evaluates as
\begin{equation}
\begin{split}
    \eta_f(1)
    &= \frac{\|B_{u,\mu}^\top G_{f,\mu}\|_F^2}{\|G_{f,\mu}\|_F^2}
    = \frac{\|(\Phi_f U)^\top G_{f,\mu}\|_F^2}{\|G_{f,\mu}\|_F^2} \\
    &= \frac{\|U^\top \Phi_f^\top G_{f,\mu}\|_F^2}{\|G_{f,\mu}\|_F^2}
    = \frac{\|\Phi_f^\top G_{f,\mu}\|_F^2}{\|G_{f,\mu}\|_F^2}
    = \frac{\sum_{i=1}^{p}\sigma_i^2}{\|G_{f,\mu}\|_F^2}
    \geq \tau_e,
\end{split}
\end{equation}
where the fourth equality uses the orthogonality of $U$ (which preserves the Frobenius norm), and the last inequality follows from the energy criterion~\eqref{eq:energy}, which selects $p$ as the smallest index whose cumulative energy already reaches $\tau_e$. Equality holds only in special cases, e.g., when the truncation point exactly meets the threshold ($\sum_{i=1}^{p}\sigma_i^2 / \|G_{f,\mu}\|_F^2 = \tau_e$) or when the retained principal directions span the entire column space of $G_{f,\mu}$ (so that $p = \mathrm{rank}(G_{f,\mu})$ and $\eta_f(1) = 1$).
\end{proof}

\subsection{Proof of Theorem~\ref{thm:retention} (Retention Integrity)}
\label{app:proof_retention}

\begin{theorem}[Retention Integrity]
\label{thm:retention}
For any retain displacement $\Delta_r \in \mathcal{S}_r$, the collateral energy is bounded by
\begin{equation}
    \|\Pi_{u,\mu}\,\Delta_r\|_2^2 \;\leq\; \delta^2\,\|\Delta_r\|_2^2.
    \label{eq:retention_bound}
\end{equation}
\end{theorem}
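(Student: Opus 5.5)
The plan is to compute $\|\Pi_{u,\mu}\Delta_r\|_2$ exactly in the principal-angle coordinates of Eq.~\eqref{eq:entanglement_spectrum}, and then use the threshold $\kappa_i\le\delta$ that defines $B_u$ in Eq.~\eqref{eq:Bu_Be}. I work in a single block and modality $\mu$ and drop the subscript. By Proposition~\ref{prop:forget_lock} (or directly, since $B_u$ has orthonormal columns), $\|\Pi_u\Delta_r\|_2^2=\|B_u^\top\Delta_r\|_2^2$. Because $\Delta_r\in\mathcal{S}_r$ and $\Phi_r$ has orthonormal columns, I write $\Delta_r=\Phi_r\beta$ with $\|\beta\|_2=\|\Delta_r\|_2$.

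Next, let $\mathcal{I}=\{i:\kappa_i\le\delta\}$ and let $U_{\mathcal{I}}$ denote the columns of $U_f$ indexed by $\mathcal{I}$, so that $B_u=\Phi_fU_{\mathcal{I}}$. Substituting the SVD $M=\Phi_f^\top\Phi_r=U_f\Sigma V_f^\top$ gives
\begin{equation*}
B_u^\top\Delta_r=U_{\mathcal{I}}^\top\Phi_f^\top\Phi_r\beta=U_{\mathcal{I}}^\top U_f\,\Sigma\,V_f^\top\beta=\Sigma_{\mathcal{I}}\,V_f^\top\beta .
\end{equation*}
Here $\Sigma_{\mathcal{I}}$ is the submatrix of rows of $\Sigma$ indexed by $\mathcal{I}$. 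This step uses $U_{\mathcal{I}}^\top U_f=[\,I\,]_{\mathcal{I},\cdot}$, which holds by orthogonality of $U_f$.

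Now set $\gamma=V_f^\top\beta$. Orthogonality of $V_f$ gives $\|\gamma\|_2=\|\beta\|_2$. The rectangular diagonal structure of $\Sigma$ means that row $i$ of $\Sigma$ has the single entry $\kappa_i=\cos\varphi_i$ in column $i$ when $i\le r$. When $i>r$ (which can happen only if $p>q$), row $i$ is zero, consistent with the convention $\kappa_i=0$. Therefore
\begin{equation*}
\|\Pi_u\Delta_r\|_2^2=\sum_{i\in\mathcal{I},\,i\le r}\kappa_i^2\gamma_i^2\;\le\;\delta^2\sum_{i\le r}\gamma_i^2\;\le\;\delta^2\|\gamma\|_2^2=\delta^2\|\Delta_r\|_2^2 ,
\end{equation*}
which is the claimed bound~\eqref{eq:retention_bound}.

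The only delicate point I anticipate is the bookkeeping with the rectangular $\Sigma$, namely the cases $p>q$ and $p<q$. When $p>q$, the extra rows are zero, so those canonical directions are orthogonal to $\mathcal{S}_r$ and contribute nothing. When $p<q$, the components of $\gamma$ with index above $p$ are dropped by $\Sigma$, which only loosens the final inequality. I also need $U_f$ and $V_f$ to be full square orthogonal factors, which is exactly what Eq.~\eqref{eq:entanglement_spectrum} specifies. Finally, applying the argument blockwise and summing over blocks and both modalities gives the global version, as stated in Section~\ref{sec:subspace}.
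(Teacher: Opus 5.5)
Your proof is correct and follows essentially the same route as the paper's. The identity $B_u^\top\Delta_r=\Sigma_{\mathcal{I}}V_f^\top\beta$ is the matrix form of the paper's canonical-angle relation $\langle u_i,v_j\rangle=\delta_{ij}\cos\varphi_i$, after which both arguments bound each surviving coefficient by $\delta$ and sum; your explicit handling of the rectangular $\Sigma$ for $p>q$ and $p<q$ is, if anything, tidier than the paper's auxiliary $\Delta_r^\perp$ term, which is redundant because the $v_j$ with $j\le q$ already span $\mathcal{S}_r$.
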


\begin{proof}
Let $\Delta_r \in \mathcal{S}_r$ with $\|\Delta_r\|_2 = 1$ (the general case follows by homogeneity). Express $\Delta_r$ in the canonical coordinate system of $\mathcal{S}_r$:
\begin{equation}
    \Delta_r = \sum_{j=1}^{q} \beta_j\, v_j + \Delta_r^\perp,
\end{equation}
where $v_j = (\Phi_r V)_{\cdot,j}$ are the retain-side canonical directions, $\sum_j \beta_j^2 + \|\Delta_r^\perp\|_2^2 = 1$, and $\Delta_r^\perp \in \mathcal{S}_r \cap \mathrm{span}(v_1,\dots,v_{\min(p,q)})^\perp$.

By the canonical-angle property, $\langle u_i, v_j \rangle = \delta_{ij}\cos\varphi_i$ for $i \leq \min(p,q)$, and $\langle u_i, \Delta_r^\perp \rangle = 0$ for all $i$ since $\Delta_r^\perp$ is orthogonal to all canonical directions that overlap with $\mathcal{S}_f$. Therefore, for each removable anchor direction $u_i$ with $\cos\varphi_i \leq \delta$:
\begin{equation}
    \langle u_i, \Delta_r \rangle = \sum_{j=1}^{q}\beta_j\,\langle u_i, v_j\rangle + \langle u_i, \Delta_r^\perp\rangle = \beta_i\cos\varphi_i + 0 = \beta_i\cos\varphi_i.
\end{equation}
Summing over all removable anchor directions yields
\begin{equation}
\begin{split}
    \|\Pi_{u,\mu}\,\Delta_r\|_2^2
    &= \sum_{i:\,\cos\varphi_i \leq \delta} \langle u_i, \Delta_r\rangle^2 \\
    &= \sum_{i:\,\cos\varphi_i \leq \delta} \beta_i^2\,\cos^2\varphi_i \\
    &\leq \delta^2 \sum_{i:\,\cos\varphi_i \leq \delta} \beta_i^2 \\
    &\leq \delta^2 \sum_{j=1}^{q}\beta_j^2 \\
    &\leq \delta^2\,\|\Delta_r\|_2^2,
\end{split}
\end{equation}
where the first inequality uses $\cos\varphi_i \leq \delta$ for each removable anchor direction, and the last inequality uses $\sum_j \beta_j^2 \leq \|\Delta_r\|_2^2 = 1$.
\end{proof}

\subsection{Completeness--Integrity Trade-off Analysis}
\label{app:proof_tradeoff}

The threshold $\delta$ simultaneously controls the unlearning completeness ratio $\eta_f(\delta)$ and the retention damage bound $\eta_r(\delta)$. We formalize their joint behavior.

\begin{proposition}[Monotonicity and Pareto Trade-off]
\label{prop:tradeoff}
Let $\eta_f(\delta) = \|B_{u,\mu}^\top G_{f,\mu}\|_F^2 / \|G_{f,\mu}\|_F^2$ and $\eta_r(\delta) = \max_{\Delta_r \in \mathcal{S}_r \setminus \{0\}} \|\Pi_{u,\mu}\Delta_r\|_2^2 / \|\Delta_r\|_2^2$. Then both $\eta_f(\delta)$ and $\eta_r(\delta)$ are monotonically non-decreasing in $\delta \in [0, 1]$. Moreover, for any target unlearning ratio $\eta_f^* \in (0, \tau_e]$, the smallest threshold achieving $\eta_f(\delta) \geq \eta_f^*$ is $\delta^* = \cos\varphi_{k^*}$, where $k^* = \min\{k : \eta_f(\cos\varphi_k) \geq \eta_f^*\}$ and $\varphi_1 \leq \dots \leq \varphi_p$ are the principal angles in non-decreasing order. This $\delta^*$ simultaneously minimizes the retention damage bound $(\delta^*)^2$.
\end{proposition}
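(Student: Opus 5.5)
The plan is to treat $\delta$ purely through the index set $\mathcal{I}_u(\delta)=\{i:\kappa_i\le\delta\}$. Both quantities in Proposition~\ref{prop:tradeoff} are functions of the projector $\Pi_{u,\mu}$, and that projector depends on $\delta$ only through this set. Since $\kappa_i\le\delta\le\delta'$ gives $\kappa_i\le\delta'$, we have $\mathcal{I}_u(\delta)\subseteq\mathcal{I}_u(\delta')$. Hence $\mathcal{U}_\mu(\delta)\subseteq\mathcal{U}_\mu(\delta')$, and because the $c_i$ are orthonormal, $\Pi_{u,\mu}(\delta')-\Pi_{u,\mu}(\delta)$ is the orthogonal projector onto the added directions, which is positive semidefinite.

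Monotonicity then follows in two steps.
\begin{itemize}
\item For $\eta_f$, I reuse part (iii) of Theorem~\ref{thm:completeness}. Write $\|B_{u,\mu}^\top G_{f,\mu}\|_F^2=\sum_{i\in\mathcal{I}_u(\delta)}\|c_i^\top G_{f,\mu}\|_2^2$, which is a sum of nonnegative terms over a growing index set.
\item For $\eta_r$, the PSD ordering gives $\|\Pi_{u,\mu}(\delta)\Delta_r\|_2^2=\Delta_r^\top\Pi_{u,\mu}(\delta)\Delta_r\le\Delta_r^\top\Pi_{u,\mu}(\delta')\Delta_r$ for every $\Delta_r\in\mathcal{S}_r$. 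Taking the maximum over the unit sphere of $\mathcal{S}_r$ preserves the inequality.
\end{itemize}

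For the threshold claim, order the angles so that $\kappa_1\ge\dots\ge\kappa_p$, i.e.\ $\varphi$ is non-decreasing, with $\kappa_i=0$ for $i>r$. Then $\eta_f$ is a right-continuous step function of $\delta$. It can only jump at the values $\cos\varphi_k$, and on each interval $[\cos\varphi_{k},\cos\varphi_{k-1})$ it is constant. The target is feasible: by Theorem~\ref{thm:completeness}(iii), $\eta_f(1)\ge\tau_e\ge\eta_f^*$, so the set $\{\delta:\eta_f(\delta)\ge\eta_f^*\}$ is nonempty. By monotonicity this set is an interval $[\delta^*,1]$, and its left endpoint must be a jump point, i.e.\ some $\cos\varphi_k$.

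Care is needed with the direction of the ordering. As $\delta$ increases from $0$, directions enter in order of increasing $\cos\varphi$, so they enter from $k=p$ downward. I would therefore state $k^*$ as the index whose cosine is the smallest jump value at which the target is met, and reconcile this with the paper's indexing convention. This is the one step I expect to need care with; otherwise the argument is bookkeeping.

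Finally, since $\delta^*$ is the minimum of the feasible set and $\delta\mapsto\delta^2$ is increasing on $[0,1]$, $\delta^*$ minimizes the bound of Theorem~\ref{thm:retention} among feasible thresholds. Combined with the monotonicity of $\eta_r$, it also minimizes the actual damage $\eta_r$.
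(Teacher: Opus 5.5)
Your proof is correct. For $\eta_f$ and for the step-function description of the threshold, it matches the paper's argument: the index set $\mathcal{I}_u(\delta)=\{i:\kappa_i\le\delta\}$ grows with $\delta$, and $\eta_f$ jumps only at the cosines.

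The real difference is how you prove monotonicity of $\eta_r$. The paper states the closed form $\eta_r(\delta)=\max_{i\in\mathcal{I}_u(\delta)}\cos^2\varphi_i$ without deriving it, and reads monotonicity off the growing index set. You instead use the Loewner ordering $\Pi_{u,\mu}(\delta)\preceq\Pi_{u,\mu}(\delta')$ of nested orthogonal projectors and maximize over the unit sphere of $\mathcal{S}_r$. Your route is self-contained and uses nothing about the canonical-angle structure, so it works for any nested family of subspaces.

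The paper's closed form does hold, because $B_{u,\mu}^\top\Phi_r$ is the row-selection of $\Sigma V_f^\top$ indexed by $\mathcal{I}_u(\delta)$. It also gives something yours does not. Since $\delta^*$ is itself one of the $\kappa_i$, it yields $\eta_r(\delta^*)=(\delta^*)^2$, so the bound of Theorem~\ref{thm:retention} is attained at the optimal threshold.

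You also supply details the paper skips. You establish feasibility via $\eta_f(1)\ge\tau_e\ge\eta_f^*$. You note right-continuity, which follows from the non-strict inequality $\kappa_i\le\delta$ and ensures the infimum of the feasible set is attained. And your final step shows that $\delta^*$ minimizes the actual damage $\eta_r$ among feasible thresholds, not just the bound $(\delta^*)^2$.

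The indexing concern you raise is real, and the paper's own proof does not resolve it. With $\varphi_1\le\dots\le\varphi_p$, the map $k\mapsto\eta_f(\cos\varphi_k)$ is non-increasing. Moreover $\eta_f(\cos\varphi_1)=\eta_f(1)\ge\tau_e\ge\eta_f^*$. Hence $\min\{k:\eta_f(\cos\varphi_k)\ge\eta_f^*\}=1$ always, which returns the largest cosine rather than the smallest feasible one. The paper simply repeats this formula and declares $\delta^*$ ``minimal by construction.''

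The correct choice is $k^*=\max\{k:\eta_f(\cos\varphi_k)\ge\eta_f^*\}$; equivalently, keep the $\min$ and order the angles non-increasingly. When you reconcile, amend the formula rather than bend your argument to match it.
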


\begin{proof}
We first establish monotonicity. For $\eta_f$, as $\delta$ increases, the index set $\mathcal{I}_u(\delta) = \{i : \cos\varphi_i \leq \delta\}$ can only grow. Since $B_{u,\mu}$ is formed by collecting the columns of $\Phi_f U$ indexed by $\mathcal{I}_u(\delta)$, the projection $B_{u,\mu}^\top G_{f,\mu}$ captures a non-decreasing share of the forget gradient energy:
\begin{equation}
    \delta' > \delta \;\Longrightarrow\; \mathcal{I}_u(\delta) \subseteq \mathcal{I}_u(\delta') \;\Longrightarrow\; \|B_{u,\mu}(\delta)^\top G_{f,\mu}\|_F^2 \;\leq\; \|B_{u,\mu}(\delta')^\top G_{f,\mu}\|_F^2,
\end{equation}
so $\eta_f(\delta)$ is non-decreasing. For $\eta_r$: by Theorem~\ref{thm:retention}, $\eta_r(\delta) \leq \delta^2$, and the bound $\delta^2$ is strictly increasing. More precisely,
\begin{equation}
    \eta_r(\delta) = \max_{i \in \mathcal{I}_u(\delta)} \cos^2\varphi_i,
\end{equation}
which is non-decreasing since $\mathcal{I}_u(\delta)$ grows with $\delta$.

Turning to the optimal threshold, we observe that $\eta_f$ is a step function that jumps only at the principal-angle cosines $\cos\varphi_1, \dots, \cos\varphi_p$. Therefore, the smallest $\delta$ achieving $\eta_f(\delta) \geq \eta_f^*$ must coincide with one of these cosines:
\begin{equation}
    \delta^* = \cos\varphi_{k^*}, \quad k^* = \min\bigl\{k : \eta_f(\cos\varphi_k) \geq \eta_f^*\bigr\}.
\end{equation}
Among all $\delta$ satisfying the constraint, $\delta^*$ is minimal by construction. Since $\eta_r \leq \delta^2$ is increasing in $\delta$, this $\delta^*$ simultaneously minimizes the retention damage bound $(\delta^*)^2$.
\end{proof}

\subsection{Proof of Theorem~\ref{thm:durability} (Unlearning Durability)}
\label{app:proof_durability}

\begin{theorem}[Unlearning Durability]
\label{thm:durability}
Suppose $\|\Pi_{u,\mu}\nabla_{w_\mu}\mathcal{L}_a\|_2 \leq G_u$ for all $w$ encountered during training and $\lr < 1/(2\alpha)$. The Temporal Re-anchoring drift after $t$ steps from the projected point obeys
\begin{equation}
    \|\Pi_{u,\mu}(w_\mu(t) - w_{n,\mu})\|_2 \;\leq\; \frac{G_u}{2\alpha}\,\bigl[1 - (1 - 2\lr\alpha)^t\bigr] \;\leq\; \frac{G_u}{2\alpha}.
    \label{eq:durability_bound}
\end{equation}
\end{theorem}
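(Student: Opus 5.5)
The plan is to work with the projected drift $d(t) := \Pi_{u,\mu}(w_\mu(t) - w_{n,\mu})$ and derive the linear recurrence Eq.~\eqref{eq:reanchor_recurrence}. After that, the bound follows from a standard contraction-plus-bounded-forcing argument.

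\textbf{Step 1: derive the recurrence.} A single local SGD step on the combined objective Eq.~\eqref{eq:local_obj} reads
\[
w_\mu(t+1) = w_\mu(t) - \lr\bigl(\nabla_{w_\mu}\mathcal{L}_a(w(t)) + \alpha\nabla_{w_\mu}\mathcal{L}_f(w(t))\bigr).
\]
By Proposition~\ref{prop:forget_lock}, $\nabla_{w_\mu}\mathcal{L}_f = 2\Pi_{u,\mu}(w_\mu - w_{n,\mu})$. I will subtract $w_{n,\mu}$ and apply $\Pi_{u,\mu}$. Idempotency from Proposition~\ref{prop:projector} gives $\Pi_{u,\mu}\cdot 2\Pi_{u,\mu}(w_\mu(t)-w_{n,\mu}) = 2d(t)$, which yields exactly
\[
d(t+1) = (1-2\lr\alpha)\,d(t) - \lr\,g_u(t), \qquad g_u(t) = \Pi_{u,\mu}\nabla_{w_\mu}\mathcal{L}_a(w(t)).
\]
The initial condition is $d(0)=0$, by the exact-erasure part (i) of Theorem~\ref{thm:completeness}, since we start from the projected point.

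\textbf{Step 2: bound the norm.} Set $c := 1-2\lr\alpha$. The hypothesis $\lr < 1/(2\alpha)$, together with $\lr,\alpha>0$, gives $c\in(0,1)$, so $|c| = c$. The triangle inequality and the hypothesis $\|g_u(t)\|_2\le G_u$ give
\[
\|d(t+1)\|_2 \le c\,\|d(t)\|_2 + \lr G_u.
\]
Inducting from $\|d(0)\|_2=0$ gives
\[
\|d(t)\|_2 \le \lr G_u\sum_{s=0}^{t-1} c^s = \lr G_u\,\frac{1-c^t}{1-c} = \frac{G_u}{2\alpha}\bigl[1-(1-2\lr\alpha)^t\bigr].
\]
The second inequality in Eq.~\eqref{eq:durability_bound} then holds because $0 < c^t < 1$. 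Alternatively, one can unroll the recurrence as $d(t) = -\lr\sum_{s=0}^{t-1} c^{t-1-s} g_u(s)$ and bound it termwise.

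\textbf{Main obstacle.} The delicate point is not the algebra but justifying the recurrence in the federated setting. Steps are interleaved with server aggregation, and in Phase II with server re-projection. I would handle these by a non-expansiveness check. Averaging: $\Pi_{u,\mu}$ is linear and every client shares the same $w_{n,\mu}$, so the aggregated drift is the average of client drifts, and by convexity of the norm it inherits the same bound. Re-projection: Eq.~\eqref{eq:proj_step} resets $d$ to $0$, which only decreases the norm. Because the bound $b_t = \frac{G_u}{2\alpha}(1-c^t)$ is increasing in $t$ and satisfies $b_{t+1} = c\,b_t + \lr G_u$, an induction on the per-step invariant $\|d\|_2 \le b_t$ survives both operations, with $t$ counting local steps along any client trajectory. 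I would state this invariant explicitly. Throughout I would note that only the uniform bound $G_u$ is used; A1 is invoked merely to make $G_u$ finite on the training region, and no further smoothness is needed.
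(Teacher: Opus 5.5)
Your proposal is correct and follows the paper's proof essentially step for step: it derives the projected recurrence $d(t{+}1) = (1-2\lr\alpha)\,d(t) - \lr\,g_u(t)$ using idempotency of $\Pi_{u,\mu}$, takes $d(0)=0$ from exact erasure, obtains $\|d(t{+}1)\|_2 \le (1-2\lr\alpha)\|d(t)\|_2 + \lr G_u$ by the triangle inequality, and unrolls the geometric series. Your invariant argument for the federated setting goes beyond the paper, which analyzes only a single SGD trajectory: aggregation is non-expansive on the projected drift, re-projection resets it to zero, and $b_{t+1} = c\,b_t + \lr G_u$ with $b_t$ increasing, so the bound correctly carries over to the interleaved client--server dynamics.
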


\begin{proof}
Fix modality $\mu$ and let $d(t) = \Pi_{u,\mu}(w_\mu(t) - w_{n,\mu})$ denote the unique-subspace displacement at step $t$. A single SGD step under the combined objective~\eqref{eq:local_obj} updates:
\begin{equation}
    w_\mu(t{+}1) = w_\mu(t) - \lr\bigl[\nabla_{w_\mu}\mathcal{L}_a(w(t)) + 2\alpha\,\Pi_{u,\mu}(w_\mu(t) - w_{n,\mu})\bigr].
\end{equation}
Applying $\Pi_{u,\mu}$ to both sides and writing $g_u(t) = \Pi_{u,\mu}\nabla_{w_\mu}\mathcal{L}_a(w(t))$:
\begin{equation}
\begin{split}
    d(t{+}1)
    &= \Pi_{u,\mu}\bigl[w_\mu(t{+}1) - w_{n,\mu}\bigr] \\
    &= \Pi_{u,\mu}(w_\mu(t) - w_{n,\mu}) - \lr\,\Pi_{u,\mu}\nabla_{w_\mu}\mathcal{L}_a - 2\lr\alpha\,\Pi_{u,\mu}^2(w_\mu(t) - w_{n,\mu}) \\
    &= d(t) - \lr\,g_u(t) - 2\lr\alpha\,d(t) \\
    &= (1 - 2\lr\alpha)\,d(t) - \lr\,g_u(t),
\end{split}
\label{eq:app_dur_recurrence}
\end{equation}
where $\|g_u(t)\|_2 \leq G_u$ by assumption. Taking norms and applying the triangle inequality:
\begin{equation}
    \|d(t{+}1)\|_2 \;\leq\; (1 - 2\lr\alpha)\,\|d(t)\|_2 + \lr\,G_u.
\end{equation}
Since $\lr < 1/(2\alpha)$, the contraction factor $\gamma = 1 - 2\lr\alpha \in (0, 1)$. Starting from $\|d(0)\|_2 = 0$ (exact erasure by Theorem~\ref{thm:completeness}), unrolling the recurrence gives:
\begin{equation}
\begin{split}
    \|d(t)\|_2
    &\leq \lr G_u \sum_{s=0}^{t-1} \gamma^s \\
    &= \frac{\lr G_u\,(1 - \gamma^t)}{1 - \gamma} \\
    &= \frac{G_u}{2\alpha}\,\bigl[1 - (1 - 2\lr\alpha)^t\bigr].
\end{split}
\end{equation}
Since $(1 - 2\lr\alpha)^t \geq 0$, the bound is at most $G_u/(2\alpha)$.
\end{proof}

\subsection{Proof of Theorem~\ref{prop:bilateral} (Anchor Reconstruction Tendency under Unilateral Excision)}
\label{app:proof_bilateral}

\begin{theorem}[Anchor Reconstruction Tendency under Unilateral Excision]
\label{prop:bilateral}
Under Assumptions A1--A4 (Appendix~\ref{app:assumptions}), suppose excision is applied to modality $\mu$ only, producing $w_\mu^*$ with $\Pi_{u,\mu}(w_\mu^* - w_{n,\mu}) = 0$, while $w_{\bar\mu}$ is held at $w_{n,\bar\mu}$. Define the Modality Anchor signal along the unique subspace at the excised point as $g_0 := \|\Pi_{u,\mu}\nabla_{w_\mu}\mathcal{L}_a(w_\mu^*, w_{n,\bar\mu})\|_2$. Then:
\begin{enumerate}[label=(\roman*)]
    \item Anchor survives (A1--A2): $g_0 > 0$ whenever the anchor embedding $z_{\bar\mu}(w_{n,\bar\mu})$ has a non-zero component along the range of $J_\mu(w_\mu^*)^\top$ restricted to $\mathcal{U}_\mu$.
    \item Drift lower-bound tendency (A1--A2, A4): one local SGD step on the retain objective gives $\|\Pi_{u,\mu}(w_\mu(1) - w_{n,\mu})\|_2 \geq \lr\,g_0$. Under the alignment condition A4, for step $\lr < 1/(2L)$ and $k$ steps within the linearization radius,
    \begin{equation}
        \|\Pi_{u,\mu}(w_\mu(k) - w_{n,\mu})\|_2 \;\geq\; \frac{\lr\, g_0}{2L}\bigl[1 - (1 - 2\lr L)^k\bigr],
        \label{eq:bilateral_rate}
    \end{equation}
    which grows monotonically and approaches $g_0/(2L)$ geometrically.
    \item Bilateral suppression (A3): applying excision to both $\mu \in \{v, t\}$ simultaneously suppresses the Modality Anchor signal along unique directions to a residual term bounded by $\epsilon_{\mathrm{anchor}}$, so $g_0 \leq \epsilon_{\mathrm{anchor}}$ and the drift in~\eqref{eq:bilateral_rate} is at most $\epsilon_{\mathrm{anchor}}/(2L)$.
\end{enumerate}
\end{theorem}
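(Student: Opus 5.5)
The plan is to prove the three claims separately, mirroring the recurrence technique of Theorem~\ref{thm:durability} but with inequalities reversed.

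\textbf{Claim (i).} Start from the gradient decomposition Eq.~\eqref{eq:infonce_grad} evaluated at $(w_\mu^*, w_{n,\bar\mu})$. By the chain rule, $\nabla_{w_\mu}\mathcal{L}_a = \gamma^{-1} J_\mu(w_\mu^*)^\top e$. Here $e$ is the softmax-weighted mean of the other-modality embeddings minus the matched anchor embedding $z_{\bar\mu,i}(w_{n,\bar\mu})$. Applying $\Pi_{u,\mu}$ gives $g_0 = \gamma^{-1}\|\Pi_{u,\mu}J_\mu(w_\mu^*)^\top e\|_2$. A2 lets us treat $J_\mu$ as well defined near $w_n$, and A1 keeps it bounded. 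The hypothesis is that the anchor embedding has a nonzero component in the relevant range; I read this as $\Pi_{u,\mu}J_\mu(w_\mu^*)^\top e \neq 0$, and then $g_0>0$ follows immediately. The only subtlety is that the hypothesis is stated about $z_{\bar\mu}$ rather than $e$. Under the dominance approximation of Eq.~\eqref{eq:anchor_grad}, $e$ is proportional to the anchor embedding up to lower-order terms, so I would state the condition directly in terms of $e$ to avoid a gap.

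\textbf{Claim (ii), one-step bound.} Take one SGD step $w_\mu(1) = w_\mu^* - \lr \nabla_{w_\mu}\mathcal{L}_a(w_\mu^*, w_{n,\bar\mu})$ with no lock. Project onto $\mathcal{U}_\mu$ and use $d(0)=0$ from Theorem~\ref{thm:completeness}(i). This gives $d(1) = -\lr\,\Pi_{u,\mu}\nabla\mathcal{L}_a$, so $\|d(1)\|_2 = \lr g_0$ exactly.

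\textbf{Claim (ii), multi-step bound.} Write $d(k+1) = d(k) + \lr\, g(k)$ with $g(k)$ as in A4. A4 supplies two things: $g(k)$ stays sign-aligned with the initial direction, and $\|g(k)\|_2 \ge g_0 - 2L\|d(k)\|_2$. Sign alignment lets us add norms along a common direction, i.e.\ $\|d(k+1)\|_2 \ge \|d(k)\|_2 + \lr\|g(k)\|_2$. Combining the two gives
\[
\|d(k+1)\|_2 \ge (1-2\lr L)\|d(k)\|_2 + \lr g_0 .
\]
Since $\lr<1/(2L)$, the factor $1-2\lr L$ lies in $(0,1)$. Unrolling from $d(0)=0$ exactly as in Theorem~\ref{thm:durability} gives $\|d(k)\|_2 \ge \lr g_0\sum_{s<k}(1-2\lr L)^s$. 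This geometric sum evaluates to $\frac{g_0}{2L}[1-(1-2\lr L)^k]$, which is the stated limit $g_0/(2L)$ and monotone increase.

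I note that the displayed Eq.~\eqref{eq:bilateral_rate} carries an extra factor $\lr$. I would prove the stronger unrolled bound above and observe that it implies the displayed one whenever $\lr\le 1$. Alternatively, I would flag the constant as a typo relative to the stated limit.

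\textbf{Main obstacle.} The hardest step is justifying the norm addition $\|d(k)+\lr g(k)\|_2 \ge \|d(k)\|_2+\lr\|g(k)\|_2$. It holds only if $g(k)$ is positively collinear with $d(k)$. I would interpret the sign-alignment clause of A4 as exactly this collinearity, i.e.\ $\langle d(k),g(k)\rangle \ge \|d(k)\|_2\|g(k)\|_2$ along the dominant direction. If only a weaker inner-product alignment is available, I would instead project onto the fixed unit vector $\hat g = g(0)/g_0$. I would then run the scalar recurrence on $\langle d(k),\hat g\rangle$, which lower-bounds $\|d(k)\|_2$, and this gives the same geometric bound.

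\textbf{Claim (iii).} Apply A3 at the bilaterally excised point. The Modality Anchor signal is then $\|\Pi_{u,\mu}\nabla\mathcal{L}_a(w_v^*,w_t^*)\|_2 \le \epsilon_{\mathrm{anchor}}$, so the quantity playing the role of $g_0$ is at most $\epsilon_{\mathrm{anchor}}$. For the drift bound, reuse the upper-bound recurrence of Theorem~\ref{thm:durability} with the $L$-smoothness term in place of $2\alpha$. This is the analogue of the lower bound run with upper inequalities: $\|g(k)\|_2 \le \epsilon_{\mathrm{anchor}} + 2L\|d(k)\|_2$ by A1. Under the same local restorative structure this yields the ceiling $\epsilon_{\mathrm{anchor}}/(2L)$.
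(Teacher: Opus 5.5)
Your arguments for (i) and (ii) match the paper's. In (i), you use the same chain-rule/dominance reading, under which the hypothesis amounts to $\Pi_{u,\mu}J_\mu(w_\mu^*)^\top e\neq 0$. In (ii), you use $d(k{+}1)=d(k)+\lr\,g(k)$, then norm addition justified by the sign-alignment clause of A4, then the same geometric unrolling. The paper uses that clause silently when it ``takes norms.'' You are right that the unrolling gives $\frac{g_0}{2L}[1-(1-2\lr L)^k]$ rather than the displayed expression with an extra $\lr$; the paper simply asserts the two coincide. Your fallback, however, does not give the same bound for free. Set $s(k)=\langle d(k),\hat g\rangle$. With A4 as stated, the scalar recurrence yields only $s(k{+}1)\ge s(k)+\lr g_0-2\lr L\|d(k)\|_2$. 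Since $\|d(k)\|_2\ge s(k)$, this is weaker than $(1-2\lr L)s(k)+\lr g_0$. You would need the alignment condition stated in terms of $s(k)$.

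The genuine gap is in (iii). You try to prove an actual ceiling on the drift by rerunning the recurrence of Theorem~\ref{thm:durability} with $L$ in place of $\alpha$. That recurrence contracts only because the lock gradient $2\alpha\,d(t)$ is a \emph{signed} restoring term pointing exactly along $-d(t)$. A norm bound $\|g(k)\|_2\le\epsilon_{\mathrm{anchor}}+2L\|d(k)\|_2$ carries no sign. Combined with the triangle inequality it gives $\|d(k{+}1)\|_2\le(1+2\lr L)\|d(k)\|_2+\lr\,\epsilon_{\mathrm{anchor}}$, which grows geometrically, so no ceiling follows.

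The ``local restorative structure'' you appeal to is an extra hypothesis that A1--A4 do not supply. A4 is stated only for the unilateral case, and it postulates a persistent push away from the excised point, not a pull back. Your inequality is also not a consequence of A1. Smoothness bounds $\|g(k)-g(0)\|_2$ by $L$ times the full displacement $\|w(k)-w^*\|_2$, not by $\|d(k)\|_2$. That displacement includes the unconstrained $\mathcal{U}_\mu^\perp$ components and, in the bilateral case, the motion of the other branch. Finally, A3 controls the projected gradient only at the excised point, not along the trajectory.

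The paper's (iii) is much weaker, and it is all these assumptions support. A3 gives $g_0\le\epsilon_{\mathrm{anchor}}$. Substitute this into the lower bound from (ii) and use $1-(1-2\lr L)^k\le 1$. This shows the reconstruction pressure certified by \eqref{eq:bilateral_rate} is at most $\epsilon_{\mathrm{anchor}}/(2L)$. Replace your upper-bound recurrence with that substitution.
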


\begin{proof}
We analyze the gradient of the alignment loss $\mathcal{L}_a$ with respect to the parameters of a single modality, and show that unilateral unlearning leaves a Modality Anchor signal from the unmodified branch that drives $\rho$ away from zero.

The cross-modal similarity $s(w) = z_{i,v}(w_v)^\top z_{j,t}(w_t)$ is an inner product of two vector-valued functions depending on disjoint parameter blocks. Let $J_{v}(w_v) = \partial z_{i,v} / \partial w_v \in \mathbb{R}^{m \times d_v}$ and $J_{t}(w_t) = \partial z_{j,t} / \partial w_t \in \mathbb{R}^{m \times d_t}$ denote the Jacobians. By the inner product differentiation rule:
\begin{equation}
\begin{split}
    \nabla_{w_v} s(w)
    &= \nabla_{w_v}\bigl[z_{i,v}(w_v)^\top z_{j,t}(w_t)\bigr] \\
    &= J_{v}(w_v)^\top\, z_{j,t}(w_t),
\end{split}
    \label{eq:app_bil_grad_v}
\end{equation}
where the cross term vanishes because $z_{j,t}$ does not depend on $w_v$. By symmetry:
\begin{equation}
    \nabla_{w_t} s(w) = J_{t}(w_t)^\top\, z_{i,v}(w_v).
    \label{eq:app_bil_grad_t}
\end{equation}

Now consider the unilateral case where only $w_v$ is unlearned. Suppose the projection and Forget Lock are applied to $w_v$ only, producing $w_v^*$, while the language parameters remain at $w_{n,t}$. At the unlearned point $w^* = (w_v^*, w_{n,t})$, the alignment gradient with respect to $w_v$ involves terms of the form
\begin{equation}
    \nabla_{w_v} \mathcal{L}_a \;\propto\; J_{v}(w_v^*)^\top\, z_{j,t}(w_{n,t}),
    \label{eq:app_bil_unilateral_v}
\end{equation}
where the proportionality absorbs the softmax weights from the InfoNCE loss. The critical observation is that $z_{j,t}(w_{n,t})$ is the original textual embedding computed under the pre-unlearning parameters $w_{n,t}$, which encodes the full memory of the forgotten cross-modal alignment. This embedding acts as the Modality Anchor: during retraining, the gradient in Eq.~\eqref{eq:app_bil_unilateral_v} continuously pushes $w_v$ toward directions that re-align $z_{i,v}$ with the unchanged $z_{j,t}(w_{n,t})$, thereby reconstructing the forgotten pairing and driving $\rho$ away from zero.

By an identical argument with the roles of $v$ and $t$ exchanged, if only $w_t$ is unlearned to $w_t^*$ while $w_v$ remains at $w_{n,v}$, the alignment gradient with respect to $w_t$ at the unlearned point $w^* = (w_{n,v}, w_t^*)$ satisfies
\begin{equation}
    \nabla_{w_t} \mathcal{L}_a \;\propto\; J_{t}(w_t^*)^\top\, z_{i,v}(w_{n,v}).
    \label{eq:app_bil_unilateral_t}
\end{equation}
Here $z_{i,v}(w_{n,v})$ is the original visual embedding, which serves as the Modality Anchor that pulls the language branch back toward the erased alignment during retraining, again yielding $\rho > 0$.

Now suppose both modalities are unlearned simultaneously, producing $w^* = (w_v^*, w_t^*)$. The projection in Eq.~\eqref{eq:proj_step} removes the unique-subspace components from both $w_v$ and $w_t$, displacing both embeddings away from the original alignment. At the bilateral unlearned point, the alignment gradients become
\begin{equation}
    \nabla_{w_v} \mathcal{L}_a \;\propto\; J_{v}(w_v^*)^\top\, z_{j,t}(w_t^*),
    \label{eq:app_bil_bilateral_v}
\end{equation}
\begin{equation}
    \nabla_{w_t} \mathcal{L}_a \;\propto\; J_{t}(w_t^*)^\top\, z_{i,v}(w_v^*).
    \label{eq:app_bil_bilateral_t}
\end{equation}
Crucially, both $z_{j,t}(w_t^*)$ and $z_{i,v}(w_v^*)$ are now displaced embeddings that no longer encode the original cross-modal alignment. To formalize this, we verify that the projection removes the forget-specific components from both branches. For the visual side, Theorem~\ref{thm:completeness} gives:
\begin{equation}
    \Pi_{u,v}\,(w_v^* - w_{n,v}) = 0,
    \label{eq:app_bil_proj_v}
\end{equation}
and the Forget Lock (Proposition~\ref{prop:forget_lock}) maintains this condition throughout retraining. An analogous condition holds for the language side:
\begin{equation}
    \Pi_{u,t}\,(w_t^* - w_{n,t}) = 0.
    \label{eq:app_bil_proj_t}
\end{equation}

Since the forget-specific knowledge resides in $\mathcal{U}_v$ and $\mathcal{U}_t$, Eqs.~\eqref{eq:app_bil_proj_v}--\eqref{eq:app_bil_proj_t} guarantee zero displacement along the forget-specific directions in both modalities. When only modality $\mu$ is excised, the alignment gradient retains the original embedding of the unmodified branch:
\begin{equation}
    \nabla_{w_\mu}\mathcal{L}_a \;\propto\; J_\mu(w_\mu^*)^\top\, z_{\bar{\mu}}(w_{n,\bar{\mu}}),
\end{equation}
which encodes the forgotten alignment and acts as a fixed Modality Anchor; whenever $z_{\bar{\mu}}(w_{n,\bar{\mu}})$ has a non-zero component along the range of $J_\mu(w_\mu^*)^\top$ restricted to $\mathcal{U}_\mu$, one has $g_0 > 0$, establishing claim~(i). When both modalities are excised simultaneously, the gradient instead depends on the displaced embedding:
\begin{equation}
    \nabla_{w_\mu}\mathcal{L}_a \;\propto\; J_\mu(w_\mu^*)^\top\, z_{\bar{\mu}}(w_{\bar{\mu}}^*).
\end{equation}
Since $\Pi_{u,\bar{\mu}}(w_{\bar{\mu}}^* - w_{n,\bar{\mu}}) = 0$, the embedding $z_{\bar{\mu}}(w_{\bar{\mu}}^*)$ no longer encodes the forgotten pairing, removing the dominant Modality Anchor signal along unique directions. By Assumption~A3 (Appendix~\ref{app:assumptions}), the residual anchor gradient at the bilateral excised point is bounded by $\epsilon_{\mathrm{anchor}}$, i.e., $g_0 \leq \epsilon_{\mathrm{anchor}}$, establishing claim~(iii).

\paragraph{Drift lower-bound tendency (claim~(ii)).}
For the unilateral case, let $d(k) = \Pi_{u,\mu}(w_\mu(k) - w_{n,\mu})$ denote the unique-subspace drift after $k$ local SGD steps, with $d(0) = 0$ by the projection. The SGD update projected onto $\mathcal{U}_\mu$ satisfies $d(k{+}1) = d(k) + \lr\, g(k)$, where $g(k) := -\Pi_{u,\mu}\nabla_{w_\mu}\mathcal{L}_a(w_\mu(k), w_{n,\bar{\mu}})$. At $k=0$, $g(0) = -\Pi_{u,\mu}\nabla_{w_\mu}\mathcal{L}_a(w_\mu^*, w_{n,\bar{\mu}})$ with $\|g(0)\|_2 = g_0$, giving the one-step drift lower bound
\begin{equation}
    \|d(1)\|_2 \;=\; \lr\,\|g(0)\|_2 \;=\; \lr\, g_0,
\end{equation}
which requires only A1--A2.

For $k \geq 2$, we invoke Assumption~A4 (directional alignment), which asserts that within the linearization radius the unique-subspace gradients remain sign-aligned with $g(0)$, so that $\|g(k)\|_2 \geq g_0 - 2L\|d(k)\|_2$ under A1 (the first term is the Modality Anchor signal at the excised point, and the $2L\|d(k)\|_2$ term accounts for first-order smoothness deviation as $w_\mu$ drifts). Taking norms of $d(k{+}1) = d(k) + \lr\, g(k)$ and applying this lower bound gives
\begin{equation}
    \|d(k{+}1)\|_2 \;\geq\; (1 - 2\lr L)\,\|d(k)\|_2 + \lr\, g_0.
\end{equation}
Since $\lr < 1/(2L)$, the contraction factor $\gamma := 1 - 2\lr L \in (0, 1)$. Unrolling from $d(0) = 0$:
\begin{equation}
    \|d(k)\|_2 \;\geq\; \lr\, g_0 \sum_{s=0}^{k-1}\gamma^s \;=\; \frac{\lr\, g_0 (1 - \gamma^k)}{1 - \gamma} \;=\; \frac{g_0}{2L}\bigl[1 - (1 - 2\lr L)^k\bigr],
\end{equation}
which is Eq.~\eqref{eq:bilateral_rate} in the theorem statement. The drift grows monotonically in $k$ and approaches $g_0/(2L)$ geometrically. Under bilateral excision, A3 forces $g_0 \leq \epsilon_{\mathrm{anchor}}$, so the lower-bound tendency in Eq.~\eqref{eq:bilateral_rate} is suppressed to the scale of $\epsilon_{\mathrm{anchor}}/(2L)$, removing the systematic reconstruction pressure present in the unilateral case.
\end{proof}

\section{Compute-Matched Retrain Comparison}
\label{sec:app_compute_matched}

A natural reviewer concern is whether the efficiency gains of \underline{\texttt{EASE}} simply reflect a smaller training budget relative to retrain. To isolate budget from algorithmic contribution, we compare our method against retrain early-stopped at $25\%$, $50\%$, and $75\%$ of the full federated round count, alongside the full-budget retrain reference. The communication cost $Comm.$ directly reflects the compute budget consumed, since per-round payload is constant within each scenario. Results on Flickr30K with CLIP-B/32 are reported in Table~\ref{tab:compute_matched_flickr30k_clipb32}.

\begin{table*}[!ht]
\centering
\setlength{\tabcolsep}{3pt}
\setlength{\aboverulesep}{0.3pt}
\setlength{\belowrulesep}{0.3pt}
\renewcommand{\arraystretch}{0.95}
\caption{Compute-matched retrain comparison on Flickr30K with CLIP-B/32. Retrain-$\alpha\%$ denotes early-stopping retrain executed for $\alpha\%$ of the full communication rounds; its budget is reflected by $Comm.$ The top sub-table reports forget-side metrics ($F\text{-}R@k$) and membership inference (MIA); the bottom sub-table reports retain-side metrics ($R\text{-}R@k$) and communication cost ($Comm.$, MB). Parenthetical values give the absolute gap to the full-budget retrain reference with \up{green} closer to ideal and \down{red} farther; for MIA and $Comm.$, \gap{val} is the absolute difference to the full-budget retrain. \textit{Retrain (Full)} is the reference and does not participate in ranking.}
\label{tab:compute_matched_flickr30k_clipb32}
\resizebox{\linewidth}{!}{
\begin{tabular}{l|cccc|cccc|cccc}
\Xhline{1.2pt}
\rowcolor{CadetBlue!20}
\textbf{Method} & \multicolumn{4}{c|}{\textbf{Client Unlearning}} & \multicolumn{4}{c|}{\textbf{Class Unlearning}} & \multicolumn{4}{c}{\textbf{Sample Unlearning}} \\
\cmidrule(lr){2-5} \cmidrule(lr){6-9} \cmidrule(lr){10-13}
\rowcolor{CadetBlue!20}
 & $F\text{-}R@1\;\downarrow$ & $F\text{-}R@5\;\downarrow$ & $F\text{-}R@10\;\downarrow$ & $MIA$ & $F\text{-}R@1\;\downarrow$ & $F\text{-}R@5\;\downarrow$ & $F\text{-}R@10\;\downarrow$ & $MIA$ & $F\text{-}R@1\;\downarrow$ & $F\text{-}R@5\;\downarrow$ & $F\text{-}R@10\;\downarrow$ & $MIA$ \\
\Xhline{1.2pt}
\rowcolor{gray!10}
Retrain-25\% & 0.0 \up{0.1} & 0.4 \up{1.1} & 0.9 \up{1.4} & 12.3 \gap{4.4} & 0.1 \up{0.1} & 0.4 \up{0.3} & 0.9 \up{0.5} & 19.5 \gap{4.7} & 0.0 \up{0.1} & 0.3 \up{0.8} & 0.8 \up{1.6} & 14.8 \gap{1.9} \\
Retrain-50\% & 0.0 \up{0.1} & 0.8 \up{0.7} & 1.6 \up{0.7} & 15.2 \gap{1.5} & 0.2 & 0.6 \up{0.1} & 1.1 \up{0.3} & 22.8 \gap{1.4} & 0.0 \up{0.1} & 0.6 \up{0.5} & 1.6 \up{0.8} & 16.0 \gap{0.7} \\
\rowcolor{gray!10}
Retrain-75\% & 0.1 & 1.2 \up{0.3} & 1.9 \up{0.4} & 16.4 \gap{0.3} & 0.2 & 0.7 & 1.3 \up{0.1} & 23.9 \gap{0.3} & 0.1 & 0.9 \up{0.2} & 2.1 \up{0.3} & 16.5 \gap{0.2} \\
\textbf{EASE} & 0.3 \down{0.2} & 1.0 \up{0.5} & 2.2 \up{0.1} & 16.9 \gap{0.2} & 0.3 \down{0.1} & 1.5 \down{0.8} & 2.6 \down{1.2} & 25.4 \gap{1.2} & 0.1 & 2.2 \down{1.1} & 5.5 \down{3.1} & 16.7 \\
\textit{Retrain (Full)} & 0.1 & 1.5 & 2.3 & 16.7 & 0.2 & 0.7 & 1.4 & 24.2 & 0.1 & 1.1 & 2.4 & 16.7 \\
\Xhline{1.2pt}
\end{tabular}}
\vspace{-2pt}
\resizebox{\linewidth}{!}{
\begin{tabular}{l|cccc|cccc|cccc}
\Xhline{1.2pt}
\rowcolor{CadetBlue!20}
\textbf{Method} & $R\text{-}R@1\;\uparrow$ & $R\text{-}R@5\;\uparrow$ & $R\text{-}R@10\;\uparrow$ & $Comm.\;\downarrow$ & $R\text{-}R@1\;\uparrow$ & $R\text{-}R@5\;\uparrow$ & $R\text{-}R@10\;\uparrow$ & $Comm.\;\downarrow$ & $R\text{-}R@1\;\uparrow$ & $R\text{-}R@5\;\uparrow$ & $R\text{-}R@10\;\uparrow$ & $Comm.\;\downarrow$ \\
\Xhline{1.2pt}
\rowcolor{gray!10}
Retrain-25\% & 62.3 \down{28.4} & 88.5 \down{11.4} & 94.7 \down{5.3} & 65.6 \gap{196.9} & 61.7 \down{27.7} & 87.5 \down{12.4} & 93.9 \down{6.1} & 49.2 \gap{147.7} & 54.2 \down{26.4} & 83.1 \down{14.6} & 90.5 \down{8.9} & 49.2 \gap{147.7} \\
Retrain-50\% & 79.8 \down{10.9} & 97.1 \down{2.8} & 99.0 \down{1.0} & 131.3 \gap{131.2} & 78.0 \down{11.4} & 95.6 \down{4.3} & 98.3 \down{1.7} & 98.5 \gap{98.4} & 69.2 \down{11.4} & 92.6 \down{5.1} & 96.8 \down{2.6} & 98.5 \gap{98.4} \\
\rowcolor{gray!10}
Retrain-75\% & 87.1 \down{3.6} & 99.3 \down{0.6} & 99.8 \down{0.2} & 196.9 \gap{65.6} & 85.1 \down{4.3} & 98.5 \down{1.4} & 99.5 \down{0.5} & 147.7 \gap{49.2} & 76.8 \down{3.8} & 96.4 \down{1.3} & 98.5 \down{0.9} & 147.7 \gap{49.2} \\
\textbf{EASE} & 86.5 \down{4.2} & 99.2 \down{0.7} & 99.8 \down{0.2} & 52.5 \gap{210.0} & 80.8 \down{8.6} & 96.6 \down{3.3} & 99.3 \down{0.7} & 47.2 \gap{149.7} & 79.4 \down{1.2} & 97.7 & 99.2 \down{0.2} & 39.4 \gap{157.5} \\
\textit{Retrain (Full)} & 90.7 & 99.9 & 100.0 & 262.5 & 89.4 & 99.9 & 100.0 & 196.9 & 80.6 & 97.7 & 99.4 & 196.9 \\
\Xhline{1.2pt}
\end{tabular}}
\end{table*}

\underline{\texttt{EASE}} uses less communication than Retrain-$25\%$ in every scenario (Table~\ref{tab:compute_matched_flickr30k_clipb32}) yet improves $R\text{-}R@1$ by $+24.2$, $+19.1$, and $+25.2$ points on the client, class, and sample scenarios, respectively, while keeping forget-side metrics near the full-budget retrain reference. Retrain-$25\%$ can also produce low forget recall, but its retain collapse indicates undertraining rather than targeted unlearning. EASE reaches retain quality close to Retrain-$75\%$ at only $20$--$25\%$ of the full communication budget.

\section{Likelihood-Ratio Attack at Low False-Positive Rate}
\label{sec:app_lira}

We report $\mathrm{LiRA}~\mathrm{TPR}@\mathrm{FPR}{=}1\%$, defined in Appendix~\ref{sec:app_metrics}, across all nine (dataset, backbone) combinations and three unlearning scenarios in Figure~\ref{fig:lira_all}, averaged over three seeds with standard-deviation error bars.

\begin{figure}[h]
    \centering
    \includegraphics[width=\linewidth]{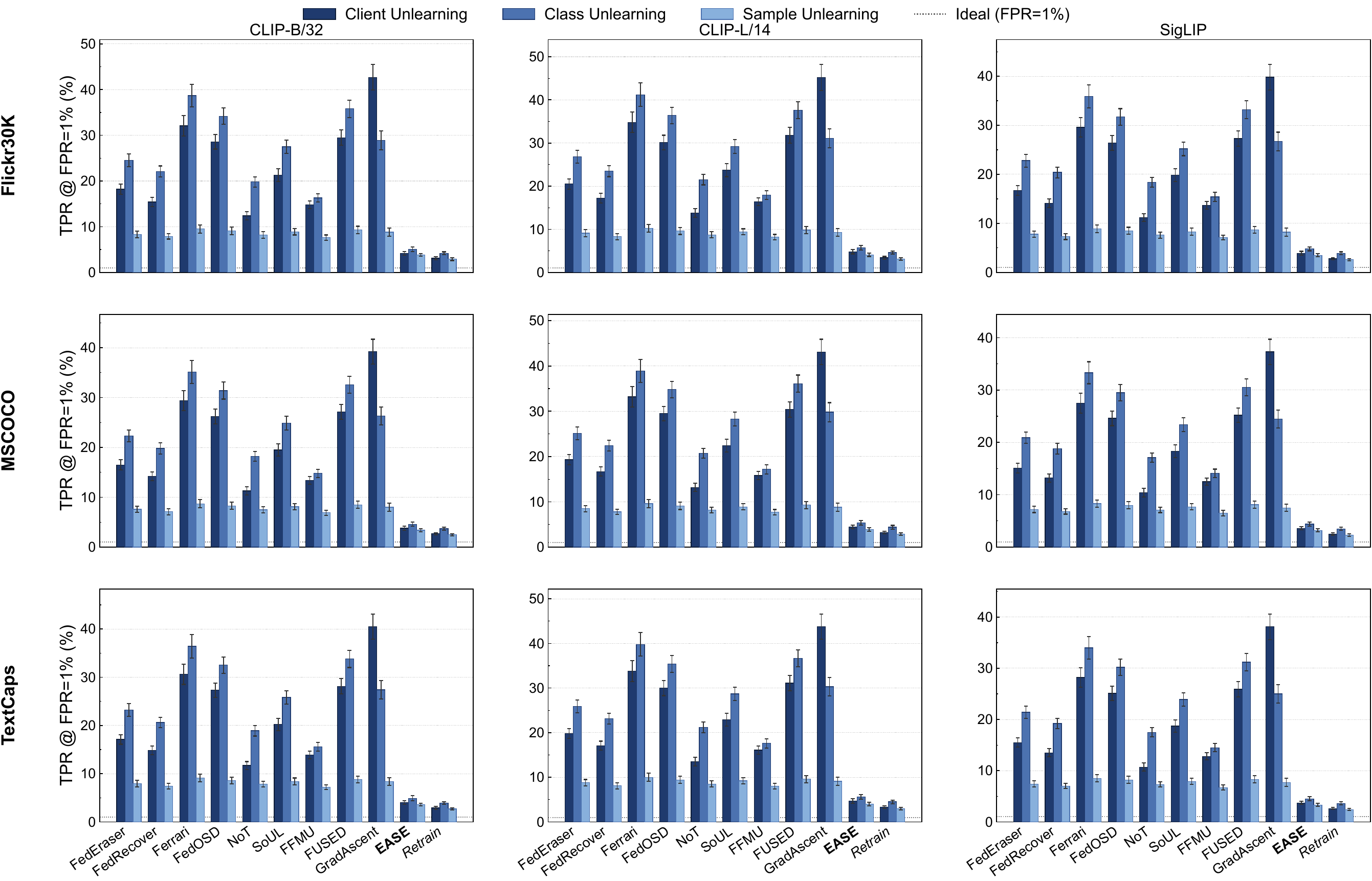}
    \caption{$\mathrm{TPR}@\mathrm{FPR}=1\%$ across three datasets (rows: Flickr30K, MSCOCO, TextCaps) and three backbones (columns: CLIP-B/32, CLIP-L/14, SigLIP). Lower is better; the dotted line marks the $1\%$ random-guess reference for non-member-like behavior. \textit{Retrain} is the non-participating reference.}
    \label{fig:lira_all}
\end{figure}

\paragraph{Cross-dataset pattern.}
Across all nine combinations, \underline{\texttt{EASE}} stays within $1.5$ points of \textit{Retrain} on every scenario, placing it consistently at the bottom of each subplot. In contrast, forget-reconstruction baselines such as Ferrari, FedOSD, and FUSED sit in the $25$--$42$\% range on the client and class scenarios, suggesting that their aggregate MIA values in Section~\ref{sec:q1} underestimate how much forget-side signal a targeted attacker can still recover. GradAscent produces the largest low-FPR leakage in the client scenario, consistent with the known instability of gradient-reversal updates on tightly coupled modalities.

\paragraph{Backbone sensitivity.}
The relative ordering is stable across CLIP-B/32, CLIP-L/14, and SigLIP; absolute values shift upward for larger encoders (CLIP-L/14) due to greater capacity for memorization, but the gap between \underline{\texttt{EASE}} and baselines widens rather than shrinks. This is consistent with bilateral excision targeting the Modality Anchor that scales with encoder capacity, while per-modality baselines do not close this anchor regardless of backbone size.

\paragraph{Scenario dependence.}
The sample scenario produces uniformly lower $\mathrm{TPR}$ than client and class, because sample forget sets are sparsely distributed across clients and carry a weaker attacker signature in aggregate. Even so, our method approaches \textit{Retrain} most closely in this regime, indicating that gains from bilateral excision persist even where the residual leakage is already small.

\section{Full Comparison Results}
\label{sec:app_baselines_full}

We report the remaining eight (dataset, backbone) combinations here, extending Table~\ref{tab:q1_flickr30k_clipb32} in the main body. Each combination follows the same two-part layout and ranking convention: the top sub-table covers forget-side metrics ($F\text{-}R@k$) and MIA, and the bottom sub-table covers retain-side metrics ($R\text{-}R@k$) and communication cost.

\begin{table*}[!ht]
\centering
\setlength{\tabcolsep}{3pt}
\setlength{\aboverulesep}{0.3pt}
\setlength{\belowrulesep}{0.3pt}
\renewcommand{\arraystretch}{0.95}
\caption{Main comparison on Flickr30K with CLIP-L/14 across three unlearning scenarios. Convention follows Table~\ref{tab:q1_flickr30k_clipb32}.}
\label{tab:q1_flickr30k_clipl14}
\resizebox{\linewidth}{!}{
}
\end{table*}

\begin{table*}[!ht]
\centering
\setlength{\tabcolsep}{3pt}
\setlength{\aboverulesep}{0.3pt}
\setlength{\belowrulesep}{0.3pt}
\renewcommand{\arraystretch}{0.95}
\caption{Main comparison on Flickr30K with SigLIP across three unlearning scenarios. Convention follows Table~\ref{tab:q1_flickr30k_clipb32}.}
\label{tab:q1_flickr30k_siglip}
\resizebox{\linewidth}{!}{
%
}
\end{table*}

\begin{table*}[!ht]
\centering
\setlength{\tabcolsep}{3pt}
\setlength{\aboverulesep}{0.3pt}
\setlength{\belowrulesep}{0.3pt}
\renewcommand{\arraystretch}{0.95}
\caption{Main comparison on MSCOCO with CLIP-B/32 across three unlearning scenarios. Convention follows Table~\ref{tab:q1_flickr30k_clipb32}.}
\label{tab:q1_mscoco_clipb32}
\resizebox{\linewidth}{!}{
%
}
\end{table*}

\begin{table*}[!ht]
\centering
\setlength{\tabcolsep}{3pt}
\setlength{\aboverulesep}{0.3pt}
\setlength{\belowrulesep}{0.3pt}
\renewcommand{\arraystretch}{0.95}
\caption{Main comparison on MSCOCO with CLIP-L/14 across three unlearning scenarios. Convention follows Table~\ref{tab:q1_flickr30k_clipb32}.}
\label{tab:q1_mscoco_clipl14}
\resizebox{\linewidth}{!}{
%
}
\end{table*}

\begin{table*}[!ht]
\centering
\setlength{\tabcolsep}{3pt}
\setlength{\aboverulesep}{0.3pt}
\setlength{\belowrulesep}{0.3pt}
\renewcommand{\arraystretch}{0.95}
\caption{Main comparison on MSCOCO with SigLIP across three unlearning scenarios. Convention follows Table~\ref{tab:q1_flickr30k_clipb32}.}
\label{tab:q1_mscoco_siglip}
\resizebox{\linewidth}{!}{
%
}
\end{table*}

\begin{table*}[!ht]
\centering
\setlength{\tabcolsep}{3pt}
\setlength{\aboverulesep}{0.3pt}
\setlength{\belowrulesep}{0.3pt}
\renewcommand{\arraystretch}{0.95}
\caption{Main comparison on TextCaps with CLIP-B/32 across three unlearning scenarios. Convention follows Table~\ref{tab:q1_flickr30k_clipb32}.}
\label{tab:q1_textcaps_clipb32}
\resizebox{\linewidth}{!}{
%
}
\end{table*}

\begin{table*}[!ht]
\centering
\setlength{\tabcolsep}{3pt}
\setlength{\aboverulesep}{0.3pt}
\setlength{\belowrulesep}{0.3pt}
\renewcommand{\arraystretch}{0.95}
\caption{Main comparison on TextCaps with CLIP-L/14 across three unlearning scenarios. Convention follows Table~\ref{tab:q1_flickr30k_clipb32}.}
\label{tab:q1_textcaps_clipl14}
\resizebox{\linewidth}{!}{
%
}
\end{table*}

\begin{table*}[!ht]
\centering
\setlength{\tabcolsep}{3pt}
\setlength{\aboverulesep}{0.3pt}
\setlength{\belowrulesep}{0.3pt}
\renewcommand{\arraystretch}{0.95}
\caption{Main comparison on TextCaps with SigLIP across three unlearning scenarios. Convention follows Table~\ref{tab:q1_flickr30k_clipb32}.}
\label{tab:q1_textcaps_siglip}
\resizebox{\linewidth}{!}{
%
}
\end{table*}

\clearpage


\newpage

\end{document}